\documentclass[11pt]{article}
\synctex=1
\pdfoutput=1
\usepackage{latexsym}
\usepackage{amsmath}
\usepackage{amssymb}
\usepackage{amsfonts,amsthm}
\usepackage{multicol}
\usepackage{tikz}
\usepackage{url}
\usepackage[top=1in, bottom=1in, left=1in, right=1in]{geometry}

\usepackage{thmtools, thm-restate}

\usepackage{tikz}
\usetikzlibrary{arrows,decorations.pathmorphing,decorations.shapes,backgrounds,fit}
\pgfkeys{tikz/cc1/.style={dotted} }
\pgfkeys{tikz/cc4/.style={dashed} }
\pgfkeys{tikz/cc3/.style={decorate,decoration=bumps} }
\pgfkeys{tikz/cc2/.style={decorate,decoration=snake} }
\pgfkeys{tikz/solid/.style={line width=2pt} }
\pgfkeys{tikz/solid2/.style={line width=1pt} }
\pgfkeys{tikz/nodostile1/.style={draw,line width=1pt}}
\pgfkeys{tikz/nodostile2/.style={draw,line width=2pt}}
\pgfkeys{tikz/arcostile1/.style={very thick} }
\pgfkeys{tikz/arcostile2/.style={red,very thick} }
\pgfkeys{tikz/arcostile3/.style={thick} }
\pgfkeys{tikz/arcostile4/.style={line width=2pt} }

\newcommand{\ignore}[1]{}

\newtheorem{theorem}{Theorem}[section]
\newtheorem{lemma}[theorem]{Lemma}

\usepackage{array}
\newcolumntype{C}[1]{>{\centering\let\newline\\\arraybackslash\hspace{0pt}}m{#1}}

\usepackage[lined, algoruled, linesnumbered]{algorithm2e}

\usepackage{hyperref,xcolor}
\definecolor{winered}{rgb}{0.5,0,0}

\hypersetup
{
    pdfborder={0 0 0},
    colorlinks=true,
    linkcolor={winered},
    urlcolor={winered},
    filecolor={winered},
    citecolor={winered},
    linktoc=all,
}

\begin{document}

\title{On Low-High Orders of Directed Graphs: An Incremental Algorithm and Applications}
\author{Loukas Georgiadis\footnotemark[1] \and Aikaterini Karanasiou\footnotemark[1] \and Giannis Konstantinos\footnotemark[1] \and Luigi Laura\footnotemark[2]}
\maketitle

\begin{abstract}
A flow graph $G=(V,E,s)$ is a directed graph with a distinguished start vertex $s$. The dominator tree $D$ of $G$ is a tree rooted at $s$, such that a vertex $v$ is an ancestor of a vertex $w$ if and only if all paths from $s$ to $w$ include $v$. The dominator tree is a central tool in program optimization and code generation, and has many applications in other diverse areas including constraint programming, circuit testing, biology, and in algorithms for graph connectivity problems. A low-high order of $G$ is a preorder $\delta$ of $D$ that certifies the correctness of $D$,
and has further applications in connectivity and path-determination problems. In this paper we first consider how to maintain efficiently a low-high order of a flow graph incrementally under edge insertions. We present algorithms that run in $O(mn)$ total time for a sequence of $m$ edge insertions in an initially empty flow graph with $n$ vertices.
These immediately provide the first incremental \emph{certifying algorithms} for maintaining the dominator tree in $O(mn)$ total time, and also imply incremental algorithms for other problems. Hence, we provide a substantial improvement over the $O(m^2)$ simple-minded algorithms, which recompute the solution from scratch after each edge insertion.
We also show how to apply low-high orders to obtain a linear-time $2$-approximation algorithm for the smallest $2$-vertex-connected spanning subgraph problem (2VCSS).
Finally, we present efficient implementations of our new algorithms for the incremental low-high and 2VCSS problems, and conduct an extensive experimental study on real-world graphs taken from a variety of application areas. The experimental results show that our algorithms perform very well in practice.
\end{abstract}

\footnotetext[1]{Department of Computer Science \& Engineering, University of Ioannina, Greece. E-mail: \texttt{loukas@cs.uoi.gr, akaranas@cs.uoi.gr, giannis\_konstantinos@outlook.com}.}
\footnotetext[2]{Dipartimento di Ingegneria Informatica, Automatica e Gestionale, ``Sapienza'' Universit\`a di Roma, Italy. E-mail: \texttt{laura@dis.uniroma1.it}.}
\thispagestyle{empty}

\section{Introduction}

A \emph{flow graph} $G=(V,E,s)$ is a directed graph (digraph) with a distinguished start vertex $s \in V$.
A vertex $v$ is \emph{reachable} in $G$ if there is a path from $s$ to $v$; $v$ is \emph{unreachable} if no such path exists. The \emph{dominator relation} in $G$ is defined for the set of reachable vertices as follows.
A vertex $v$ is a \emph{dominator} of a vertex $w$ ($v$ \emph{dominates} $w$) if every path from $s$ to $w$ contains $v$; $v$ is a \emph{proper dominator} of $w$ if $v$ dominates $w$ and $v \not= w$.
The dominator relation in $G$ can be represented by a tree rooted at $s$, the \emph{dominator tree} $D$, such that $v$ dominates $w$ if and only if $v$ is an ancestor of $w$ in $D$.
If $w \not= s$ is reachable, we denote by $d(w)$ the parent of $w$ in $D$.
Lengauer and Tarjan~\cite{domin:lt} presented an algorithm for computing dominators in  $O(m \alpha(m,n))$ time for a flow graph with $n$ vertices and $m$ edges, where $\alpha$ is a functional inverse of Ackermann's function~\cite{dsu:tarjan}.
Subsequently, several linear-time algorithms
were discovered~\cite{domin:ahlt,dominators:bgkrtw,dominators:Fraczak2013,Gabow:Poset:TALG}.
The dominator tree is a central tool in program optimization and code generation~\cite{cytron:91:toplas}, and it has applications in other diverse areas including constraint programming~\cite{QVDR:PADL:2006},
circuit testing~\cite{amyeen:01:vlsitest}, theoretical biology~\cite{foodwebs:ab04}, memory profiling~\cite{memory-leaks:mgr2010}, the analysis of diffusion networks~\cite{Rodrigues:icml12}, and in connectivity problems~\cite{2vc,2VCSS:Geo,2ECB,2VCB,GIN16:ICALP,2CC:HenzingerKL15,Italiano2012,2vcb:jaberi15,2VCC:Jaberi2016}.

A \emph{low-high order $\delta$ of $G$}~\cite{DomCert:TALG} is a preorder of the dominator tree $D$ such for all reachable vertices $v \not= s$, $(d(v), v) \in E$ or there are two edges $(u, v) \in E$, $(w,v) \in E$ such that $u$ and $w$ are reachable, $u$ is less than $v$
($u <_{\delta} v$), $v$ is less than $w$ ($v <_{\delta} w$), and $w$ is not a descendant of $v$ in $D$. See Figure \ref{figure:lowhigh}.
Every flow graph $G$ has a low-high order, computable in linear-time~\cite{DomCert:TALG}.
Low-high orders provide a correctness certificate for dominator trees that is straightforward to verify~\cite{domver:ZZ}.
By augmenting an algorithm that computes the dominator tree $D$ of a flow graph $G$ so that it also computes a low-high order of $G$,
one obtains a \emph{certifying algorithm} to compute $D$. (A \emph{certifying algorithm}~\cite{certifying} outputs both the solution and a correctness
certificate, with the property that it is straightforward to use the certificate to verify that the computed solution is correct.)
Low-high orders also have applications in path-determination problems~\cite{Tholey2012} and  in
fault-tolerant network design~\cite{FaultTolerantReachability,FaultTolerantReachability:STOC16,DomCert:TALG:Add}.

\begin{figure}[t!]
\begin{center}
\centerline{\includegraphics[trim={0 0 0 7cm}, clip=true, width=\textwidth]{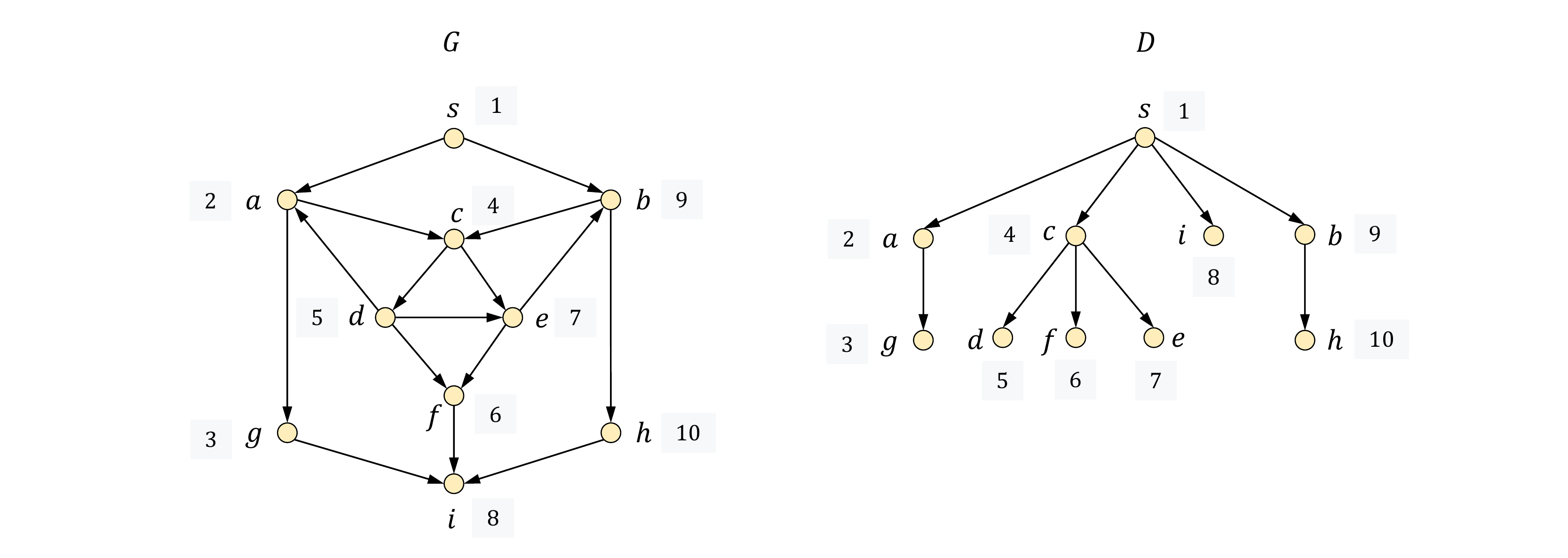}}
\centerline{\includegraphics[trim={0 0 0 7cm}, clip=true, width=\textwidth]{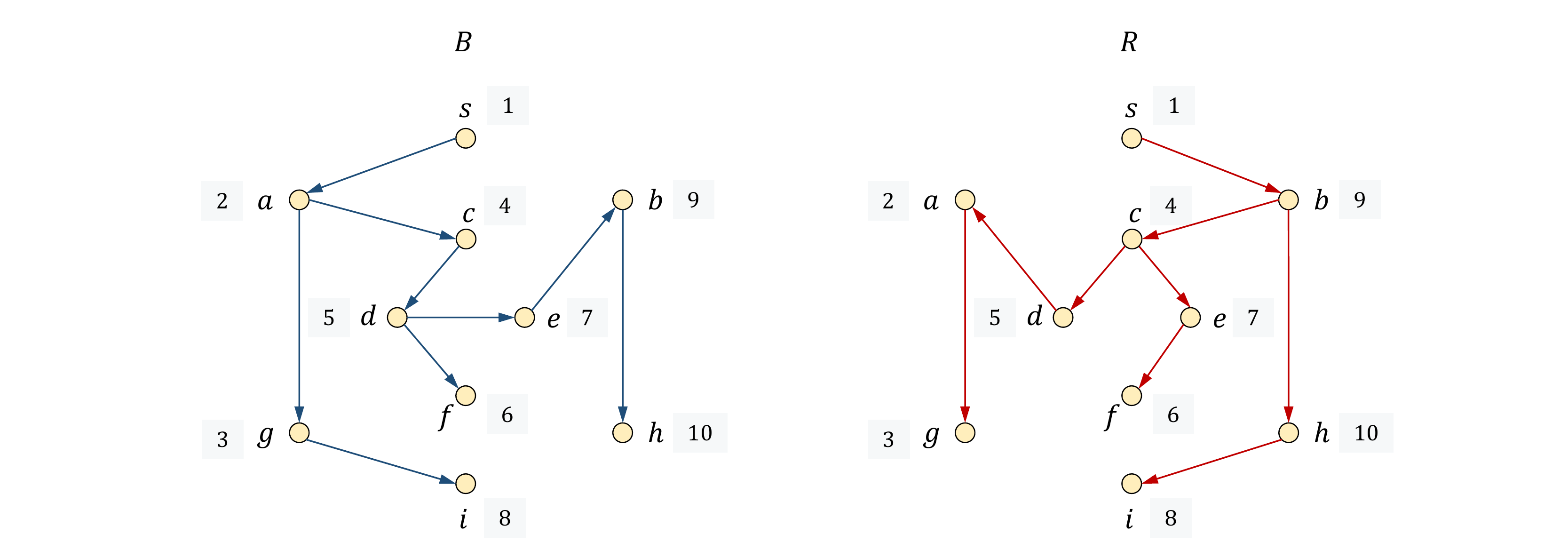}}
\caption{A flow graph $G$, its dominator tree $D$, and two strongly divergent spanning trees $B$ and $R$.
The numbers correspond to a preorder numbering of $D$ that is a low-high order of $G$.}
\label{figure:lowhigh}
\end{center}
\end{figure}

A notion closely related to low-high orders is that of divergent spanning trees~\cite{DomCert:TALG}.
Let $V_r$ be the set of reachable vertices, and let $G[V_r]$ be the flow graph with start vertex $s$ that is induced by $V_r$.
Two spanning trees $B$ and $R$ of $G[V_r]$, rooted at $s$, are \emph{divergent} if for all $v$, the paths from $s$ to $v$ in $B$ and $R$ share only the dominators of $v$; $B$ and $R$ are \emph{strongly divergent} if for every pair of vertices $v$ and $w$, either the path in $B$ from $s$ to $v$ and the path in $R$ from $s$ to $w$ share only the common dominators of $v$ and $w$, or the path in $R$ from $s$ to $v$ and the path in $B$ from $s$ to $w$ share only the common dominators of $v$ and $w$. 
In order to simplify our notation, we will refer to $B$ and $R$, with some abuse of terminology, as strongly divergent spanning trees of $G$.
Every flow graph has a pair of strongly divergent spanning trees.
Given a low-high order of $G$, it is straightforward to compute two strongly divergent spanning trees of $G$ in $O(m)$ time~\cite{DomCert:TALG}.
Divergent spanning trees can be used in data structures that compute pairs of vertex-disjoint $s$-$t$ paths in $2$-vertex connected digraphs (for any two query vertices $s$ and $t$)~\cite{2vc}, in fast algorithms for approximating the smallest $2$-vertex-connected spanning subgraph of a digraph~\cite{2VCSS:Geo}, and in constructing sparse subgraphs of a given digraph that maintain certain connectivity requirements \cite{2ECB,2vcb:jaberi15,2VCC:Jaberi2016}.

\begin{figure}[h!]
\begin{center}
\centerline{\includegraphics[trim={0 0 0 7cm}, clip=true, width=\textwidth]{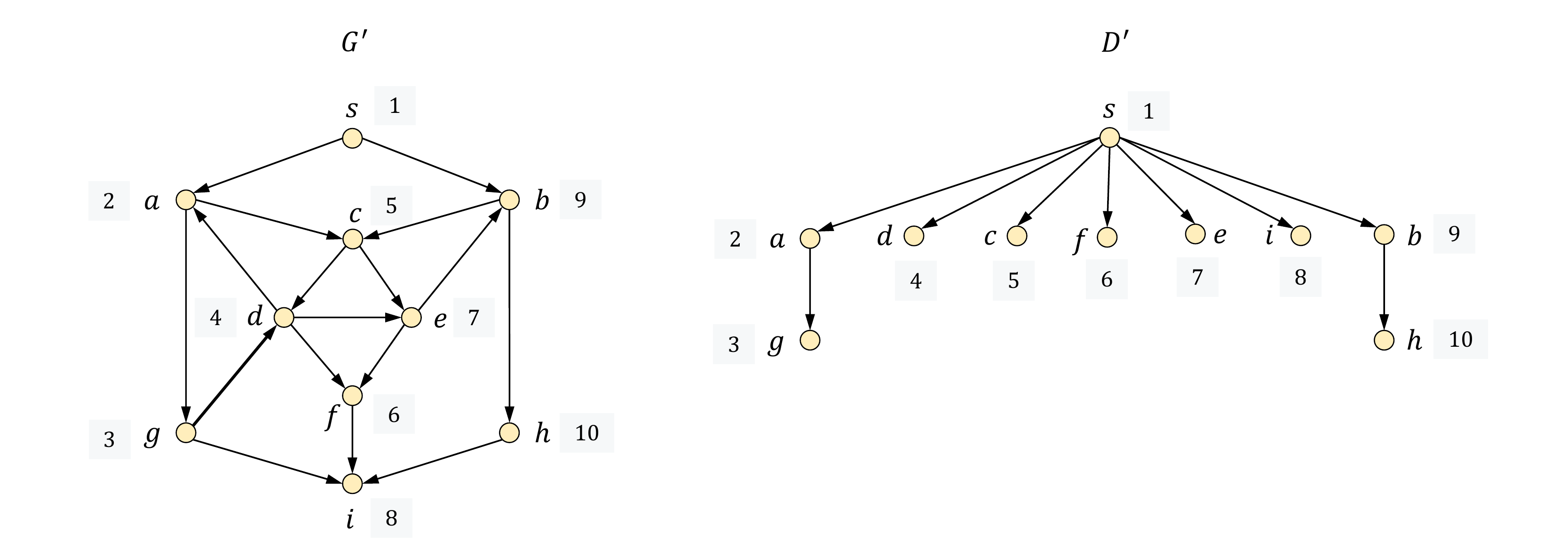}}
\centerline{\includegraphics[trim={0 0 0 7cm}, clip=true, width=\textwidth]{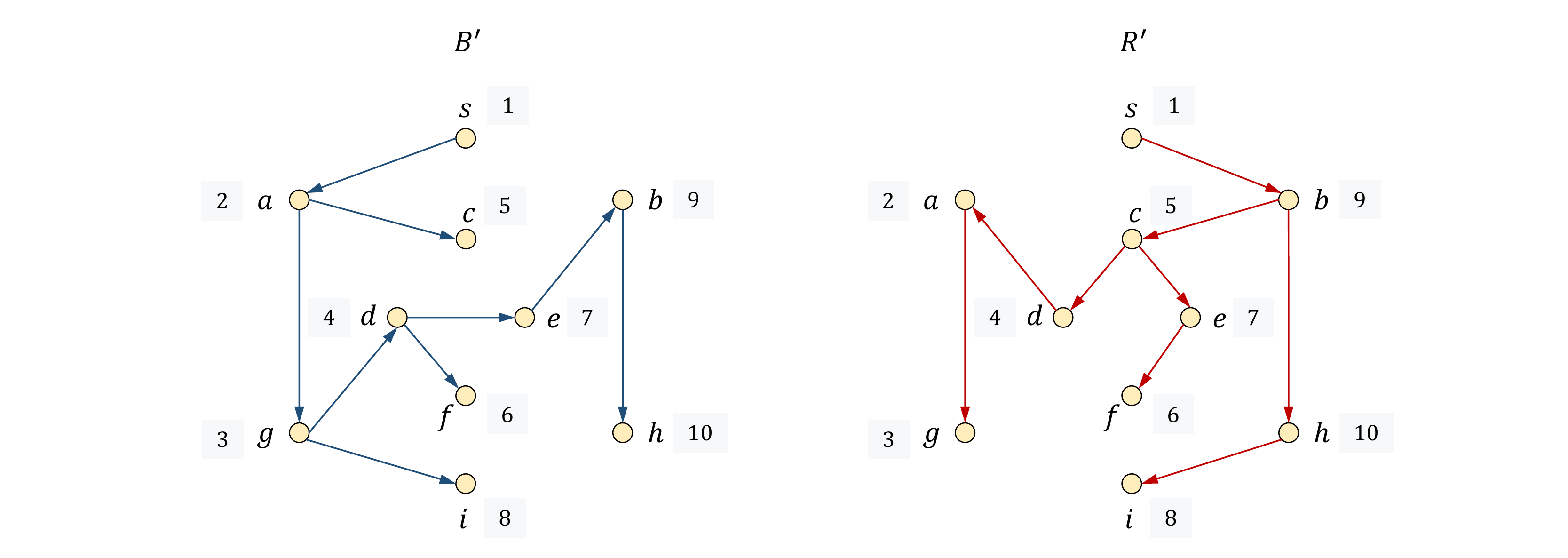}}
\caption{The flow graph of Figure \ref{figure:lowhigh} after the insertion of edge $(g,d)$, and its updated dominator tree $D'$ with a low-high order, and two strongly divergent spanning trees $B'$ and $R'$.}
\label{figure:lowhigh2}
\end{center}
\end{figure}

In this paper we consider how to update a low-high order of a flow graph through a sequence of
edge insertions. See Figure \ref{figure:lowhigh2}.
The difficulty in updating the dominator tree and a low-high order is due to the following facts.
An affected vertex can be arbitrarily far from the inserted edge, and a single edge insertion may cause $O(n)$ parent changes in $D$.
Furthermore, since a low-high order is a preorder of $D$, a single edge insertion may cause $O(n)$ changes in this order,
even if there is only one vertex that is assigned a new parent in $D$ after the insertion.
More generally, we note that
the hardness of dynamic algorithms on digraphs has been recently supported also by conditional lower bounds~\cite{AW14}.
Our first contribution is to show that we can maintain a low-high order of a flow graph $G$ with $n$ vertices through a sequence of
$m$ edge insertions in $O(mn)$ total time. Hence, we obtain a substantial improvement over the naive solution of recomputing
a low-high order from scratch after each edge insertion, which takes $O(m^2)$ total time.
Our result also implies the first incremental \emph{certifying algorithms}~\cite{certifying} for computing dominators in $O(mn)$ total time, which
answers an open question in \cite{DomCert:TALG}.
We present two algorithms that achieve this bound, a simple and a more sophisticated.
Both algorithms combine the incremental dominators algorithm of \cite{dyndom:2012} with the linear-time computation of two divergent spanning trees
from \cite{DomCert:TALG}. Our more sophisticated algorithm also applies a slightly modified version of a static low-high algorithm from
\cite{DomCert:TALG} on an auxiliary graph.
Although both algorithms have the same worst-case running time, our experimental results show that the sophisticated algorithm is by far superior in practical scenarios.
%
%
\ignore{
\begin{figure}[t!]
\begin{center}

\begin{tikzpicture}
    \node[anchor=south west,inner sep=0] (image) at (0,0) {\includegraphics[width=\textwidth]{figure2crop.pdf}};
    \begin{scope}[x={(image.south east)},y={(image.north west)}]

\node at (0.05,0.90) {$G$};
\node at (0.32,0.90) {$D$};
\node at (0.57,0.90) {$B$};
\node at (0.83,0.90) {$R$};
\end{scope}
\end{tikzpicture}
\caption{A flow graph $G$, its dominator tree $D$, and two strongly divergent spanning trees $B$ and $R$. The numbers correspond to a preorder numbering of $D$ that is a low-high order of $G$.}
\label{figure:lowhigh}
\end{center}
\vspace{-.5cm}
\end{figure}
}

We note that the incremental dominators problem arises in various applications, such as incremental data flow analysis and compilation~\cite{semidynamic-digraphs:CFNP,Gargi:2002,RR:incdom,SGL}, distributed authorization~\cite{Mowbray:2008},
and in incremental algorithms for maintaining $2$-connectivity relations in directed graphs~\cite{GIN16:ICALP}.
In Section \ref{sec:applications} we show how our result on incremental low-high order maintenance implies the following incremental algorithms that also run in $O(mn)$ total time for a sequence
of $m$ edge insertions.
\begin{itemize}
\item First, we give an algorithm that maintains,
after each edge insertion, two strongly divergent spanning trees of $G$, and answers the following queries
in constant time: (i) For any two query vertices $v$ and $w$, find a path $\pi_{sv}$ from $s$ to $v$ and a path $\pi_{sw}$ from
$s$ to $w$, such that $\pi_{sv}$ and $\pi_{sw}$ share only the common dominators of $v$ and $w$. We can output
these paths in $O(|\pi_{sv}| + |\pi_{sw}|)$ time. (ii) For any two query vertices $v$ and $w$, find a path $\pi_{sv}$ from $s$ to $v$ that avoids
$w$, if such a path exists. We can output this path in $O(|\pi_{sv}|)$ time.

\item Then we provide an algorithm for an incremental version of the fault-tolerant reachability problem~\cite{FaultTolerantReachability,FaultTolerantReachability:STOC16}.
We maintain a flow graph $G=(V,E,s)$ with $n$ vertices through a sequence
of $m$ edge insertions, so that we can answer the following query in $O(n)$ time.
Given a spanning forest $F =(V, E_F)$ of $G$ rooted at $s$, find a set of edges $E' \subseteq E \setminus E_F$ of
minimum cardinality, such that the subgraph $G' = (V, E_F \cup  E', s)$ of $G$ has the same dominators as $G$.

\item Finally, given a digraph $G$, we consider how to maintain incrementally a spanning subgraph of $G$ with $O(n)$ edges that
preserves the $2$-edge-connectivity relations in $G$.
\end{itemize}

We also revisit the problem of computing a smallest $2$-vertex-connected spanning subgraph (2VCSS)  of a directed graph~\cite{CT00,2VCSS:Geo}.
We present a linear-time algorithm that computes a $2$-approximation of the smallest $2$-vertex-connected spanning subgraph
(\textsf{2VCSS}) of a $2$-vertex-connected digraph. This improves significantly the best previous approximation ratio
achievable in linear time for this problem, which was $3$~\cite{2VCSS:Geo}. Our new algorithm is also shown to achieve better performance in practice.


\section{Preliminaries}
\label{sec:definitions}

Let $G=(V,E,s)$ be a flow graph with start vertex $s$, and let $D$ be the dominator tree of $G$.
A spanning tree $T$ of $G$ is a tree with root $s$ that contains a path from $s$ to $v$ for all reachable vertices $v$.
We refer to a spanning subgraph $F$ of $T$ as a spanning forest of $G$.
Given a rooted tree $T$, we denote by $T(v)$ the subtree of $T$ rooted at $v$ (we also view $T(v)$ as the set of descendants of $v$).
Let $T$ be a tree rooted at $s$ with vertex set $V_T \subseteq V$, and let $t(v)$ denote the parent of a vertex $v \in V_T$ in $T$.
If $v$ is an ancestor of $w$, $T[v, w]$ is the path in $T$ from $v$ to $w$.
In particular, $D[s,v]$ consists of the vertices that dominate $v$.
If $v$ is a proper ancestor of $w$, $T(v, w]$ is the path to $w$ from the child of $v$ that is an ancestor of $w$. Tree $T$ is \emph{flat} if its root is the parent of every other vertex.
Suppose now that the vertex set $V_T$ of $T$ consists of the vertices reachable from $s$.
Tree $T$ has the \emph{parent property} if for all $(v, w) \in E$ with $v$ and $w$ reachable, $v$ is a descendant of $t(w)$ in $T$.
If $T$ has the parent property and has a low-high order, then $T=D$~\cite{DomCert:TALG}.
For any vertex $v \in V$, we denote by $C(v)$ the set of children of $v$ in $D$.
A \emph{preorder} of $T$ is a total order of the vertices of $T$ such that, for every vertex $v$, the descendants of $v$ are ordered consecutively, with $v$ first.
Let $\zeta$ be a preorder of $D$.
Consider a vertex $v \not= s$. We say that \emph{$\zeta$ is a low-high order for $v$ in $G$}, if
$(d(v), v) \in E$ or there are two edges $(u, v) \in E$, $(w,v) \in E$ such that $u <_{\zeta} v$ and $v <_{\zeta} w$,
and $w$ is not a descendant of $v$ in $D$.
Given a graph $G=(V,E)$ and a set of edges $S \subseteq V \times V$, we denote by $G \cup S$ the graph obtained by inserting into $G$ the edges of $S$.

\section{Incremental low-high order}
\label{sec:incremental}

In this section we describe two algorithms to maintain a low-high order of a digraph through a sequence of edge insertions.
We first review some useful facts for updating a dominator tree after an edge insertion \cite{dynamicdominator:AL,dyndom:2012,RR:incdom}.
Let $(x,y)$ be the edge to be inserted. We consider the effect of this insertion when both $x$ and $y$ are reachable.
Let $G'$
be the flow graph that results from $G$ after inserting $(x,y)$.
Similarly, if $D$ is the dominator tree of $G$ before the insertion, we let $D'$ be the the dominator tree of $G'$.
Also, for any function $f$ on $V$, we let $f'$ be the function after the update.
We say that vertex $v$ is \emph{affected} by the update if $d(v)$ (its parent in $D$) changes, i.e.,
$d'(v) \not= d(v)$. We let $A$ denote the set of affected vertices.
Note that we can have 
$D'[s,v] \not= D[s,v]$ even if $v$ is not affected.
We let
$\mathit{nca}(x,y)$ denote the nearest common ancestor of $x$ and $y$ in the dominator tree $D$.
We also denote by $\mathit{depth}(v)$ the depth of a reachable vertex $v$ in $D$.
There are affected vertices after the insertion of $(x,y)$ if and only if $\mathit{nca}(x,y)$ is not a descendant of $d(y)$~\cite{irdom:rr94}.
A characterization of the affected vertices is provided by the following lemma, which is a refinement of a result in~\cite{dynamicdominator:AL}.

\begin{lemma}
\label{lemma:insert-affected} \emph{(\cite{dyndom:2012})}
Suppose $x$ and $y$ are reachable vertices in $G$.
A vertex $v$ is affected after the insertion of edge $(x,y)$ if and only if
$ \mathit{depth}(\mathit{nca}(x,y)) < \mathit{depth}(d(v))$
and there is a path $\pi$ in $G$
from $y$ to $v$ such that $\mathit{depth}(d(v)) < \mathit{depth}(w)$ for all $w \in \pi$.
If $v$ is affected, then it becomes a child of $\mathit{nca}(x,y)$ in $D'$, i.e., $d'(v)=\mathit{nca}(x,y)$.
\end{lemma}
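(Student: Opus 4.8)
The plan is to prove both directions of the characterization by reasoning about how paths from $s$ to a vertex $v$ change after inserting $(x,y)$. Throughout, write $z = \mathit{nca}(x,y)$. The key structural observation is that any new path from $s$ to some vertex $v$ in $G'$ that was not available in $G$ must use the new edge $(x,y)$, and hence decomposes as a path from $s$ to $x$ in $G$, the edge $(x,y)$, and a path from $y$ to $v$ in $G$. Since $z$ dominates $x$ in $G$ (it is an ancestor of $x$ in $D$), every such path from $s$ to $v$ in $G'$ passes through $z$. Therefore no vertex that is a proper ancestor of $z$ in $D$ can lose its dominance of any vertex, and for the dominator tree to change at $v$ we must have $d'(v)$ equal to some ancestor of $z$ (in fact, I will argue it must be exactly $z$, once $v$ is affected).

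For the forward direction, suppose $v$ is affected, so $d'(v) \ne d(v)$. First I would show $d'(v)$ is an ancestor of $z$ in $D$: if not, there is a vertex $u$ on $D(z,v]$ that still dominates $v$ in $G'$; but then $d(v)$ (or the relevant ancestor) argument shows $d'(v)$ cannot lie strictly below $z$ without contradicting that $v$'s dominators shrank. Combining this with the fact that every old $s$-to-$v$ path still exists, so $d'(v)$ must still be a common ancestor, pins down that $z$ must be a (proper) ancestor of $d(v)$, giving $\mathit{depth}(z) < \mathit{depth}(d(v))$, and that $d'(v) = z$. Next, the existence of the path $\pi$: since $v$ lost the dominator $d(v)$, there is an $s$-to-$v$ path in $G'$ avoiding $d(v)$; it must use $(x,y)$, so it contains a subpath $\pi$ from $y$ to $v$ in $G$ avoiding $d(v)$. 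I then need every $w \in \pi$ to satisfy $\mathit{depth}(d(v)) < \mathit{depth}(w)$. This follows because each such $w$ lies on a path from $y$ to $v$; if some $w \in \pi$ had depth $\le \mathit{depth}(d(v))$, then since $d(v)$ does not appear on $\pi$, $w$ would be a vertex not dominated by $d(v)$ but from which $v$ is reachable in $G$ already — contradicting that $d(v)$ dominated $v$ in $G$ unless $w$ is a (proper) ancestor of $d(v)$, and one checks the ancestor case lets us shortcut $\pi$ to still avoid $d(v)$ while reaching $v$ in $G$ alone, the desired contradiction. This is the part requiring the most care.

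For the reverse direction, suppose $\mathit{depth}(z) < \mathit{depth}(d(v))$ and there is a path $\pi$ from $y$ to $v$ in $G$ with $\mathit{depth}(d(v)) < \mathit{depth}(w)$ for all $w \in \pi$. I construct an $s$-to-$v$ path in $G'$ avoiding $d(v)$: take a shortest path from $s$ to $x$ in $G$ — it stays within $D[s,x]$'s strict descendants only up to $x$, but more importantly, since $z$ is an ancestor of $x$ with $\mathit{depth}(z) < \mathit{depth}(d(v))$ and $d(v)$ lies on $D[z \text{ or higher}, v]$, one shows $d(v)$ is not an ancestor of $x$, hence there is an $s$-to-$x$ path avoiding $d(v)$ (namely, go through $z$ and then down to $x$ along a branch disjoint from $d(v)$, using that $d(v) \notin D[s,x]$). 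Append $(x,y)$ and then $\pi$; by hypothesis $\pi$ avoids $d(v)$. This path witnesses that $d(v)$ no longer dominates $v$, so $v$ is affected. Finally, to confirm $d'(v) = z$: every $s$-to-$v$ path in $G'$ passes through $z$ (shown above), so $z$ dominates $v$ in $G'$; and one shows no vertex strictly below $z$ dominates $v$ in $G'$ by exhibiting, for each candidate, a path avoiding it — combining an old path (which avoids everything below the old $d(v)$-branch appropriately) with the new path through $(x,y)$. I expect the bookkeeping about which ancestors survive, and the depth argument ruling out low-depth vertices on $\pi$, to be the main obstacles; everything else is assembling path concatenations and invoking that $z = \mathit{nca}(x,y)$ dominates $x$.
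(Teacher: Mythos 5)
This lemma is quoted from \cite{dyndom:2012}; the paper itself states it without proof, so there is no internal proof to compare against. Judged on its own terms, your plan follows the standard route and is essentially sound: reduce ``$v$ is affected'' to ``$d(v)$ stops dominating $v$'' (legitimate because an edge insertion can only shrink dominator sets, so if $d(v)$ still dominates $v$ it remains the deepest proper dominator), decompose a witnessing $s$--$v$ path in $G'$ at the occurrence of $(x,y)$, and read the two conditions off the prefix and suffix.

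One step, however, is justified incorrectly as written. In the reverse direction you derive $d(v)\notin D[s,x]$ ``since $z$ is an ancestor of $x$ with $\mathit{depth}(z)<\mathit{depth}(d(v))$''; a depth comparison alone does not prevent $d(v)$ from being an ancestor of $x$ lying below $z$. The missing link runs through $y$: because $d(v)$ dominates $v$ in $G$ and $\pi$ reaches $v$ from $y$ while avoiding $d(v)$, the vertex $d(v)$ must dominate $y$ (otherwise an $s$--$y$ path avoiding $d(v)$ followed by $\pi$ would avoid $d(v)$ already in $G$). Only then does $z=\mathit{nca}(x,y)$ together with $\mathit{depth}(z)<\mathit{depth}(d(v))$ force $d(v)\notin D[s,x]$. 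This same observation --- $d(v)$ dominates $y$, hence $z$ is a proper ancestor of $d(v)$ and of $v$ --- is also what makes your forward-direction depth claim and your closing argument that old $s$--$v$ paths contain $z$ (needed for $d'(v)=z$) go through, so it deserves to be an explicit standalone claim rather than a parenthetical. A minor simplification elsewhere: in the forward direction, any $w\neq d(v)$ on $\pi$ with $\mathit{depth}(w)\le\mathit{depth}(d(v))$ is automatically not a descendant of $d(v)$ in $D$, hence not dominated by $d(v)$, so your separate ``proper ancestor of $d(v)$'' case and the shortcut argument are unnecessary.
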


The algorithm (\textsf{DBS}) in \cite{dyndom:2012} applies Lemma \ref{lemma:insert-affected} to identify
affected vertices by starting a search from $y$ (if $y$ is not affected, then no other vertex is).
To do this search for affected vertices,
it suffices to maintain
the outgoing and incoming edges of each vertex.
These sets are organized as singly
linked lists, so that a new edge can be inserted in $O(1)$ time.
The dominator tree $D$
is represented by the parent function $d$. We also maintain the
depth in $D$ of each reachable vertex.
We say that a vertex $v$
is \emph{scanned}, if the edges leaving $v$ are examined during the search for affected vertices, and that it is \emph{visited} if there is a scanned vertex $u$ such
that $(u,v)$ is an edge in $G$.
By Lemma \ref{lemma:insert-affected}, a visited vertex $v$ is scanned if $\mathit{depth}(\mathit{nca}(x,y)) < \mathit{depth}(d(v))$.

\begin{lemma}\emph{(\cite{dyndom:2012})}
\label{lemma:scanned-vertices}
Let $v$ be a scanned vertex. Then $v$ is a descendant of an affected vertex in $D$.
\end{lemma}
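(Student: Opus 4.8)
The plan is to argue by induction on the order in which vertices are scanned during the search that starts from $y$, using the characterization of affected vertices in Lemma~\ref{lemma:insert-affected}. First I would set $z = \mathit{nca}(x,y)$, so that every affected vertex becomes a child of $z$ in $D'$, and recall that a vertex $v$ is affected precisely when $\mathit{depth}(z) < \mathit{depth}(d(v))$ and there is a path $\pi$ from $y$ to $v$ all of whose vertices $w$ satisfy $\mathit{depth}(d(v)) < \mathit{depth}(w)$. The key structural observation is that the search scans a visited vertex $v$ only when $\mathit{depth}(z) < \mathit{depth}(d(v))$, i.e. $v$ lies strictly below $z$ in $D$ and in fact $d(v)$ itself lies strictly below $z$; so every scanned vertex is a proper descendant of $z$.

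For the base case, $y$ itself: if any vertex is scanned then $y$ is scanned, and since $\mathit{depth}(z) < \mathit{depth}(d(y))$ holds in that case, $y$ satisfies the depth condition; the trivial path $\pi = (y)$ witnesses the path condition with $d(y)$ in the role of $d(v)$ (using $\mathit{depth}(d(y)) < \mathit{depth}(y)$). Hence $y$ is affected and is (trivially) a descendant of an affected vertex. For the inductive step, suppose $v$ is scanned because it was visited from an already-scanned vertex $u$ with $(u,v)\in E$, and $\mathit{depth}(z) < \mathit{depth}(d(v))$. By the induction hypothesis $u$ is a descendant of some affected vertex $a$; I want to produce an affected vertex that is an ancestor of $v$. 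The natural candidate is the vertex $a$ itself or the nearest ancestor of $v$ that is affected — I would take $v'$ to be the ancestor of $v$ at depth $\mathit{depth}(d(v)) $ if that vertex is a child of $d(v)$, more precisely the child of $d(v)$ on the tree path to $v$, and show $v'$ is affected. To do this, concatenate the path from $y$ to $u$ (which exists and stays at depth $> \mathit{depth}(d(a))$, hence can be related to depths below $z$) with the edge $(u,v)$; the resulting walk from $y$ to $v$ must be massaged into a path $\pi$ witnessing affectedness of $v'$, i.e. with $\mathit{depth}(d(v')) = \mathit{depth}(d(v)) < \mathit{depth}(w)$ for all $w$ on $\pi$.

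The main obstacle is exactly this depth-bookkeeping along the concatenated walk: the inductive path from $y$ to $u$ is guaranteed only to keep depths above $\mathit{depth}(d(a))$ for the affected ancestor $a$ of $u$, which may be a weaker bound than $\mathit{depth}(d(v))$, and one has to rule out that the walk dips to or below depth $\mathit{depth}(d(v))$ at some intermediate vertex. I expect this is handled by choosing, among all affected ancestors and scanned predecessors, the one giving the right depth threshold, and by using the fact that any vertex $w$ on the search tree that is visited lies below $z$ together with the parent property of $D$: an edge $(u,v)$ with $u,v$ reachable forces $u$ to be a descendant of $d(v)$, so stepping along edges of the search cannot jump above $d(v)$ without passing through $d(v)$, which would contradict $v$ being scanned after $u$. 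Once the witnessing path is assembled, Lemma~\ref{lemma:insert-affected} immediately certifies that the chosen ancestor $v'$ of $v$ is affected, completing the induction. I would close by noting the degenerate case where $v$ is itself affected, which is immediate.
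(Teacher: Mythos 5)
The paper itself offers no proof of this lemma; it is imported verbatim from \cite{dyndom:2012}, so your attempt can only be judged on its own merits. Your skeleton --- induction on the order in which vertices are scanned, with $y$ as the base case certified affected via the trivial one-vertex path --- is the right one, and the base case is correct.

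The inductive step, however, has a genuine gap, and it is precisely the one you flag as ``the main obstacle'' before writing ``I expect this is handled by\ldots''. Expecting is not proving, and the candidate $v'$ you propose is not even well defined: ``the ancestor of $v$ at depth $\mathit{depth}(d(v))$'' is $d(v)$ itself, while ``the child of $d(v)$ on the tree path to $v$'' is just $v$. The missing idea is a dichotomy that makes the depth bookkeeping disappear. By the parent property, the edge $(u,v)$ forces $d(v)$ to be an ancestor of $u$ in $D$; by the induction hypothesis the affected vertex $a$ is also an ancestor of $u$; hence $d(v)$ and $a$ lie on one root-to-$u$ path and are comparable. If $a$ is an ancestor of $d(v)$, then $a$ is an ancestor of $v$ and you are done with no path construction at all. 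Otherwise $d(v)$ is a proper ancestor of $a$, which gives $\mathit{depth}(d(v)) \le \mathit{depth}(d(a))$; now concatenate the path $\pi_{ya}$ witnessing that $a$ is affected (all vertices at depth $> \mathit{depth}(d(a)) \ge \mathit{depth}(d(v))$), a simple path from $a$ to $u$ lying entirely in $D(a)$ (as in Lemma \ref{lemma:simple}; all its vertices have depth $\ge \mathit{depth}(a) > \mathit{depth}(d(a))$), and the edge $(u,v)$. Every vertex of the resulting walk has depth exceeding $\mathit{depth}(d(v))$, so Lemma \ref{lemma:insert-affected} shows that $v$ itself is affected. In either case $v$ is a descendant of an affected vertex. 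Without this dichotomy, your worry that the walk ``dips to or below depth $\mathit{depth}(d(v))$'' is never discharged, and the appeal to ``$v$ being scanned after $u$'' contradicts nothing. (A minor further slip: $\mathit{depth}(z) < \mathit{depth}(d(v))$ alone does not imply that $v$ is a descendant of $z$; that fact also needs the lemma being proved, so it cannot be assumed up front.)
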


\subsection{Simple Algorithm}
\label{sec:incremental-simple}

In this algorithm we maintain, after each insertion, a subgraph $H=(V, E_H)$ of $G$ with $O(n)$ edges that has the same dominator tree as $G$.
Then, we can compute a low-high order $\delta$ of $H$ in $O(|E_H|)=O(n)$ time. Note that $\delta$ is also a valid low-high order of $G$.
Subgraph $H$ is formed by the edges of two divergent spanning trees $B$ and $R$ of $G$.
After the insertion of an edge $(x,y)$, where both $x$ and $y$ are reachable,
we form a graph $H'$ by inserting into $H$ a set of edges $\mathit{Last}(A)$ found during the search for affected vertices.
Specifically, $\mathit{Last}(A)$ contains edge $(x,y)$ and, for each affected vertex $v \not= y$, the last edge on a path $\pi_{yv}$ that satisfies
Lemma \ref{lemma:insert-affected}.
Then, we set $H' = H \cup \mathit{Last}(A)$. Finally, we compute a low-high order and two divergent spanning trees of $H'$, which are also valid for $G'$.
Algorithm \textsf{SimpleInsertEdge} describes this process.

\begin{algorithm}
\LinesNumbered
\DontPrintSemicolon
\KwIn{Flow graph $G=(V,E,s)$, its dominator tree $D$, a low-high order $\delta$ of $G$, two divergent spanning trees $B$ and $R$ of $G$, and a new edge $e=(x,y)$.}
\KwOut{Flow graph $G' = (V, E \cup (x,y), s)$, its dominator tree $D'$, a low-high order $\delta'$ of $G'$, and two divergent spanning trees $B'$ and $R'$ of $G'$.}

Insert $e$ into $G$ to obtain $G'$.\;

\lIf{$x$ is unreachable in $G$}{\KwRet $(G', D,\delta, B, R)$}
\ElseIf{$y$ is unreachable in $G$}{
$(D', \delta', B' , R') \leftarrow \mathsf{ComputeLowHigh}(G')$\;
\KwRet $(G', D', \delta', B', R')$
}

Let $H = B \cup R$.\;

Compute the updated dominator tree $D'$ of $G'$ and return a list $A$ of the affected vertices, and a list $\mathit{Last}(A)$ of the last edge entering each $v \in A$ in a path of Lemma 3.2.\;

Compute the subgraph $H' = H  \cup \mathit{Last}(A)$ of $G'$.\;

Compute $(D', \delta', B' , R') \leftarrow \mathsf{ComputeLowHigh}(H')$\;

\KwRet $(G', D', \delta', B', R')$

\caption{\textsf{SimpleInsertEdge}$(G, D, \delta, B, R, e)$}
\end{algorithm}

Note that when only $x$ is reachable before the insertion, we re-initialize our algorithm by running the linear-time algorithm of
\cite[Section 6]{DomCert:TALG}, which returns both a low-high order and two divergent spanning trees.

\begin{lemma}
\label{lemma:SimpleCorrect}
Algorithm \textsf{SimpleInsertEdge} is correct.
\end{lemma}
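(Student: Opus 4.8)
The plan is to verify that Algorithm \textsf{SimpleInsertEdge} returns a correct quadruple $(G', D', \delta', B', R')$ in every branch. The first two branches are immediate: if $x$ is unreachable in $G$, then $y$ remains unreachable in $G'$ (any $s$-$y$ path through the new edge would have to reach $x$ first), so the reachable subgraph, its dominator tree, its low-high order, and its divergent spanning trees are all unchanged, and returning $(G', D, \delta, B, R)$ is correct. If $x$ is reachable but $y$ is not, we recompute everything from scratch with $\mathsf{ComputeLowHigh}(G')$, which is correct by \cite{DomCert:TALG}. So the real content is the main branch, where both $x$ and $y$ are reachable before the insertion.

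For the main branch, the key claim is that the auxiliary subgraph $H' = H \cup \mathit{Last}(A)$ has the same dominator tree as $G'$, namely $D'$. Once this is established, running $\mathsf{ComputeLowHigh}(H')$ yields the correct $D'$ together with a low-high order $\delta'$ and two divergent spanning trees $B', R'$ of $H'$; since a preorder of $D'$ that is a low-high order with respect to the edge set of $H'$ is automatically a low-high order with respect to the larger edge set $E \cup \{(x,y)\}$ (the low-high condition only requires the \emph{existence} of suitable edges, and $E_{H'} \subseteq E'$), and likewise divergent spanning trees of $H'$ remain divergent spanning trees of $G'$ because divergence is a property of the tree paths relative to the common dominator structure, which is shared, the returned quadruple is valid for $G'$. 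To prove $D_{H'} = D'$, I would argue in two directions. First, $H \subseteq G$ and $D_H = D$, so every $s$-$v$ path in $H$ passes through $D[s,v]$; adding the edges of $\mathit{Last}(A)$ and invoking Lemma~\ref{lemma:insert-affected} shows that in $H'$ the only vertices whose parent can change are the affected vertices $A$, and each such $v$ acquires $\mathit{nca}(x,y)$ as a candidate parent via the edge $(x,y)$ together with the last edge of $\pi_{yv}$ recorded in $\mathit{Last}(A)$. Second, and conversely, no \emph{extra} domination is lost: because $B$ and $R$ are divergent, for any vertex $v$ and any proper dominator $u$ of $v$ in $G'$ that is not $\mathit{nca}(x,y)$ or a descendant thereof, both the $B$-path and the $R$-path to $v$ still avoid creating a bypass of $u$, so $u$ still dominates $v$ in $H'$; and for the affected vertices the new parent in $H'$ matches $d'(v) = \mathit{nca}(x,y)$ from Lemma~\ref{lemma:insert-affected}. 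Combining, $D_{H'}$ has the same parent function as $D'$, hence $D_{H'} = D'$.

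The main obstacle is making precise the claim that the divergent spanning trees $B, R$ of the \emph{old} graph, together with the recorded edges $\mathit{Last}(A)$, capture \emph{enough} of $G'$ so that no spurious dominators survive in $H'$. Concretely, one must rule out the possibility that some vertex $v$ is dominated in $H'$ by a vertex $u$ that does not dominate it in $G'$ — equivalently, that $G'$ has an $s$-$v$ path avoiding $u$ while $H'$ does not. For $v \notin A$ and $u$ not on the affected part of the tree this follows from $D_H = D$ and the fact that the insertion does not change $D[s,v]$ for such $v$; for $v \in A$ it follows from Lemma~\ref{lemma:insert-affected}, which pins $d'(v) = \mathit{nca}(x,y)$ exactly, and from the observation that the path $\pi_{yv}$ witnessing the affectedness of $v$ has all internal vertices strictly below $d(v)$ in depth, so the truncated information in $\mathit{Last}(A)$ — the single last edge on $\pi_{yv}$ — suffices once the portion of $\pi_{yv}$ lying inside already-updated subtrees is supplied recursively by the edges recorded for the other affected vertices on that path. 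I would phrase this last point as an induction on $\mathit{depth}$ (in $D$) of the affected vertices, processing them from shallowest to deepest, which is exactly the order in which the search of \cite{dyndom:2012} discovers them; this closes the argument that $H'$ realizes every $s$-to-affected-vertex path needed to certify $D'$, and hence $D_{H'} = D'$.
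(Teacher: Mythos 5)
Your overall skeleton matches the paper's: reduce everything to showing that $H' = B \cup R \cup \mathit{Last}(A)$ has the same dominator tree as $G'$, observe that a low-high order and divergent spanning trees computed on $H'$ remain valid for the supergraph $G'$, and split the dominator-tree claim into the unaffected vertices (handled by $H \subseteq G$, $H' \subseteq G'$, and $D_H = D$) and the affected vertices (handled by exhibiting, for each $v \in A$, a witness path in $H'$ satisfying Lemma~\ref{lemma:insert-affected}). You also correctly identify the crux: one must show that the single recorded edge per affected vertex, together with $B \cup R$, realizes enough of $\pi_{yv}$ inside $H'$.

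However, at exactly that crux there is a genuine gap. You propose that the missing portions of $\pi_{yv}$ are ``supplied recursively by the edges recorded for the other affected vertices on that path,'' via an induction on depth. This does not work as stated: $\mathit{Last}(A)$ contains only one edge $(p,w)$ per affected vertex $w$, where $p$ is the predecessor of $w$ on $\pi_{yw}$, and $p$ is in general \emph{not} an affected vertex but merely a scanned one. So the recorded edges by themselves never connect one affected vertex $u$ to the tail $p$ of the next recorded edge; no amount of recursion over $\mathit{Last}(A)$ produces the segment from $u$ to $p$. The paper's proof supplies this segment from the spanning tree: since $u$ dominates $p$ (as $p$ is scanned and $u$ is its nearest affected ancestor), $u$ is an ancestor of $p$ in $B$, so $B[u,p]\cdot(p,w)$ lies in $H'$; and every vertex of $B[u,p]$ is dominated by $u$ (otherwise one could bypass $u$ on the way to $p$), hence has depth at least $\mathit{depth}(u) \ge \mathit{depth}(w)$, which is what preserves the depth condition of Lemma~\ref{lemma:insert-affected}. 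Without this tree-path bridging argument --- which is the reason $H$ is taken to be the union of two spanning trees rather than an arbitrary sparse subgraph certifying $D$ --- your induction cannot be closed. (Two smaller points: in the first branch, if $x$ is unreachable the correct observation is that the new edge creates no new $s$-paths, not that $y$ stays unreachable --- $y$ may well already be reachable; and the appeal to \emph{divergence} of $B$ and $R$ in your ``no spurious dominators'' direction is not what is needed --- the relevant fact is simply that $H \subseteq G$ and $D_H = D$, so any vertex affected in the update $H \to H'$ would also be affected in $G \to G'$.)
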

\begin{proof}
It suffices to show that subgraph $H'$ of $G'$, computed in line 9, has the same dominator tree with $G'$.
Note that graph $H$, formed by two divergent spanning trees $B$ and $R$ of $G$ in line 7, has the same dominator tree $D$ as $G$.
Hence, since $\mathit{Last}(A)$ contains $(x,y)$, the immediate dominator of $y$ is the same in $H'$ and in $G'$.
Let $A$ be the set of affected vertices in $G$ after the insertion of edge $(x,y)$.
Since $H'$ is a subgraph of $G'$, any vertex in  $V \setminus A$ has the same immediate dominator in $H'$ and in $G'$.
It remains to argue that for each vertex $v \in A \setminus y$, there is a path $\widehat{\pi}_{yv}$ in $H'$ that satisfies Lemma \ref{lemma:insert-affected}.
Let $\pi_{yv}$ be the path from $y$ in $v$ in $G$ that was found by the search for affected vertices performed by the algorithm of \cite{dyndom:2012}.
We give a corresponding path $\widehat{\pi}_{yv}$ in $H'$. Recall that every vertex on $\pi_{yv}$ is scanned and that every scanned vertex is a descendant in $D$
of an affected vertex.
We argue that for every two successive affected vertices $u$ and $w$ on $\pi_{yv}$ there is a path $\pi_{uw}$ from $u$ to $w$ in $H'$ that consists of vertices of depth at least $\mathit{depth}(w)$.
Note that, by properties of the depth-based search,  $\mathit{depth}(w) \le \mathit{depth}(u)$.
Indeed, let $(p,w)$ be the edge entering $w$ from $\pi_{yv}$. Then $(p,w) \in \mathit{Last}(A)$ and $p$ is a descendant of $u$ in $D$.
Also, since $u$ dominates $p$ in $G$, $u$ is an ancestor of $p$ in both spanning trees $B$ and $R$. We let $\widehat{\pi}_{uw} = B[u,p] \cdot (p,w)$.
All vertices on $B[u,p]$ are dominated by $u$, since otherwise there would be a path from $s$ to $p$ avoiding $u$.
So, $\widehat{\pi}_{uw}$ is path from $u$ to $w$ in $H'$ that consists of vertices with depth at least $\mathit{depth}(w)$.
\end{proof}

\begin{lemma}
\label{lemma:SimpleRunningTime}
Algorithm \textsf{SimpleInsertEdge} maintains a low-high order of a flow graph $G$ with $n$ vertices through a sequence of edge insertions in
$O(mn)$ total time, where $m$ is the total number of edges in $G$ after all insertions.
\end{lemma}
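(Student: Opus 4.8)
The plan is to bound the total time by charging each edge insertion to its cost within the framework already established: the work done by algorithm \textsf{DBS} of \cite{dyndom:2012} to compute $D'$ and the affected set $A$, plus the work to rebuild $H'$ and run $\mathsf{ComputeLowHigh}$ on it. First I would recall the amortized guarantee of \cite{dyndom:2012}: maintaining the dominator tree under a sequence of edge insertions, starting from an empty graph and ending with $m$ edges on $n$ vertices, takes $O(mn)$ total time. This bound already covers all the calls in line~8 of \textsf{SimpleInsertEdge} that identify $A$ and produce $\mathit{Last}(A)$, since that search is exactly the \textsf{DBS} search, and the list $\mathit{Last}(A)$ is constructed as a byproduct with only constant extra work per visited vertex.

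Next I would account for the low-high recomputation. After forming $H' = H \cup \mathit{Last}(A)$, we have $|E_{H'}| \le |E_H| + |\mathit{Last}(A)|$. Since $B$ and $R$ are spanning trees, $|E_H| \le 2(n-1) = O(n)$; and $|\mathit{Last}(A)| \le |A| + 1 = O(n)$ because it contains $(x,y)$ plus one edge per affected vertex other than $y$. Hence $H'$ has $O(n)$ edges, and by \cite[Section~6]{DomCert:TALG} the call $\mathsf{ComputeLowHigh}(H')$ runs in $O(|E_{H'}|) = O(n)$ time, returning $D'$, a low-high order $\delta'$, and two divergent spanning trees $B'$, $R'$ to be used in the next iteration. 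Over $m$ insertions this contributes $O(mn)$ in total. The re-initialization branch (when $y$ was unreachable before the insertion) also costs $O(m)$ per insertion by the same linear-time algorithm of \cite{DomCert:TALG}, but this branch can be triggered at most $n-1$ times over the whole sequence — once per vertex that transitions from unreachable to reachable — so it contributes $O(mn)$ overall as well. The trivial branch where $x$ is unreachable costs $O(1)$ beyond the insertion itself.

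Summing the three contributions — $O(mn)$ for the incremental dominator maintenance of \cite{dyndom:2012}, $O(mn)$ for the $m$ linear-size low-high recomputations, and $O(mn)$ for the at most $n$ re-initializations — gives the claimed $O(mn)$ total bound, and correctness of the maintained low-high order follows from Lemma~\ref{lemma:SimpleCorrect} together with the fact that a low-high order of a subgraph with the same dominator tree is a low-high order of $G'$. The main subtlety to get right is the amortization: one must invoke the \cite{dyndom:2012} bound as an aggregate over the whole sequence rather than per-insertion (a single insertion can scan $\Theta(n)$ vertices), and one must observe that the linear-size guarantee $|E_{H'}| = O(n)$ holds at every step because $H$ is reset to $B' \cup R'$ after each insertion — so the auxiliary graph never accumulates edges across iterations, which is what keeps each $\mathsf{ComputeLowHigh}$ call at $O(n)$ rather than $O(m)$.
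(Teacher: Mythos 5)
Your proposal is correct and follows essentially the same decomposition as the paper's proof: $O(n)$ per insertion for rebuilding $H'$ and rerunning the linear-time low-high computation on its $O(n)$ edges, $O(mn)$ total for the at most $n$ re-initializations, and an amortized $O(mn)$ for the affected-vertex searches. The only cosmetic difference is that you invoke the aggregate $O(mn)$ bound of the incremental dominators algorithm as a black box, whereas the paper re-derives the underlying charging argument explicitly (the $O(\nu+\mu)$ search cost is distributed to scanned vertices and edges, each of which can be charged at most $O(n)$ times because its depth in $D$ strictly decreases); both routes rest on the same amortization.
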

\begin{proof}
Consider the insertion of an edge $(x,y)$. If $y$ was unreachable in $G$
then we compute $D$, two divergent spanning trees $B$ and $R$, and a low-high order in $O(m)$ time~\cite{DomCert:TALG}.
Throughout
the whole sequence of $m$ insertions, such an event can happen $O(n)$ times,
so all insertions to unreachable vertices are handled in $O(mn)$ total time.

Now we consider the cost of executing \textsf{SimpleInsertEdge}.
when both $x$ and $y$ are reachable in $G$.
Let $\nu$ be the number of scanned vertices, and let $\mu$ be the number of their adjacent edges.
We can update the dominator tree and locate the affected vertices (line 8) in $O(\nu+\mu+n)$ time~\cite{dyndom:2012}.
At the same time we can compute the edge set $\mathit{Last(A)}$.
Computing $H'$ in line 9 takes $O(n)$ time since $B \cup R \cup \mathit{Last(A)}$ contains at most $3(n-1)$ edges.
Also, computing the dominator tree, two divergent spanning trees,, and a low-high order of $H'$ in $O(n)$ time~\cite{DomCert:TALG}.
So \textsf{SimpleInsertEdge} runs in $O(\nu+\mu+n)$ time. .
The $O(n)$ term gives a total cost of $O(mn)$ for the whole sequence of $m$ insertions.
We distribute the remaining $O(\nu+\mu)$ cost to the scanned vertices and edges, that is $O(1)$ per scanned vertex or edge.
Since the depth in $D$ of every scanned vertex decreases by at least one, a vertex and an edge can be scanned at most
$O(n)$ times. Hence, each vertex and edge can contribute at most $O(n)$ total cost through the whole sequence of $m$ insertions.
The $O(m n)$ bound follows.
\end{proof}

\subsection{Efficient Algorithm}
\label{sec:efficient}

Here we develop a more practical algorithm that maintains a low-high order $\delta$ of a flow graph $G=(V,E,s)$ through a sequence of edge insertions.
Our algorithm uses the incremental dominators algorithm of \cite{dyndom:2012} to update the dominator tree $D$ of $G$
after each edge insertion.
We describe a process to update $\delta$ based on new results
on the relation among vertices in $D$ that are affected by the insertion. .
These results enable us to identify a subset of vertices for which we can compute a ``local'' low-high order, that
can be extended to a valid low-high order of $G$ after the update. We show that such a ``local'' low-high order can be computed
by a slightly modified version of an algorithm from \cite{DomCert:TALG}. We apply this algorithm on a sufficiently small
flow graph that is defined by the affected vertices, and is constructed using the concept of derived edges~\cite{path:tarjan81}.

\subsubsection{Derived edges and derived flow graphs}

Derived graphs, first defined in \cite{path:tarjan81}, reduce the problem of finding a low-high order to the case of a flat dominator tree~\cite{DomCert:TALG}.
By the parent property of $D$, if $(v, w)$ is an edge of $G$, the parent $d(w)$ of $w$ is an ancestor of $v$ in $D$.
Let $(v,w)$ be an edge of $G$, with $w$ not an ancestor of $v$ in $D$.
Then, the \emph{derived edge} of $(v, w)$ is the edge $(\overline{v}, w)$, where $\overline{v} = v$ if $v = d(w)$, $\overline{v}$ is the sibling of $w$ that is an ancestor of $v$ if $v \not= d(w)$. If $w$ is an ancestor of $v$ in $D$, then the derived edge of $(v, w)$ is null. Note that a derived edge $(\overline{v}, w)$ may not be an original edge of $G$. 
For any vertex $w \in V$ such that $C(w) \not= \emptyset$, we define the \emph{derived flow graph of $w$}, denoted by $G_w = (V_w, E_w, w)$, as the flow graph with
start vertex $w$,
vertex set $V_w = C(w) \cup \{ w \}$, and edge set $E_w = \{ (\overline{u}, v) \ | \ v \in V_w \mbox{ and } (\overline{u}, v) \mbox{ is the non-null derived edge of some edge in } E \}$.
By definition, $G_w$ has flat dominator tree, that is, $w$ is the only proper dominator of any vertex $v \in V_w \setminus w$.
We can compute a low-high order $\delta$ of $G$ by computing a low-high order $\delta_w$ in each derived flow graph
$G_w$. Given these low-high orders $\delta_w$, we can compute a low-high order of $G$ in $O(n)$ time by a depth-first traversal of $D$.
During this traversal, we visit the children of each vertex $w$ in their order in $\delta_w$, and number the vertices from $1$ to $n$ as they are visited.  The resulting preorder of $D$ is low-high on $G$.
Our incremental algorithm identifies, after each edge insertion, a specific derived flow graph $G_w$
for which a low-high order $\delta_w$ needs to be updated.
Then, it uses $\delta_w$ to update the low-high order of the whole flow graph $G$.
Still, computing a low-high order of $G_w$ can be too expensive to give us the desired running time.
Fortunately, we can overcome this obstacle by exploiting a relation among the vertices that are affected
by the insertion, as specified below. This allows us to compute $\delta_w$ in a contracted version of $G_w$.

\subsubsection{Affected vertices}

Let $(x,y)$ be the inserted vertex, where both $x$ and $y$ are reachable. Consider the execution of algorithm \textsf{DBS}~\cite{dyndom:2012} that updates the dominator tree by applying Lemma \ref{lemma:insert-affected}.
Suppose vertex $v$ is scanned, and let $q$ be the nearest affected ancestor of $v$ in $D$. Then, by Lemma \ref{lemma:insert-affected}, vertex $q$ is a child of $\mathit{nca}(x,y)$ in $D'$, i.e., $d'(q)=\mathit{nca}(x,y)$, and $v$ remains a descendant of $q$ in $D'$.

\begin{lemma}
\label{lemma:simple}
Let $u$ and $v$ be vertices such that $u \in D(v)$. Then, any simple path from $v$ to $u$ in $G$ contains only vertices in $D(v)$.
\end{lemma}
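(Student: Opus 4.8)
The plan is to argue by contradiction, using the defining property of domination. Suppose $\pi = (v = w_0, w_1, \ldots, w_k = u)$ is a simple path in $G$ from $v$ to $u$, and suppose that some vertex on $\pi$ does not lie in $D(v)$. I would then manufacture from $\pi$ a path in $G$ from $s$ to $u$ that avoids $v$, which contradicts the hypothesis $u \in D(v)$, i.e., that $v$ dominates $u$.

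First I would record two easy observations. Since $u \in D(v)$, the vertex $v$ belongs to $D$ and is therefore reachable in $G$; hence every vertex reachable from $v$ — in particular every $w_i$ on $\pi$ — is reachable from $s$ and so appears in $D$. Thus, for a vertex $z$ on $\pi$, the statement ``$z \notin D(v)$'' is equivalent to ``$v$ does not dominate $z$'', which by definition of the dominator relation yields a path $Q$ from $s$ to $z$ that avoids $v$.

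Next, let $z = w_i$ be a vertex of $\pi$ with $w_i \notin D(v)$. Since $w_0 = v \in D(v)$ (each vertex dominates itself), necessarily $i \geq 1$. Take $Q$ as above, an $s$-to-$w_i$ path avoiding $v$, and let $\pi' = (w_i, w_{i+1}, \ldots, w_k = u)$ be the suffix of $\pi$ from $w_i$ onward. Because $\pi$ is simple and $i \geq 1$, none of $w_i, \ldots, w_k$ equals $w_0 = v$, so $\pi'$ also avoids $v$. The concatenation of $Q$ and $\pi'$ is then a walk from $s$ to $u$ in $G$ missing $v$ (and, after deleting repeated vertices, a genuine simple $s$-$u$ path avoiding $v$), contradicting $u \in D(v)$. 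The degenerate case $u = v$ is immediate, since the only simple path from $v$ to $v$ is the trivial one-vertex path.

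The argument is short, so the only real obstacle is careful bookkeeping: one must invoke the simplicity of $\pi$ exactly where it is needed — namely, to guarantee that the suffix $\pi'$ never returns to the start vertex $v$ — and one must avoid conflating ``$z$ is not a descendant of $v$ in $D$'' with ``$z$ is unreachable''; the reachability of $v$, inherited from $u \in D(v)$, is precisely what excludes the latter.
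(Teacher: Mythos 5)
Your proof is correct and follows essentially the same route as the paper's: assume a vertex $w$ off $\pi$ lies outside $D(v)$, take an $s$-to-$w$ path avoiding $v$, and splice it with the suffix of $\pi$ (which misses $v$ by simplicity) to get an $s$-to-$u$ path avoiding $v$, contradicting $u \in D(v)$. Your added care about reachability of the intermediate vertices and the explicit use of simplicity only makes the same argument slightly more watertight than the paper's version.
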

\begin{proof}
Since $u \in D(v)$, $v$ dominates $u$, so all paths from $s$ to $u$ contain $v$.
Let $\pi_{vu}$ be a simple path from $v$ to $u$. Suppose, for contradiction, that $\pi_{vu}$ contains a vertex
$w \not\in D(v)$. Let $\pi_{wu}$ be the part of $\pi_{vu}$ from $w$ to $u$. Since $w \not\in D(v)$,
there is a path $\pi_{sw}$ from $s$ to $w$ that avoids $v$. But then $\pi_{sw} \cdot \pi_{wu}$ is a path from
$s$ to $u$ that avoids $v$, a contradiction.
\end{proof}

\begin{lemma}
\label{lemma:also-affected}
Let $v$ be vertex that is affected by the insertion of $(x,y)$, and let $w$ be a sibling of $v$ in $D$. If there is an edge $(u,w)$ with $u$ a descendant of $v$ in $D$ then $w$ is also affected.
\end{lemma}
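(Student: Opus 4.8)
The plan is to apply the characterization of affected vertices in Lemma \ref{lemma:insert-affected} directly to $w$. Write $z = \mathit{nca}(x,y)$ and let $p = d(v) = d(w)$ be the common parent of $v$ and $w$ in $D$. Since $v$ is affected, Lemma \ref{lemma:insert-affected} tells us that $\mathit{depth}(z) < \mathit{depth}(p)$ and that there is a path $\pi$ in $G$ from $y$ to $v$ all of whose vertices have depth strictly greater than $\mathit{depth}(p)$; in particular all vertices of $\pi$ lie in $D(p) \setminus \{p\}$, and since $v$ is the endpoint and the depth condition forbids leaving the subtree of the child of $p$ that is an ancestor of $v$, every vertex of $\pi$ in fact lies in $D(v)$. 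The first half of the condition, $\mathit{depth}(z) < \mathit{depth}(p) = \mathit{depth}(d(w))$, is then automatically satisfied for $w$ as well, so it only remains to exhibit a path from $y$ to $w$ in $G$ whose vertices all have depth $> \mathit{depth}(p)$.

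To build this path, first I would note that the vertex $u$ in the hypothesis is a descendant of $v$ in $D$, hence $\mathit{depth}(u) > \mathit{depth}(v) > \mathit{depth}(p)$, and $u \in D(v)$. I want to reach $u$ from $y$ staying inside $D(v)$. Since $v$ is affected, $v$ becomes a child of $z$ in $D'$ (Lemma \ref{lemma:insert-affected}), and $u$ is scanned together with everything on the search path; more concretely, the path $\pi$ from $y$ to $v$ produced by the search stays in $D(v)$, and since $v$ dominates $u$ in $G$, there is a path from $v$ to $u$ in $G$. By Lemma \ref{lemma:simple}, any simple path from $v$ to $u$ in $G$ uses only vertices of $D(v)$, all of which have depth $\ge \mathit{depth}(v) > \mathit{depth}(p)$. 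Concatenating $\pi$ (from $y$ to $v$) with such a simple $v$-to-$u$ path, and then appending the edge $(u,w)$, gives a path $\pi_{yw}$ from $y$ to $w$ in $G$. Every internal vertex of this path lies in $D(v) \subseteq D(p) \setminus \{p\}$ and hence has depth $> \mathit{depth}(p) = \mathit{depth}(d(w))$; the endpoint $w$ itself has $\mathit{depth}(w) = \mathit{depth}(p)+1 > \mathit{depth}(p)$. Thus $\pi_{yw}$ witnesses the second condition of Lemma \ref{lemma:insert-affected} for $w$, and we conclude that $w$ is affected.

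The main obstacle I anticipate is being careful about the depth bookkeeping at the two endpoints of the constructed path: Lemma \ref{lemma:insert-affected} requires the strict inequality $\mathit{depth}(d(w)) < \mathit{depth}(t)$ for \emph{all} $t$ on the path, including $t = w$ and $t = y$, so one must check that $y$ itself has depth $> \mathit{depth}(p)$. This follows because $y$ lies on $\pi$ (it is the first vertex) and every vertex of $\pi$ has depth $> \mathit{depth}(p)$ by the hypothesis that $v$ is affected — equivalently, $\mathit{nca}(x,y)$ is a strict ancestor of $p$, so $p$ is a strict ancestor of $y$. A secondary point to get right is the claim that the search path $\pi$ actually stays within $D(v)$ rather than merely within $D(p)$: this is exactly the standard consequence of the depth condition combined with $v$ being the target, and can be argued as in the proof of Lemma \ref{lemma:insert-affected}, or alternatively circumvented by invoking Lemma \ref{lemma:simple} once we know $v$ dominates every vertex we need to pass through on the way to $u$.
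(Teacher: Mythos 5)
Your proof is correct and takes essentially the same route as the paper: concatenate the path $\pi_{yv}$ witnessing that $v$ is affected with a simple $v$-to-$u$ path (which stays in $D(v)$ by Lemma \ref{lemma:simple}) and the edge $(u,w)$, then check the depth condition of Lemma \ref{lemma:insert-affected} for $w$. One small caveat: your intermediate claim that every vertex of $\pi$ lies in $D(v)$ is neither needed nor actually justified (a vertex of depth greater than $\mathit{depth}(p)$ need not lie in $D(v)$, or even in $D(p)$), but this is harmless since the depth bound $\mathit{depth}(t) > \mathit{depth}(d(v)) = \mathit{depth}(d(w))$ for all $t \in \pi$ is exactly what Lemma \ref{lemma:insert-affected} already guarantees.
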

\begin{proof}
Since $v$ is affected, there is a path $\pi_{yv}$ from $y$ to $v$ in $G$ that satisfies Lemma \ref{lemma:insert-affected}.
By Lemma \ref{lemma:simple} and the fact that $u$ is a descendant of $v$ in $D$, there is a simple path $\pi_{vu}$ from $v$ to $u$ in $G$ that contains only vertices in $D(v)$.
Thus, $\pi_{yv} \cdot \pi_{vu} \cdot (u,w)$ is a path from $y$ to $w$ that also satisfies Lemma \ref{lemma:insert-affected}. Hence, $w$ is affected.
\end{proof}

\begin{lemma}
\label{lemma:parent}
Let $v$ be an ancestor of $w$ in $D$, and let $u$ be a vertex that is not a descendant of $v$ in $D$.
Then any path from $u$ to $w$ contains $v$.
\end{lemma}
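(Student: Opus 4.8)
The plan is to argue by contradiction, in the spirit of the proofs of Lemmas \ref{lemma:simple} and \ref{lemma:also-affected}: assume that some path $\pi_{uw}$ from $u$ to $w$ in $G$ avoids $v$, and splice it together with a path from $s$ to obtain a path from $s$ to $w$ that avoids $v$, which is impossible since $v$ dominates $w$.

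First I would record the two facts that drive the argument. Since $v$ is an ancestor of $w$ in $D$, $v$ dominates $w$, so \emph{every} path from $s$ to $w$ contains $v$. On the other hand, since $u$ is not a descendant of $v$ in $D$, $v$ does not dominate $u$; hence there is a path $\pi_{su}$ from $s$ to $u$ that avoids $v$. The final step is the concatenation: $\pi_{su} \cdot \pi_{uw}$ is a walk from $s$ to $w$, and since $v$ lies on neither $\pi_{su}$ nor $\pi_{uw}$, this walk does not pass through $v$. Extracting a simple path from this walk yields a path from $s$ to $w$ that avoids $v$, contradicting the fact that $v$ dominates $w$. Therefore no such $\pi_{uw}$ exists, and every path from $u$ to $w$ contains $v$.

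I do not expect a genuine obstacle here; the proof is a short consequence of the definition of domination. The only point worth flagging is the implicit hypothesis that $u$ is reachable, which is what guarantees the existence of the path $\pi_{su}$ avoiding $v$. This is the case wherever the lemma is applied, since the vertices under consideration there are reachable; equivalently, the hypothesis that $u$ is not a descendant of $v$ in $D$ is read together with $u$ being a vertex of $D$, i.e., reachable.
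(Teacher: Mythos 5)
Your proof is correct and is essentially identical to the paper's: both take a path $\pi_{uw}$ avoiding $v$, use the non-descendant hypothesis to get a path $\pi_{su}$ from $s$ to $u$ avoiding $v$, and concatenate to contradict the fact that $v$ dominates $w$. Your remark about the implicit reachability of $u$ is a reasonable clarification but does not change the argument.
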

\begin{proof}
Let $\pi_{uw}$ be a path from $u$ to $w$. Since $u$ is not a descendant of $v$, there is a path $\pi_{su}$
from $s$ to $u$ that avoids $v$. Hence, if $\pi_{uw}$ does not contain $v$, then $\pi_{su} \cdot \pi_{uw}$ is
path from $s$ to $w$ that avoids $v$, a contradiction.
\end{proof}

Our next lemma provides a key result about the relation of the affected vertices in $D$.

\begin{lemma}
\label{lemma:affected-child}
All vertices that are affected by the insertion of $(x,y)$ are descendants of a common child $c$ of $\mathit{nca}(x,y)$.
\end{lemma}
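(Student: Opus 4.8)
The plan is to show that two affected vertices cannot lie in subtrees of two distinct children of $\mathit{nca}(x,y)$. Let $z = \mathit{nca}(x,y)$, and suppose for contradiction that $v_1$ and $v_2$ are affected and lie in $D(c_1)$ and $D(c_2)$ respectively, where $c_1 \neq c_2$ are distinct children of $z$ in $D$. Since $v_1$ is affected, by Lemma~\ref{lemma:insert-affected} there is a path $\pi_1$ from $y$ to $v_1$ all of whose vertices $w$ satisfy $\mathit{depth}(d(v_1)) < \mathit{depth}(w)$; in particular every vertex of $\pi_1$ is a proper descendant of $z$ in $D$, since $\mathit{depth}(z) = \mathit{depth}(\mathit{nca}(x,y)) < \mathit{depth}(d(v_1))$ by the lemma. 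The same holds for a path $\pi_2$ from $y$ to $v_2$.

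First I would locate $y$ relative to the $c_i$. Both $\pi_1$ and $\pi_2$ start at $y$, and every vertex on each path is a proper descendant of $z$. A proper descendant of $z$ lies in exactly one of the subtrees $D(c)$ over children $c$ of $z$. Now $\pi_1$ ends in $D(c_1)$ and $\pi_2$ ends in $D(c_2)$, but both start at the same vertex $y$, which lies in exactly one subtree, say $D(c_1)$ without loss of generality. Then $\pi_2$ is a path from $y \in D(c_1)$ to $v_2 \in D(c_2)$ consisting only of proper descendants of $z$. Walking along $\pi_2$, there must be a first edge $(a,b)$ that leaves $D(c_1)$, i.e.\ $a \in D(c_1)$ and $b \notin D(c_1)$; since $b$ is a proper descendant of $z$, we have $b \in D(c_2')$ for some child $c_2' \neq c_1$ of $z$.

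The key step is to derive a contradiction from this crossing edge. Apply Lemma~\ref{lemma:parent} with the role of ``$v$'' played by $c_1$ (an ancestor of $a$ in $D$, since $a \in D(c_1)$) and ``$u$'' played by $b$ (not a descendant of $c_1$): every path from $b$ to $a$ contains $c_1$. That does not immediately bite, so instead apply Lemma~\ref{lemma:parent} to the edge $(a,b)$ itself in the other direction: $c_2'$ is an ancestor of $b$ and $a$ is not a descendant of $c_2'$ (it is in $D(c_1)$), so any path from $a$ to $b$ contains $c_2'$. But $(a,b)$ is a single edge, a path from $a$ to $b$ that does not contain $c_2'$ (since $a \neq c_2'$ as $a \in D(c_1)$, and $b \neq c_2'$ would need checking — if $b = c_2'$ we instead note $\mathit{depth}(b) = \mathit{depth}(c_2') = \mathit{depth}(z)+1$, so $b = c_2'$ is the first vertex of $D(c_2')$ and $\mathit{depth}(d(v_2)) \ge \mathit{depth}(b) > \mathit{depth}(z)$ is consistent, so we need a slightly more careful argument). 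The cleanest route: the edge $(a,b)$ with $a \in D(c_1)$, $b \in D(c_2')$, $c_1 \neq c_2'$, contradicts the parent property of $D$ unless $b$ is the child $c_2'$ itself and $d(b) = z$ is an ancestor of $a$ — which it is. So $b = c_2'$, $d(b) = z$. Thus the only way $\pi_2$ can cross out of $D(c_1)$ is by an edge into a child $c_2'$ of $z$; but such an edge $(a, c_2')$ with $a$ a proper descendant of $z$ contradicts that $\mathit{nca}(x,y) = z$ dominates $v_2$ via depth constraints only if $c_2' \ne c_2$; if $c_2' = c_2$ we must still pass deeper into $D(c_2)$, and now we recurse: but once on $D(c_2)$ the path may cross back. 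The genuinely clean argument is: since every vertex of $\pi_2$ has depth $> \mathit{depth}(z)$, and $\pi_2$ goes from $D(c_1)$ to $D(c_2)$, it uses a vertex of depth exactly $\mathit{depth}(z)+1$ in some $D(c')$ with $c' \neq c_1$; call it $c'$, reached by edge $(a, c')$ with $a \in D(c_1)$. By the parent property $d(c') = z$ is an ancestor of $a$, fine; but then $\pi_1 \cdot$ (simple path from $v_1$ down within $D(c_1)$ — wait, we need $a$ reachable from $v_1$). I would instead argue directly: $c'$ is affected by Lemma~\ref{lemma:also-affected} if $a \in D(c')$-sibling configuration holds; iterate. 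The main obstacle — and where I would spend the most care — is making this crossing argument rigorous without circular dependence on which vertices are affected: the right tool is Lemma~\ref{lemma:parent} applied to show that any path from $y$ that stays at depth $> \mathit{depth}(z)$ cannot leave the subtree $D(c)$ containing $y$, because leaving it would require an edge into a vertex $c'$ of depth $\mathit{depth}(z)+1$, and then an edge from some $D(c_1)$-vertex into $c'$ contradicts $d(c') = z$ being the immediate dominator together with the existence of the path $\pi_1$ reaching into $D(c_1)$ — giving $s \rightsquigarrow y \rightsquigarrow a \rightarrow c'$ avoiding $z$, impossible since $z$ dominates $c'$. That last contradiction closes the proof: $y$, hence all of $\pi_1$ and $\pi_2$, hence all affected vertices, lie in a single $D(c)$.
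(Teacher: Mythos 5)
Your overall strategy---show that a witnessing path of Lemma~\ref{lemma:insert-affected} cannot cross from the subtree of the child of $z=\mathit{nca}(x,y)$ containing $y$ into the subtree of a different child of $z$---is essentially the paper's strategy, and you correctly isolate the key structural fact: by the parent property, such a crossing must enter the other subtree at its root $c'$, a vertex of depth $\mathit{depth}(z)+1$. But the proof as written does not close. The contradiction you finally offer is that $s \rightsquigarrow y \rightsquigarrow a \rightarrow c'$ avoids $z$ while $z$ dominates $c'$; this is false, because $z$ is an ancestor of $y$ in $D$, so $z$ dominates $y$ and \emph{every} path from $s$ to $y$ already contains $z$. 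No contradiction with domination arises.

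The contradiction you need is the one you actually had in hand and dismissed as ``consistent'': once $c'$ (your $b=c_2'$) lies on $\pi_2$, observe that $d(v_2)$ is a descendant of the child $c_2$ of $z$, so $\mathit{depth}(d(v_2)) \ge \mathit{depth}(z)+1 = \mathit{depth}(c')$, whereas Lemma~\ref{lemma:insert-affected} requires $\mathit{depth}(w) > \mathit{depth}(d(v_2))$ for every $w$ on $\pi_2$. That inequality violation is the whole proof. The paper reaches it more directly: take $c$ to be the child of $z$ that is an ancestor of $y$; if an affected $v$ lay below another child $q$ of $z$, then since $y \notin D(q)$ and $q$ is an ancestor of $v$, Lemma~\ref{lemma:parent} forces $\pi_{yv}$ to contain $q$, whose depth is at most $\mathit{depth}(d(v))$ --- done. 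Fixing $c$ this way also repairs your ``without loss of generality'' step, which is not actually harmless ($y$ could lie in a third subtree $D(c_3)$ distinct from both $D(c_1)$ and $D(c_2)$). Two further points: your claim that every vertex of $\pi_i$ is a proper descendant of $z$ does not follow from depth alone (a deep vertex can live in an unrelated subtree of $D$) and itself needs the parent-property crossing argument you develop later; and the abandoned tangents (the appeal to Lemma~\ref{lemma:also-affected}, the ``recurse'' idea) should be removed, since they are unnecessary once the depth contradiction is stated.
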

\begin{proof}
Let $z=\mathit{nca}(x,y)$, and
let $c$ be the child of $z$ that is an ancestor of $y$ in $D$. We claim that all affected vertices are descendants of $c$ in $D$. Suppose, for contradiction, that there is an affected vertex $v$ that is not a descendant of $c$ in $D$.
By Lemma \ref{lemma:insert-affected}, $v$ must be a descendant $z$ in $D$. Also, since
the children of $z$ are not affected, $v$ is not a child of $z$.
Hence, $v$ is a proper descendant of another child $q$ of $z$ in $D$ ($q \not= c$).
Let $\pi_{yv}$ be a path from $y$ to $v$ in $G$ that satisfies Lemma \ref{lemma:insert-affected}.
Since $y$ is not a descendant of $q$, by Lemma \ref{lemma:parent} path $\pi_{yv}$ must contain $q$.
But then $\pi_{yv}$ contains a vertex of depth $\mathit{depth}(d(v))$ or less, which contradicts Lemma \ref{lemma:insert-affected}.
\end{proof}

We shall apply Lemma \ref{lemma:affected-child} to construct a flow graph $G_A$ for the affected vertices.
Then, we shall use $G_A$ to compute a ``local'' low-high order that we extend to a valid low-high order of $G'$.

\subsubsection{Low-high order augmentation}

Let $\delta$ be a low-high order of $G$, and let $\delta'$ be a preorder of the dominator tree $D'$ of $G'$.
We say that $\delta'$ \emph{agrees with} $\delta$ if the following condition holds for any pair of siblings
$u, v$ in $D$ that are not affected by the insertion of $(x,y)$: $u <_{\delta'} v$ if and only if $u <_{\delta} v$.
Our goal is to show that there is a low-high order $\delta'$ of $G'$ that agrees with $\delta$.

\begin{lemma}
\label{lemma:agree}
Let $\delta$ be a low-high order of $G$ before the insertion of $(x,y)$.
There is a preorder $\delta'$ of $D'$ that agrees with $\delta$.
\end{lemma}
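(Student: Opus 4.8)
The plan is to produce $\delta'$ by a depth-first traversal of $D'$ in which the order of visiting children is prescribed everywhere, except that affected vertices may be spliced freely into a single sequence; agreement with $\delta$ will then follow immediately from how non-affected siblings sit inside $D'$.

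First I would record the structural picture left by the update. Write $z=\mathit{nca}(x,y)$. By Lemma~\ref{lemma:insert-affected} every affected vertex $v$ has $d'(v)=z$, and by Lemma~\ref{lemma:affected-child} all affected vertices lie in $D(c)$ for a single child $c$ of $z$; moreover $c$ itself is not affected, since no child of $z$ is affected. By the definition of ``affected'', every non-affected vertex $w$ satisfies $d'(w)=d(w)$. From these observations I extract three facts: (i) if $u$ and $v$ are siblings in $D$ and neither is affected, then $d'(u)=d(u)=d(v)=d'(v)$, so $u$ and $v$ are again siblings in $D'$; (ii) for every $w\neq z$, the children of $w$ in $D'$ are exactly the non-affected children of $w$ in $D$, since an affected child has moved to $z$ and no vertex acquires a new parent other than $z$; (iii) the set of children of $z$ in $D'$ is $C(z)\cup A$, the disjoint union of the children of $z$ in $D$ (all non-affected) with the set $A$ of affected vertices.

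Next I would build $\delta'$. Traverse $D'$ from $s$ in depth-first order: at a vertex $w\neq z$ recurse into the children of $w$ in $D'$ in the order induced by $\delta$ (legitimate by (ii), since $\delta$ linearly orders any set of vertices), and at $z$ recurse into the vertices of $C(z)$ in the order induced by $\delta$ and splice the affected vertices of $A$ into that sequence in any fixed way, for instance all of them consecutively right after $c$. Numbering the vertices $1,\dots,n$ in the order they are first visited yields a preorder $\delta'$ of $D'$. Finally I would check agreement: let $u$ and $v$ be siblings in $D$ with neither affected and common parent $w$; by (i) they are siblings in $D'$, so their relative $\delta'$-order is the order in which the traversal recursed into the children of $w$. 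If $w\neq z$ this is the $\delta$-order by construction, and if $w=z$ then $u,v\in C(z)$ and the traversal preserved the $\delta$-order of $C(z)$; in either case $u<_{\delta'}v$ if and only if $u<_{\delta}v$. I do not expect a genuine obstacle here: the content is entirely the observation that non-affected siblings remain siblings in $D'$ and that affected and non-affected children are intermingled only beneath $z$, after which the traversal construction is routine.
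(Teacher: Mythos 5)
Your proposal is correct and follows essentially the same route as the paper's proof: observe that affected vertices all become children of $z=\mathit{nca}(x,y)$ while every other vertex keeps its parent, then build $\delta'$ by a depth-first traversal of $D'$ that visits non-affected siblings in their $\delta$-order and splices the affected vertices into the child list of $z$. Your version spells out the sibling-preservation facts in somewhat more detail, but the construction and the agreement check are the same.
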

\begin{proof}
By Lemma \ref{lemma:insert-affected}, all affected vertices become children of $z$ in $D'$.
Hence, $C'(z) \supseteq C(z)$, and for any $v \not= z$, $C'(v) \subseteq C(v)$.
Then, for each vertex $v$, we can order the children of $v$ in $C'(v)$ that are not affected according to $\delta$.
Finally, we insert the affected vertices in any order in the list of children of $z$.
Let $\delta'$ be the preorder of $D'$ that is constructed by a depth-first traversal of $D'$
that visits the children of each vertex $w$ in the order specified above.
Then, $\delta'$ agrees with $\delta$.
\end{proof}

\begin{lemma}
\label{lemma:unaffected}
Let $\delta'$ be a preorder of $D'$ that agrees with $\delta$.
Let $v$ be a vertex that is not a child of $\mathit{nca}(x,y)$ and is not affected by the insertion of $(x,y)$.
Then $\delta'$ is a low-high order for $v$ in $G'$.
\end{lemma}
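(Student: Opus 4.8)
The plan is to split into two cases according to whether $(d(v),v)$ is an edge of $G$. If $(d(v),v)\in E$, then since $v$ is not affected we have $d'(v)=d(v)$, hence $(d'(v),v)\in E'$, and $\delta'$ is trivially a low-high order for $v$ in $G'$. The substance of the argument lies in the complementary case, where I will reuse the low-high certificate that $\delta$ provides for $v$. So suppose $(d(v),v)\notin E$; then there are edges $(u,v),(w,v)\in E$ with $u,w$ reachable, $u<_{\delta}v<_{\delta}w$, and $w$ not a descendant of $v$ in $D$. I will show that this same pair of edges witnesses that $\delta'$ is a low-high order for $v$ in $G'$. Two of the needed properties are immediate: $(u,v),(w,v)\in E'$ and $u,w$ remain reachable; and since inserting an edge can only shrink dominator subtrees, $D'(v)\subseteq D(v)$, so $w$ is not a descendant of $v$ in $D'$ either. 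What remains is to establish $u<_{\delta'}v<_{\delta'}w$.

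The next step is structural. Write $p=d(v)=d'(v)$ and $z=\mathit{nca}(x,y)$; note $p\neq z$ because $v$ is not a child of $z$. By the parent property of $D$, both $u$ and $w$ are descendants of $p$; neither equals $p$ (else $(d(v),v)\in E$); and $u$ is not a descendant of $v$ (otherwise $v<_{\delta}u$). Hence $u$ lies in the subtree $D(v'')$ and $w$ in the subtree $D(v''')$ for two children $v''\neq v$, $v'''\neq v$ of $p$. Since subtrees are consecutive in the preorder $\delta$ and $v$ is a sibling of $v''$, from $u<_{\delta}v$ we get $v''<_{\delta}v$, and symmetrically $v<_{\delta}v'''$; in particular $v''\neq v'''$. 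Applying Lemma~\ref{lemma:also-affected} in contrapositive form (an edge from a descendant of $v''$ into the sibling $v$, with $v$ unaffected) shows $v''$ is not affected, and likewise $v'''$ is not affected. Since $v,v'',v'''$ are unaffected siblings in $D$, they remain children of $p$ in $D'$, and $\delta'$, which agrees with $\delta$, satisfies $v''<_{\delta'}v<_{\delta'}v'''$.

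The remaining claim — and the step I expect to be the \emph{main obstacle} — is that $u$ is still a descendant of $v''$ in $D'$ (and symmetrically $w$ of $v'''$); this is not automatic, since the dominator chain of an unaffected vertex can still be shortened. Recall $D'$ is obtained from $D$ by re-parenting every affected vertex to $z$. If $u$ has no affected proper ancestor in $D$, its dominator chain is unchanged and $v''$ remains an ancestor of $u$. Otherwise let $q$ be the nearest affected ancestor of $u$; then $z$ is a proper ancestor of $q$, and the ancestors of $u$ in $D'$ form the set $D[s,z]\cup D[q,u]$, which contains $v''$ unless $v''$ lies strictly between $z$ and $q$ on $D[s,u]$. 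I will exclude that last possibility. In it, $v''$ is a proper descendant of $z$, so $p=d(v'')$ is also a proper descendant of $z$ and $\mathit{depth}(z)<\mathit{depth}(p)=\mathit{depth}(d(v))$; moreover $q$ is an affected proper descendant of $v''$ and $u$ a proper descendant of $q$. Concatenating a path $\pi_{yq}$ from Lemma~\ref{lemma:insert-affected} (all vertices of depth $>\mathit{depth}(d(q))\geq\mathit{depth}(v'')>\mathit{depth}(d(v))$), a simple path $\pi_{qu}$ from $q$ to $u$ (all vertices of depth $\geq\mathit{depth}(q)>\mathit{depth}(d(v))$, by Lemma~\ref{lemma:simple}), and the edge $(u,v)$, produces a path from $y$ to $v$ all of whose vertices have depth $>\mathit{depth}(d(v))$; combined with $\mathit{depth}(z)<\mathit{depth}(d(v))$, Lemma~\ref{lemma:insert-affected} would force $v$ to be affected, a contradiction. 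Hence $u\in D'(v'')$ and, symmetrically, $w\in D'(v''')$.

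Finally I assemble the pieces. In the preorder $\delta'$ of $D'$, the subtree $D'(v'')$ is a consecutive block headed by $v''$ and not containing the sibling $v$; since $v''<_{\delta'}v$, the entire block precedes $v$, so $u<_{\delta'}v$. Symmetrically, $D'(v''')$ is a consecutive block headed by $v'''$, not containing $v$, and since $v<_{\delta'}v'''$ the entire block follows $v$, so $v<_{\delta'}w$. Thus $u<_{\delta'}v<_{\delta'}w$, and the edges $(u,v),(w,v)$ certify that $\delta'$ is a low-high order for $v$ in $G'$.
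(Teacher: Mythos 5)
Your proof is correct and follows essentially the same route as the paper's: reuse the witness edges $(u,v),(w,v)$ from $\delta$, observe that their derived-edge endpoints (your $v'',v'''$, the paper's $\overline{u},\overline{w}$) are unaffected siblings of $v$ straddling it, and conclude via the agreement of $\delta'$ with $\delta$. If anything you are more careful than the paper at one point: where the paper dismisses the possibility of an affected vertex on $D[\overline{u},u]$ with a single appeal to Lemma~\ref{lemma:also-affected} (whose sibling hypothesis only applies directly when that vertex is $\overline{u}$ itself), you handle the case of an affected \emph{proper} descendant of $v''$ explicitly by concatenating $\pi_{yq}$, a simple path $\pi_{qu}$, and $(u,v)$ to contradict the unaffectedness of $v$ via Lemma~\ref{lemma:insert-affected} --- a worthwhile addition.
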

\begin{proof}
Since $v$ is not affected, $d(v)$ is still the parent of $v$ in $D$ after the insertion. So, if $(d(v),v) \in E$,
then $\delta'$ is a low-high order for $v$ in $G'$.
Now suppose that $(d(v),v) \not\in E$. Then there are two edges $(u,v)$ and $(w,v)$ in $E$ such that $u <_{\delta} v <_{\delta} w$,
where $w$ is not a descendant of $v$ in $D$.
Let $(\overline{u},v)$ and $(\overline{w},v)$ be the derived edges of $(u,v)$ and $(w,v)$, respectively, in $D$.
Then $\overline{u}$ and $\overline{w}$ are siblings of $v$ in $D$.
Siblings $\overline{u}$ and $\overline{w}$ exist and are distinct by the fact that $(d(v),v) \not\in E$ and by the parent property of $D$.
Hence, $\overline{u}  <_{\delta} v <_{\delta} \overline{w}$.
We argue that after the insertion of $(x,y)$, $\overline{u}$ (resp., $\overline{w}$) remains a sibling of $v$, and an ancestor of $u$ (resp., $w$).
If this is not the case, then there is an affected vertex $q$ on $D[\overline{u},u]$.
But then, Lemma \ref{lemma:also-affected} implies that $v$ is also be affected, a contradiction.
So, both $\overline{u}$ and $\overline{w}$ remain siblings of $v$ in $D'$, and $(\overline{u},v)$ and $(\overline{w},v)$
remain the derived edges of $(u,v)$ and $(w,v)$, respectively, in $D'$. Then, since $\delta'$ agrees with $\delta$,
$\delta'$ is a low-high order for $v$ in $G'$.
\end{proof}

We shall use Lemmata \ref{lemma:insert-affected} and \ref{lemma:unaffected} to show that in order to compute a low-high order of $G'$, it suffices to
compute a low-high order for the derived flow graph $G'_z$, where $z = \mathit{nca}(x,y)$.
Still, the computation of a low-high order of $G'_z$ is
too expensive to give us the desired running time. Fortunately, as we show next,
we can limit these computations for a contracted version of $G'_z$, defined by the affected vertices.

Let $\delta$ be a low-high order of $G$ before the insertion of $(x,y)$.
Also, let $z = \mathit{nca}(x,y)$, and let $\delta_z$ be
a corresponding low-high order of the derived flow graph $G_z$. That is, $\delta_z$ is the restriction of $\delta$
to $z$ and its children in $D$.
Consider the child $c$ of $z$ that, by Lemma \ref{lemma:affected-child}, is an ancestor of all the affected vertices.
Let $\alpha$ and $\beta$, respectively, be the predecessor and successor of $c$ in $\delta_z$. Note that $\alpha$ or $\beta$ may be null.
An \emph{augmentation of $\delta_z$} is an order $\delta'_z$ of $C'(z) \cup \{z\}$ that results from $\delta_z$
by inserting the affected vertices arbitrarily around $c$,
that is, each affected vertex is placed in an arbitrary position between $\alpha$ and $c$ or between $c$ and $\beta$.

\begin{lemma}
\label{lemma:unaffected-children}
Let $z = \mathit{nca}(x,y)$, and let $\delta_z$ be a low-high order of the derived flow graph $G_z$ before the insertion of $(x,y)$.
Also, let $\delta'_z$ be an augmentation of $\delta_z$, and let $\delta'$ be a preorder of $D'$ that extends $\delta'_z$.
Then, for each child $v$ of $z$ in $D$, $\delta'$ is a low-high order for $v$ in $G'$.
\end{lemma}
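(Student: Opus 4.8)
The plan is to fix a child $v$ of $z = \mathit{nca}(x,y)$ in $D$ and verify the low-high condition for $v$ in $G'$ directly, using that a child of $z$ in $D$ is never affected (this is the depth argument already invoked in the proof of Lemma \ref{lemma:affected-child}: $d(v)=z$, so the depth test of Lemma \ref{lemma:insert-affected} fails). Consequently $d'(v)=z$ and $v$ is still a child of $z$ in $D'$. If $(z,v)\in E$, then $(d'(v),v)=(z,v)\in E'$ and $\delta'$ is a low-high order for $v$ in $G'$. So I would assume $(z,v)\notin E$, equivalently $(z,v)\notin E_z$. Since $\delta_z$ is a low-high order of $G_z$ and $G_z$ has a flat dominator tree, there are two derived edges $(a,v),(b,v)\in E_z$ with $a<_{\delta_z}v<_{\delta_z}b$; as $(z,v)\notin E$ forces $a,b\neq z$, we get $a,b\in C(z)\setminus\{v\}$. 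Let $(u,v),(w,v)\in E$ be original edges whose derived edges in $D$ are $(a,v)$ and $(b,v)$, so $u\in D(a)$ and $w\in D(b)$.

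The core step is to show that the derived edges of $(u,v)$ and $(w,v)$ \emph{with respect to $D'$}, with sources $a'$ and $b'$, satisfy $a'<_{\delta'_z}v<_{\delta'_z}b'$ and $a',b'\in C'(z)\setminus\{v\}$ (non-null with source $\neq z$, again because $(z,v)\notin E$ and $u\notin D'(v)$, $w\notin D'(v)$). I would argue the claim for $a'$; $b'$ is symmetric. If $D(a)$ contains no affected vertex, then every vertex on $D[a,u]$ keeps its parent, so $D'[a,u]=D[a,u]$, hence $u\in D'(a)$ and $a'=a$; since $a$ and $v$ both belong to the domain of $\delta_z$, the augmentation preserves $a<_{\delta_z}v$, so $a'<_{\delta'_z}v$. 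Otherwise $D(a)$ contains an affected vertex; since by Lemma \ref{lemma:affected-child} all affected vertices lie in $D(c)$ and sibling subtrees are disjoint, this forces $a=c$, so $c<_{\delta_z}v$ and therefore $\beta\le_{\delta_z}v$, where $\beta$ is the (existing) successor of $c$ in $\delta_z$. Tracing the $D'$-ancestors of $u$ upward along $D$-parents until the first affected one, one sees that $a'\in\{c\}\cup A$: it is $c$ if $u$ has no affected $D$-ancestor, it is the nearest affected $D$-ancestor of $u$ otherwise, and it is $u$ itself if $u\in A$. By the definition of an augmentation, every vertex of $\{c\}\cup A$ lies in $\delta'_z$ strictly between $\alpha$ and $\beta$, so $a'<_{\delta'_z}\beta\le_{\delta'_z}v$, the last inequality because $\beta\le_{\delta_z}v$ is preserved by the augmentation. (For $b'$ one uses the mirror fact $v\le_{\delta_z}\alpha$ when $b=c$.)

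Finally I would lift this to $G'$: $\delta'$ is a preorder of $D'$ extending $\delta'_z$, so $a'<_{\delta'}v<_{\delta'}b'$, and $a',v,b'$ are distinct children of $z$ in $D'$. Since $u\in D'(a')$, $w\in D'(b')$, and the subtrees of distinct siblings are disjoint and appear consecutively in a preorder, we obtain $u<_{\delta'}v<_{\delta'}w$ and $w\notin D'(v)$; together with $(u,v),(w,v)\in E'$ and $u,w$ reachable in $G'$, this establishes that $\delta'$ is a low-high order for $v$ in $G'$.

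I expect the main obstacle to be the second case of the core step. When $a=c$, the supporting tail $u$ may itself be affected or may hang below an affected vertex, so the old derived source $c$ need not be the derived source $a'$ in $D'$, and one must pin down the position of $a'$ inside $\delta'_z$. The two observations that make this work are: (i) the affected vertices are inserted into $\delta'_z$ only in the slots immediately adjacent to $c$, so $\{c\}\cup A$ forms a contiguous block strictly between $\alpha$ and $\beta$; and (ii) $c<_{\delta_z}v$ forces $\beta\le_{\delta_z}v$ (and $v<_{\delta_z}c$ forces $v\le_{\delta_z}\alpha$), which the augmentation preserves, so the new source cannot cross $v$.
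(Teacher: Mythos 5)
Your proof is correct and follows essentially the same route as the paper's: reduce to the derived flow graph of $z$, dispose of the case $(z,v)\in E$, and otherwise track how the derived-edge sources into $v$ move from $D$ to $D'$, using Lemma \ref{lemma:affected-child} to force the old source to be $c$ whenever an affected vertex intervenes and the augmentation's placement rule (all of $\{c\}\cup A$ sitting strictly between $\alpha$ and $\beta$) to keep the new source on the correct side of $v$. Your treatment of the sub-case where $D(a)$ contains an affected vertex off the path $D[z,u]$ is slightly more explicit than the paper's, but it is the same argument.
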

\begin{proof}
Since $v$ is a child of $z$ in $D$ it is not affected. Hence, $d'(v)=d(v)=z$.
Let $G'_z$ be the derived flow graph of $z$ after the insertion of $(x,y)$.
It suffices to show that $\delta'_z$ is a low-high order for $v$ in $G'_z$.

If $(z,v) \in E$, then $(z,v)$ is an edge in $G'_z$. So, in this case,
$\delta'_z$ is a low-high order for $v$ in $G'_z$.
Now suppose that $(z,v) \not\in E$.
Let $\delta$ be a preorder of $D$ that extends $\delta_z$.
Then, there are two edges $(u,v)$ and $(w,v)$ in $G$ such that $u <_{\delta} v <_{\delta} w$,
where $w$ is not a descendant of $v$ in $D$.
The fact that $(z,v)$ is not an edge implies that $u \not=z$ and $w \not= z$. 
Let $\overline{u}'$ (resp., $\overline{w}'$) be the nearest ancestor of $u$ (resp., $w$) in $D'$ that is a child of $z$.
We argue that $\overline{u}'$ exists and satisfies $\overline{u}' <_{\delta'_z} v$.
Let $\overline{u}$
be the nearest ancestor of $u$ in $D$.
If no vertex on $D(z,u]$ is affected, then $\overline{u}' = \overline{u}$.
Also, since $\overline{u} <_{\delta_z} v$ and by the fact that $\delta'_z$ is an augmentation of $\delta_z$, we have $\overline{u}' <_{\delta'_z} v$.
Suppose now that there is an affected vertex $q$ on $D(z,u]$. By Lemma \ref{lemma:insert-affected}, $q$ becomes
a child of $z$ in $D'$, hence $\overline{u}'= q$. Also, by Lemma \ref{lemma:affected-child},
$q$ is a proper descendant of $c$, so $\overline{u}=c$. Then $c <_{\delta_z} v$, and by the construction
of $\delta'_z$ we have $\overline{u}' <_{\delta'_z} v$.

An analogous argument shows that $\overline{w}'$ exists and satisfies $v <_{\delta'_z} \overline{w}'$.
Thus, $\delta'_z$ is a low-high order for $v$ in $G'_z$.
\end{proof}

\subsubsection{Algorithm}
\label{sec:algorithm}

Now we are ready to describe our incremental algorithm for maintaining a low-high order $\delta$ of $G$.
For each vertex $v$ that is not a leaf in $D$, we maintain a list of its children $C(v)$ in $D$, ordered by $\delta$.
Also, for each vertex $v\not= s$, we keep two variables $\mathit{low}(v)$ and $\mathit{high}(v)$. Variable $\mathit{low}(v)$ stores an
edge $(u,v)$ such that $u \not= d(v)$ and $u <_{\delta} v$; $\mathit{low}(v) = \mathit{null}$ if no such edge exists.
Similarly,
$\mathit{high}(v)$ stores an
edge $(w,v)$ such that and $v <_{\delta} w$ and $w$ is not a descendant of $v$ in $D$; $\mathit{high}(v) = \mathit{null}$ if no such edge exists.
These variables are useful in the applications that we mention in Section \ref{sec:applications}.
Finally, we mark each vertex $v$ such that $(d(v),v) \in E$.
For simplicity, we assume that the vertices of $G$ are numbered from $1$ to $n$, so we can store the above information in corresponding
arrays $\mathit{low}$, $\mathit{high}$, and $\mathit{mark}$.
Note that for a reachable vertex $v$,
we can have $\mathit{low}(v)=\mathit{null}$ or $\mathit{high}(v)=\mathit{null}$ (or both) only if
$\mathit{mark}(v) = \mathit{true}$.
Before any edge insertion, all vertices are unmarked, and all entries in arrays $\mathit{low}$ and $\mathit{high}$ are null.
We initialize the algorithm and the associated data structures by executing a linear-time algorithm to
compute the dominator tree $D$ of $G$~\cite{domin:ahlt,dominators:bgkrtw} and a linear-time algorithm
to compute a low-high order $\delta$ of $G$~\cite{DomCert:TALG}.
So, the initialization takes $O(m+n)$ time for a digraph with $n$ vertices and $m$ edges. Next, we describe the main routine to handle an edge insertion.
We let $(x,y)$ be the inserted edge.
Also, if $x$ and $y$ are reachable before the insertion, we let $z = \mathit{nca}(x,y)$.

\begin{algorithm}[h!]

\LinesNumbered
\DontPrintSemicolon

Compute the dominator tree $D$ and a low-high order $\delta$ of $G$.\;

\ForEach{reachable vertex $v \in V\setminus s$}{
\lIf{$(d(v),v) \in E$}{set $\mathit{mark}(v) \leftarrow \mathit{true}$}
find edges $(u,v)$ and $(w,v)$ such that $u <_{\delta} v <_{\delta} w$ and $w \not\in D(v)$\;
set $\mathit{low}(v) \leftarrow u$ and $\mathit{high}(v) \leftarrow w$}

\KwRet $(D, \delta, \mathit{mark}, \mathit{low}, \mathit{high})$

\caption{\textsf{Initialize}$(G)$}
\end{algorithm}

\begin{algorithm}[t]

\LinesNumbered
\DontPrintSemicolon
\KwIn{Flow graph $G=(V,E,s)$, its dominator tree $D$, a low-high order $\delta$ of $G$, arrays $\mathit{mark}$, $\mathit{low}$ and $\mathit{high}$, and a new edge $e=(x,y)$.}
\KwOut{Flow graph $G' = (V, E \cup (x,y), s)$, its dominator tree $D'$, a low-high order $\delta'$ of $G'$, and arrays $\mathit{mark}'$, $\mathit{low}'$ and $\mathit{high}'$.}

Insert $e$ into $G$ to obtain $G'$.\;

\lIf{$x$ is unreachable in $G$}{\KwRet $(G', D,\delta, \mathit{mark}, \mathit{low}, \mathit{high})$}
\ElseIf{$y$ is unreachable in $G$}{
$(D', \delta', \mathit{mark}', \mathit{low}', \mathit{high}') \leftarrow \mathsf{Initialize}(G')$\;
\KwRet $(G', D', \delta', \mathit{mark}', \mathit{low}', \mathit{high}')$
}

Compute the nearest common ancestor $z$ of $x$ and $y$ in $D$.\;

Compute the updated dominator tree $D'$ of $G'$ and return a list $A$ of the affected vertices.\;

\lForEach{vertex $v \in A$}{$\mathit{mark}'(y) \leftarrow \mathit{false}$}
\lIf{$z=x$}{$\mathit{mark}'(y) \leftarrow \mathit{true}$}

Execute \textsf{DerivedLowHigh$(z, A, \mathit{mark}')$}.\;

Make a dfs traversal of the subtrees of $D'$ rooted at each vertex $v \in A \cup \{c\}$ to compute $\delta'$. \;

\ForEach{vertex $v \in A \cup \{c\}$}{
find edges $(u,v)$ and $(w,v)$ such that $u <_{\delta'} v <_{\delta'} w$ and $w \not\in D'(v)$\;
set $\mathit{low}'(v) \leftarrow u$ and $\mathit{high}'(v) \leftarrow w$}

\KwRet $(G', D', \delta', \mathit{mark}', \mathit{low}', \mathit{high}')$

\caption{\textsf{InsertEdge}$(G, D, \delta, \mathit{mark}, \mathit{low}, \mathit{high}, e)$}

\end{algorithm}

Our main task now is to order the affected vertices according to a low-high order of $D'$.
To do this, we use an auxiliary flow graph $G_A=(V_A,E_A,z)$, with start vertex $z$, which
we refer to as the \emph{derived affected flow graph}.
Flow graph $G_A$ is essentially a contracted version of the derived flow graph $G'_z$ (i.e., the derived graph of $z$ after the insertion) as we explain later.
The vertices of the derived affected flow graph $G_A$ are $z$, the affected vertices
of $G$, their common ancestor $c$ in $D$ that is a child of $z$ (from Lemma \ref{lemma:affected-child}), and two auxiliary vertices $\alpha^{\ast}$ and $\beta^{\ast}$.
Vertex $\alpha^{\ast}$ (resp., $\beta^{\ast}$) represents vertices in $C(z)$ with lower (resp., higher) order in $\delta$ than $c$.
We include in $G_A$ the edges $(z,\alpha^{\ast})$ and $(z,\beta^{\ast})$. If $c$ is marked then we include the edge $(z,c)$ into $G_A$,
otherwise we add the edges $(\alpha^{\ast}, c)$ and $(\beta^{\ast},c)$ into $G_A$. Also, for each edge $(u,c)$ such that $u$
is a descendant of an affected vertex $v$, we add in $G_A$ the edge $(v,c)$.
Now we
specify the edges that enter an affected vertex $w$ in $G_A$.
We consider each edge $(u,w) \in E$ entering $w$ in $G$.
We have the following cases:
\begin{itemize}
\item[(a)] If $u$ is a descendant of an affected vertex $v$, we add in $G_A$ the edge $(v,w)$.
\item[(b)] If $u$ is a descendant of $c$ but not a descendant of an affected vertex, then we add in $G_A$ the edge $(c,w)$.
\item[(c)] If $u \not= z$ is not a descendant of $c$, then we add the edge $(\alpha^{\ast},w)$ if $u <_{\delta} c$, or the edge $(\beta^{\ast},w)$ if $c <_{\delta} u$.
\item[(d)] Finally, if $u = z$, then we add the edge $(z,w)$. (In cases (c) and (d), $u=x$ and $w=y$.)
\end{itemize}
See Figure \ref{figure:derived}.
Recall that $\alpha$ (resp., $\beta$) is the siblings of $c$ in $D$ immediately before (resp., after) $c$ in $\delta$, if it exists.
Then, we can obtain $G_A$ from $G'_z$ by contracting
all vertices $v$ with $v <_{\delta} c$ into $\alpha=\alpha^{\ast}$, and all vertices $v$ with $c <_{\delta} v$ into $\beta=\beta^{\ast}$.

\begin{figure}[t!]
\begin{center}
\centerline{\includegraphics[trim={0 0 0 11cm}, clip=true, width=\textwidth]{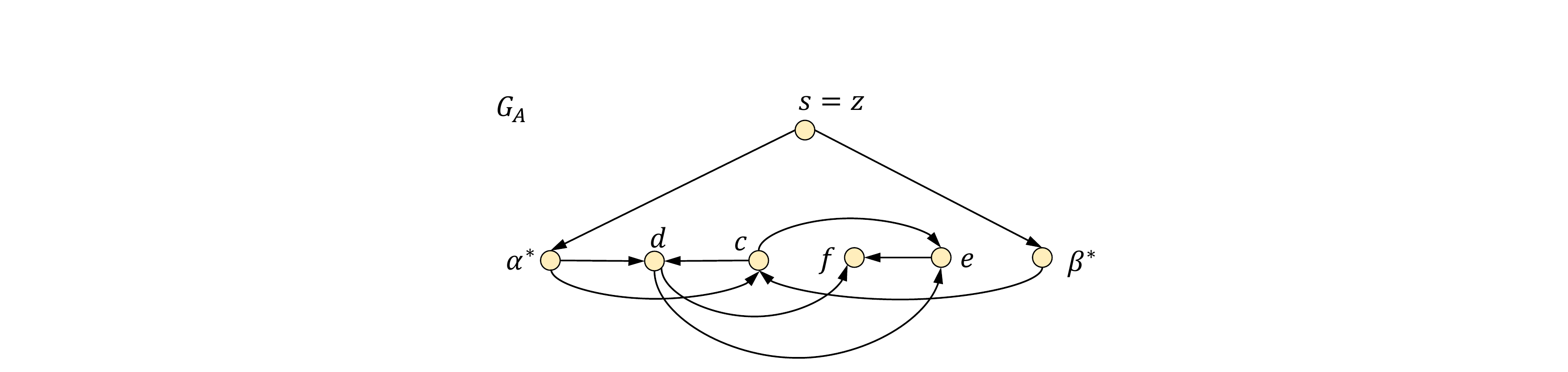}}
\caption{The derived affected flow graph $G_A$ that corresponds to the flow graph of Figure \ref{figure:lowhigh} after the insertion of edge $(g,d)$.}
\label{figure:derived}
\end{center}
\end{figure}

\begin{lemma}
\label{lemma:derived-affected-graph-flat}
The derived affected flow graph $G_A=(V_A, E_A, z)$ has flat dominator tree.
\end{lemma}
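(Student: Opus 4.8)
The plan is to show that $z$ is the only proper dominator in $G_A$ of every other vertex, i.e., for every $v \in V_A \setminus \{z\}$ there is a path from $z$ to $v$ in $G_A$ that avoids any prescribed third vertex $u \in V_A \setminus \{z, v\}$. I would proceed by a direct case analysis over the vertices of $G_A$, which are $z$, the two auxiliary vertices $\alpha^{\ast}, \beta^{\ast}$, the child $c$ of $z$, and the affected vertices. First, I would observe that $(z,\alpha^{\ast})$ and $(z,\beta^{\ast})$ are in $E_A$, so $\alpha^{\ast}$ and $\beta^{\ast}$ are children of $z$ in the dominator tree of $G_A$ and are dominated only by $z$. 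Next, $c$ is reachable from $z$ directly if $c$ is marked (edge $(z,c)$), and otherwise via $(z,\alpha^{\ast})$ followed by $(\alpha^{\ast},c)$ or via $(z,\beta^{\ast})$ followed by $(\beta^{\ast},c)$; since both $(\alpha^{\ast},c)$ and $(\beta^{\ast},c)$ are present in the unmarked case, $c$ has two vertex-disjoint (apart from $z$) paths from $z$, so no vertex other than $z$ dominates $c$.

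The main work is to handle an arbitrary affected vertex $w$. I would use Lemma \ref{lemma:insert-affected}: since $w$ is affected, there is a path $\pi_{yw}$ in $G'$ from $y$ to $w$ all of whose internal vertices have depth greater than $\mathit{depth}(d(w))$, and by Lemma \ref{lemma:scanned-vertices} every vertex on this path is scanned and hence a descendant in $D$ of some affected vertex; moreover, by Lemma \ref{lemma:affected-child} all affected vertices lie in $D(c)$. I would translate $\pi_{yw}$, edge by edge, into a path in $G_A$: each edge $(u', w')$ of $\pi_{yw}$ has $u'$ a descendant of some affected vertex $v'$ (or, at the very start, $u'$ could be $y$'s predecessor coming from the inserted edge, in which case case (c) or (d) applies with $u=x$, $w'=y$), so by rule (a)–(d) of the construction there is a corresponding edge into $w'$ in $G_A$ from $v'$ (or from $z$, or from $\alpha^{\ast}/\beta^{\ast}$). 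Collapsing consecutive edges that map to the same affected source vertex, this yields a walk in $G_A$ from $z$ (or from $\alpha^{\ast}/\beta^{\ast}$, each reachable from $z$) to $w$ whose interior consists only of affected vertices together possibly with $c$; extracting a simple path gives a $z$–$w$ path in $G_A$. To get flatness I must exhibit two such paths from $z$ to $w$ that share only $z$, or equivalently argue no single intermediate vertex is unavoidable. Here I would exploit that $w$ is affected: one route reaches $w$ by first going to $\alpha^{\ast}$ (or directly to $z$'s child $c$), and the construction guarantees, via the contraction of $G'_z$ described before the lemma, that the incoming edges of $w$ in $G'_z$ — which by flatness of $G'$ restricted to $z$ include edges whose derived sources lie on "both sides" of $c$ — are preserved under the contraction; concretely, since $w$ is not dominated by any child of $z$ in $G'$ (flat dominator tree of $G'_z$, a known property of derived flow graphs), the two divergent incoming routes survive contraction into edges entering $w$ from two distinct vertices among $\{z, \alpha^{\ast}, \beta^{\ast}, c, \text{affected vertices}\}$, and recursing on those sources (which are "earlier" in a topological sense toward $z$) produces the two $z$-disjoint paths.

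The cleanest way to package the last step, and the one I would actually write, is: prove that $G_A$ is exactly the derived flow graph of $z$ in the contracted graph obtained from $G'$ by identifying all of $C(z)$ with lower $\delta$-order than $c$ to a single vertex $\alpha^{\ast}$ and all with higher order to $\beta^{\ast}$ — this identity is already asserted in the paragraph preceding the lemma ("we can obtain $G_A$ from $G'_z$ by contracting …") — and then invoke the standard fact that a derived flow graph always has a flat dominator tree (stated in Section "Derived edges and derived flow graphs"), since contracting sets of vertices that are siblings in $D'$ and all strictly below the same child $c$ does not create new dominators among $\{z, \alpha^{\ast}, \beta^{\ast}\} \cup C(z) \cup (\text{affected})$. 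The one point requiring care is that $c$ may be either marked or unmarked, changing whether $(z,c)$ or $\{(\alpha^{\ast},c),(\beta^{\ast},c)\}$ appears; I would check both cases explicitly to confirm $c$ still has two internally-disjoint paths from $z$. The main obstacle, then, is making the "$G_A$ equals a derived flow graph of a contraction of $G'$" claim fully rigorous — in particular verifying that the edge rules (a)–(d) reproduce precisely the derived edges of the contracted graph, including the subtle rule adding $(v,c)$ for edges $(u,c)$ with $u$ a descendant of an affected vertex $v$ — after which flatness follows immediately from the general property of derived flow graphs.
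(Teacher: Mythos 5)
Your handling of $\alpha^{\ast}$, $\beta^{\ast}$, and $c$ matches the paper and is fine, and the idea of viewing $G_A$ as a contraction of the derived flow graph $G'_z$ (which is flat by the general property of derived graphs) is a legitimate way to dispose of one part of the hard case: since contraction only adds paths, no \emph{non-contracted} vertex ($c$ or an affected vertex) can become a dominator of another affected vertex if it was not one in $G'_z$. Had you carried that through (together with a verification that rules (a)--(d) really produce the contracted derived edges), it would arguably be cleaner than the paper's argument, which instead re-derives non-domination among affected vertices directly from Lemma \ref{lemma:insert-affected} and Lemma \ref{lemma:simple}.

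The genuine gap is in your treatment of the contracted super-vertices. Your claim that the contraction ``does not create new dominators'' is exactly the point at issue, and it is false for contractions in general: if every $z$-to-$w$ path in $G'_z$ uses \emph{some} sibling $v$ with $v <_{\delta} c$ (a different $v$ for different paths), then after identifying all such siblings, $\alpha^{\ast}$ dominates $w$ even though no single $v$ did. Flatness of $G'_z$ only rules out single-vertex separators, not separators consisting of a whole contracted class, so ``the two divergent incoming routes survive contraction'' does not follow --- both routes may enter $w$ through vertices that get merged into the same super-vertex. (The subsequent step, ``recursing on those sources produces the two $z$-disjoint paths,'' is also not a valid inference: two distinct in-neighbours do not by themselves yield two internally disjoint paths from the root.) What actually closes this hole, and what the paper proves, is that every affected $w$ lies in $D(c)$ (Lemma \ref{lemma:affected-child}) and therefore, by Lemma \ref{lemma:simple}, admits a path in $G$ from $c$ to $w$ that stays entirely inside $D(c)$; under rules (a)--(b) this becomes a path in $G_A$ from $c$ to $w$ through $\{c\}\cup A$ only, which, prefixed by $(z,c)$ or by $(z,\beta^{\ast}),(\beta^{\ast},c)$, avoids $\alpha^{\ast}$ (and symmetrically $\beta^{\ast}$). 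Without that construction --- which appears nowhere in your proposal --- neither of your two routes establishes that $\alpha^{\ast}$ and $\beta^{\ast}$ fail to dominate the affected vertices, so the proof is incomplete.
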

\begin{proof}
We claim that for any two distinct vertices $v, w \in V_A \setminus z$, $v$ does not dominate $w$. The lemma follows immediately from this claim.
The claim is obvious for $w \in \{\alpha^\ast, \beta^\ast \}$, since
$G_A$ contains the edges $(z,\alpha^\ast)$ and $(z,\beta^\ast)$.
The same holds for $w = c$, since $G_A$ contains the edge $(z,c)$, or both the edges $(\alpha^\ast, c)$ and $(\beta^\ast, c)$.
Finally, suppose $w \in V_A \setminus \{z, \alpha^{\ast}, \beta^{\ast}\}$. Then, by the construction of $G_A$, vertex $w$ is affected.
By Lemma \ref{lemma:affected-child}, $w \in D(c)$, so Lemma \ref{lemma:simple} implies that there is a path in $G$ from
$c$ to $w$ that contains only vertices in $D(c)$.
Hence, by construction, $G_A$ contains a path from $c$ to $w$ that avoids $\alpha^{\ast}$ and $\beta^{\ast}$, so
$\alpha^{\ast}$ and $\beta^{\ast}$ do not dominate $w$.
It remains to show that $w$ is not dominated in $G_A$ by $c$ or another affected vertex $v$.
Let $(x,y)$ be the inserted edge. Without loss of generality, assume that $c <_{\delta} x$.
Since $w$ is affected, there is a path $\pi$ in $G$ from $y$ to $w$
that satisfies Lemma \ref{lemma:insert-affected}. Then $\pi$ does not contain any vertex in $D[c,d(w)]$.
Also, by the construction of $G_A$, $\pi$ corresponds to a path $\pi_A$ in $G_A$ from $\beta^{\ast}$ to $y$ that avoids
any vertex in $A \cap D[c,d(w)]$.  Hence, $w$ is not dominated by any vertex in $A \cap D[c,d(w)]$. It remains to show that
$w$ is not dominated by any affected vertex $v$ in $A \setminus D[c,d(w)]$. Since both $v$ and $w$ are in $D(c)$ and
$v$ is not an ancestor of $w$ in $D$, there is a path $\pi'$ in $G$ from $c$ to $w$ that avoids $v$. By Lemma \ref{lemma:simple},
$\pi'$ contains only vertices in $D(c)$. Then, by the construction of $G_A$, $\pi'$ corresponds to a path $\pi'_A$ in $G_A$ from $c$ to $w$ that avoids
$v$. Thus, $v$ does not dominate $w$ in $G_A$.
\end{proof}

\begin{lemma}
\label{lemma:derived-affected-graph}
Let $\nu$ and $\mu$, respectively, be the number of scanned vertices and their adjacent edges.
Then, the derived affected flow graph $G_A$ has $\nu+4$ vertices, at most $\mu+5$ edges, and can be
constructed in $O(\nu+\mu)$ time.
\end{lemma}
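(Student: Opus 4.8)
The plan is to establish the three assertions — vertex count, edge count, and construction time — essentially by bookkeeping against the definition of $G_A$ given just above the statement, using Lemma \ref{lemma:scanned-vertices} (every scanned vertex is a descendant of an affected vertex) to relate the affected vertices and their defining data to the set of scanned vertices and edges.

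For the \textbf{vertex count}, I would first observe that $V_A$ consists of $z$, the vertex $c$, the two auxiliary vertices $\alpha^{\ast}, \beta^{\ast}$, and the affected vertices $A$. Since every affected vertex is scanned (it is its own nearest affected ancestor, and the search scans it while looking for further affected descendants), and conversely not every scanned vertex need be affected, we have $|A| \le \nu$. The subtle point I would address is why the count is \emph{exactly} $\nu + 4$ rather than at most: the intended reading is that $\nu$ is defined so that the affected vertices together with the remaining scanned (but unaffected) descendants are counted, and the construction effectively contracts each maximal scanned-but-unaffected region into its nearest affected ancestor; but I would instead argue the clean bound $|V_A| \le \nu + 4$ and note that $z$, $c$, $\alpha^\ast$, $\beta^\ast$ are the four non-affected vertices, matching the claimed figure when the $\nu$ scanned vertices coincide with $A$ plus what is absorbed. (If the exact equality is what is meant, one simply checks that $c$ is always present and distinct from the affected vertices by Lemma \ref{lemma:affected-child}, and $\alpha^\ast,\beta^\ast,z$ are always added.)

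For the \textbf{edge count}, I would go through the construction clause by clause. The five ``structural'' edges are $(z,\alpha^\ast)$, $(z,\beta^\ast)$, and then either $(z,c)$ or the pair $(\alpha^\ast,c),(\beta^\ast,c)$ — at most three of these latter — giving the additive ``$+5$''. Every remaining edge of $G_A$ is the image of some edge $(u,w)\in E$ with $w\in A$ or $w=c$, and $u$ on a path witnessing Lemma \ref{lemma:insert-affected}, hence $u$ lies in $D(v)$ for an affected $v$ and is therefore scanned (by Lemma \ref{lemma:scanned-vertices} together with the depth condition, $u$ is visited and, being below an affected vertex, scanned). Thus each such edge of $G_A$ is charged to a distinct adjacent edge of a scanned vertex, and the map $(u,w)\mapsto$ its derived/contracted image is at worst not injective in a way that only decreases the count; so the number of non-structural edges is at most $\mu$, yielding $|E_A| \le \mu + 5$. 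The one case needing care is the inserted edge $(x,y)$ itself (cases (c) and (d)), which contributes one edge $(\alpha^\ast,y)$, $(\beta^\ast,y)$, or $(z,y)$; since $x$ is scanned only if it happens to lie below an affected vertex, I would simply absorb this single edge into the additive constant if necessary, or note $x$ is scanned because the search starts effectively from the edge $(x,y)$ — either way it is $O(1)$ extra and does not break the bound.

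For the \textbf{construction time}, the key is that all the information needed is produced as a byproduct of the \textsf{DBS} search of \cite{dyndom:2012}, which itself runs in $O(\nu+\mu+n)$ but whose ``new work'' beyond maintaining $D$ is $O(\nu+\mu)$. Concretely: during the search we know $A$, and for each scanned vertex we know its nearest affected ancestor (maintained incrementally as the search descends), so for each scanned edge $(u,w)$ we can in $O(1)$ determine which of cases (a)–(d) applies and which endpoint in $G_A$ to use; the comparisons ``$u <_\delta c$'' in case (c) are single lookups in the stored low-high order. Adding the $O(1)$ structural edges and the $O(1)$ auxiliary vertices is negligible. The only step I would flag as the potential obstacle is \textbf{verifying that ``nearest affected ancestor of a scanned vertex'' is available in $O(1)$}: this requires either that the \textsf{DBS} search maintains it along the DFS/BFS stack, or a separate $O(\nu+\mu)$-time pass over the scanned region after $D'$ is known; I would argue the former, since when the search scans a vertex it has descended from a known affected vertex and can propagate that label downward at no asymptotic cost. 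With that in hand, the total is $O(\nu+\mu)$, completing the proof.
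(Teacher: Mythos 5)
Your proposal is correct and its counting arguments (the $+4$ vertices, the $+5$ structural/inserted edges, charging every remaining edge of $G_A$ to a distinct edge adjacent to a scanned vertex via Lemma \ref{lemma:scanned-vertices}) match the paper, which in fact dismisses the vertex and edge bounds with one sentence; your observation that the vertex count is really ``at most $\nu+4$'' (since $|A|\le\nu$) is a fair reading of what the paper intends. The one place where you take a genuinely different route is the step you yourself flag: how to determine, for each relevant edge $(u,w)$, the nearest affected ancestor of $u$ in $O(1)$. You propose propagating that label downward \emph{during} the \textsf{DBS} search. The paper instead does this after $D'$ is known, in two post-hoc passes: first it traverses the $D'$-subtree of each affected vertex $q$ and, for every edge $(u,w)$ leaving a descendant $u$ with $w$ affected, inserts $(q,w)$; then it scans the incoming edges of each affected $w$ and maps every source $u$ that was \emph{not} visited in the first pass to $c$ (justified by a short case analysis showing that an unscanned source of such an edge must have $c$ as its nearest ancestor in $C'(z)$, by the parent property and Lemma \ref{lemma:affected-child}). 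The paper's version buys independence from the internals of \textsf{DBS} and makes case (b) of the construction fall out automatically, at the cost of tolerating duplicate edge insertions (which it notes are harmless); your version avoids the second pass but requires modifying the search. Both yield $O(\nu+\mu)$, so this is an implementation-level divergence rather than a gap.
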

\begin{proof}
The bound on the number of vertices and edges in $G_A$ follows from the definition of the derived affected flow graph.
Next, we consider the construction time of $G_A$.
Consider the edges entering the affected vertices. Let $w$ be an affected vertex, and let $(u,w) \not= (x,y)$ be
an edge of $G'$.
Let $q$ be nearest ancestor $u$ in $C'(z)$.
We distinguish two cases:
\begin{itemize}
\item $u$ is not scanned. In this case, we argue that $q=c$. Indeed, it follows from the parent property of $D$ and
Lemma \ref{lemma:affected-child} that both $u$ and $w$ are descendants of $c$ in $D$. Since $u$ is not scanned, no ancestor of
$u$ in $D$ is affected, so $u$ remains a descendant of $c$ in $D'$. Thus, $q=c$.
\item $u$ is scanned. Then, by Lemma \ref{lemma:scanned-vertices}, $q$ is the nearest affected ancestor of $u$ in $D$.
\end{itemize}
So we can construct the edges entering the affected vertices in $G_A$ in two phases. In the first phase we traverse the descendants of each affected vertex $q$ in $D'$.
At each descendant $u$ of $q$, we examine the edges leaving $u$.  When we find an edge $(u,w)$ with $w$ affected, then we insert into $G_A$ the edge
$(q,w)$. In the second phase we examine the edges entering each affected vertex $w$. When we find an edge $(u,w)$ with $u$ not visited during the first phase (i.e., $u$ was not scanned during the update of $D$), we insert into $G_A$ the edge $(c,w)$. Note that during this construction we may insert the same edge multiple times, but this does not affect the correctness or running time of our overall algorithm.
Since the descendants of an affected vertex are scanned, it follows that each phase runs in $O(\nu + \mu)$ time.

Finally, we need to consider the inserted edge $(x,y)$. Let $f$ be the nearest ancestor of $x$ that is in $C(z)$.
Since $y$ is affected, $c \not= f$. Hence, we insert into $G_A$ the edge $(\beta^{\ast},y)$ if $c <_{\delta} f$,
and the edge $(\alpha^{\ast},y)$ if $f <_{\delta} c$. Note that $f$ is found during the computation of $z=\mathit{nca}(x,y)$,
so this test takes constant time.
\end{proof}

We use algorithm \textsf{DerivedLowHigh}, shown below,
to order the vertices in $C'(z)$ according to
a low-high order of $\zeta$ of $G_A$.
After computing $G_A$, we construct two divergent spanning trees $B_A$ and $R_A$ of $G_A$.
For each vertex $v \not= z$, if $(z, v)$ is an edge of $G_A$,
we replace the parent of $v$ in $B_A$ and in $R_A$, denoted by $b_A(v)$ and $r_A(v)$, respectively, by $z$.
Then we use algorithm \textsf{AuxiliaryLowHigh} to compute a low-high order $\zeta$ of $G_A$.
Algorithm \textsf{AuxiliaryLowHigh} is
a slightly modified version of a linear-time algorithm of \cite[Section 6.1]{DomCert:TALG} to compute a low-high order.
Our modified version computes a low-high order $\zeta$ of $G_A$ that is an augmentation of $\delta_z$.
To obtain such a low-high order, we need to assign
to $\alpha^{\ast}$ the lowest number in $\zeta$ and to
$\beta^{\ast}$ the highest number in $\zeta$.
The algorithm works as follows.
While $G_A$ contains at least four vertices, we choose a vertex $v \not \in \{\alpha^{\ast}, \beta^{\ast} \}$  whose in-degree in $G_A$ exceeds its number of children in $B_A$ plus its number of children in $R_A$ and remove it from $G_A$.
(From this choice of $v$ we also have that $v \not= z$.)
Then we compute recursively a low-high order for the resulting flow graph, and insert $v$ in an appropriate location, defined by $b_A(v)$ and $r_A(v)$.

\begin{algorithm}[h!]

\LinesNumbered
\DontPrintSemicolon

Compute the derived affected flow graph $G_A= (V_A, E_A, z)$.\;

Compute two divergent spanning trees $B_A$ and $R_A$ of $G_A$.\;

\ForEach{vertex $v \in V_A \setminus \{z, \alpha^{\ast}, \beta^{\ast} \}$}{
\lIf{$\mathit{mark}(v) = \mathit{true}$}{set $b_A(v) \leftarrow z$ and $r_A(v) \leftarrow z$}}

Initialize a list of vertices $\Lambda \leftarrow \emptyset$.\;

Compute $\Lambda \leftarrow \mathsf{AuxiliaryLowHigh}(G_A,B_A,R_A,\Lambda)$.\;

Order the set of children $C'(z)$ of $z$ in $D'$ according to $\Lambda$.

\caption{\textsf{DerivedLowHigh}$(z,A,\mathit{mark})$}

\end{algorithm}

\begin{algorithm}[h!]
\LinesNumbered
\DontPrintSemicolon

\If{$G_A$ contains only three vertices}
{
set $\Lambda \leftarrow \langle \alpha^{\ast}, \beta^{\ast} \rangle$\; 
\KwRet $\Lambda$\;
}

Let $v \not\in \{\alpha^{\ast}, \beta^{\ast}\}$ be a vertex whose in-degree in $G_A$ exceeds its number of children in $B_A$ plus its number of children in $R_A$.\;

Delete $v$ and its incoming edges from $G_A$, $B_A$, and $R_A$.\;

\If{$v$ was not a leaf in $B_A$}
{
let $w$ be the child of $v$ in $B_A$; replace $b_A(w)$ by $b_A(v)$
}
\ElseIf{$v$ was not a leaf in $R_A$}
{
let $w$ be the child of $v$ in $R_A$; and replace $r_A(w)$ by $r_A(v)$
}

Call \textsf{AuxiliaryLowHigh}$(G_A,B_A,R_A,\Lambda)$ recursively for the new graph $G_A$.

\If{$b_A(v) = z$}
{
insert $v$ anywhere between $\alpha^{\ast}$ and $\beta^{\ast}$ in $\Lambda$
}
\Else
{
insert $v$ just before $b_A(v)$ in $\Lambda$ if $r_A(v)$ is before $b_A(v)$ in $\Lambda$, just after $b_A(v)$ otherwise
}
\KwRet $\Lambda$\;

\caption{\textsf{AuxiliaryLowHigh}$(G_A,B_A,R_A,\Lambda)$}
\end{algorithm}

\begin{lemma}
\label{lemma:auxiliary-low-high}
Algorithm \textsf{AuxiliaryLowHigh} is correct, that is, it computes a low-high order $\zeta$ of $G_A$, such that for all $v \in V_A \setminus \{z, \alpha^{\ast}, \beta^{\ast}\}$,
$\alpha^{\ast} <_{\zeta} v <_{\zeta} \beta^{\ast}$.
\end{lemma}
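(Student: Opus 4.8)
The plan is to prove the lemma by induction on $|V_A|$, following the structure of the static low-high algorithm of \cite[Section~6.1]{DomCert:TALG} and adding the bookkeeping needed to keep $\alpha^{\ast}$ and $\beta^{\ast}$ at the two ends of the computed order $\zeta$ (equivalently, of the list $\Lambda$ returned by the algorithm). The invariant I would carry through the recursion is: $G_A$ has flat dominator tree with root $z$, $B_A$ and $R_A$ are divergent spanning trees of the current $G_A$ (so $B_A[z,v]\cap R_A[z,v]=\{z,v\}$ for every $v\neq z$), $(z,\alpha^{\ast})$ and $(z,\beta^{\ast})$ are edges of $G_A$, and $\alpha^{\ast},\beta^{\ast}$ are children of $z$ in both $B_A$ and $R_A$. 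All of this holds at the start, by Lemma~\ref{lemma:derived-affected-graph-flat}, by the construction of $G_A$, and by the parent-replacement step performed in \textsf{DerivedLowHigh} (which, moreover, only ever resets a tree parent to $z$ for a vertex $v$ with $(z,v)\in E_A$, hence keeps all tree edges genuine). For the base case, $|V_A|=3$ forces $V_A=\{z,\alpha^{\ast},\beta^{\ast}\}$, and since both non-root vertices carry the edge from $z=d(\cdot)$, the order $\langle\alpha^{\ast},\beta^{\ast}\rangle$ is a low-high order of $G_A$, while $\alpha^{\ast}<_{\zeta}v<_{\zeta}\beta^{\ast}$ holds vacuously.

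The first part of the inductive step, and the one I expect to require the most new work, is to show that the choice made by the algorithm is always possible: when $|V_A|\geq 4$ there is a vertex $v\in V_A\setminus\{z,\alpha^{\ast},\beta^{\ast}\}$ whose in-degree in $G_A$ exceeds its number of children in $B_A$ plus its number of children in $R_A$. Here I would use that, by the construction of $G_A$, no edge enters $z$ and the only edge entering $\alpha^{\ast}$ (resp.\ $\beta^{\ast}$) is $(z,\alpha^{\ast})$ (resp.\ $(z,\beta^{\ast})$); and that, by the path-disjointness invariant, the common edges of $B_A$ and $R_A$ all leave $z$, so $|E_A|\geq 2(|V_A|-1)-|E(B_A)\cap E(R_A)|$ and $|E(B_A)\cap E(R_A)|\leq c_{B_A}(z)$, where $c_{B_A}(z)$ denotes the number of children of $z$ in $B_A$ (and analogously for the other subscripts below). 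Summing in-degrees over $V_A\setminus\{z\}$ and comparing them with the total number of tree children, a short computation shows that if every $v\in V_A\setminus\{z,\alpha^{\ast},\beta^{\ast}\}$ had in-degree at most $c_{B_A}(v)+c_{R_A}(v)$, then $z$ would have exactly the two children $\alpha^{\ast},\beta^{\ast}$ in $R_A$ and both of these would be leaves of $R_A$, i.e.\ $R_A$ would span only $\{z,\alpha^{\ast},\beta^{\ast}\}$, contradicting $|V_A|\geq 4$. Hence such a $v$ exists, and since $z$ has in-degree $0$ while having at least two children, $v\neq z$, as asserted in the algorithm.

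Next I would argue that deleting $v$ and carrying out the prescribed parent reassignment yields a smaller flow graph that again satisfies the invariant; this is precisely the argument of \cite[Section~6.1]{DomCert:TALG}, the only addition being the remark that $v\notin\{z,\alpha^{\ast},\beta^{\ast}\}$ and $\alpha^{\ast},\beta^{\ast}$ are children of $z$, so the reduction leaves $(z,\alpha^{\ast})$, $(z,\beta^{\ast})$ and the tree parents of $\alpha^{\ast},\beta^{\ast}$ intact. By the induction hypothesis, the recursive call returns an order $\Lambda$ that is low-high for the reduced $G_A$ with $\alpha^{\ast}$ first and $\beta^{\ast}$ last. It remains to handle the reinsertion of $v$. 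Inserting $v$ into a gap of $\Lambda$ does not change the relative order of any two already placed vertices; since $G_A$ has flat dominator tree, their low-high certificates depend only on that relative order together with edges of $G_A$, so all of them stay low-high. For $v$ itself: if $b_A(v)=z$ then $(z,v)=(d(v),v)$ is an edge of $G_A$, so $v$ is low-high wherever it is placed; if $b_A(v)\neq z$ then $(b_A(v),v)$ and $(r_A(v),v)$ are edges of $G_A$, and placing $v$ next to $b_A(v)$ on the side opposite $r_A(v)$ --- exactly what the algorithm does --- makes one of these two vertices a smaller and the other a larger neighbour of $v$ in $\Lambda$, and since $D_{G_A}$ is flat neither is a descendant of $v$, so $v$ is low-high. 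Finally, the reinsertion never places $v$ before $\alpha^{\ast}$ or after $\beta^{\ast}$: if $b_A(v)=\alpha^{\ast}$ then $r_A(v)$ cannot precede $\alpha^{\ast}$ (which is first in $\Lambda$), so $v$ is inserted just after $\alpha^{\ast}$; symmetrically $b_A(v)=\beta^{\ast}$ inserts $v$ just before $\beta^{\ast}$; and if $b_A(v)$ is strictly interior, $v$ stays strictly interior. Hence $\Lambda$ is low-high for $G_A$ with $\alpha^{\ast}$ first and $\beta^{\ast}$ last, which completes the induction and gives $\alpha^{\ast}<_{\zeta}v<_{\zeta}\beta^{\ast}$ for every $v\in V_A\setminus\{z,\alpha^{\ast},\beta^{\ast}\}$.

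Besides the counting argument of the second paragraph, the step I expect to be most delicate is the preservation of the divergent-spanning-trees invariant under the deletion of $v$ and the accompanying parent reassignment --- in particular, that the reassigned tree edges remain edges of the reduced $G_A$. This is exactly the technical core of the static algorithm of \cite{DomCert:TALG}, and I would reuse that argument, noting only that our modification does not remove or rewire $z$, $\alpha^{\ast}$, $\beta^{\ast}$, or the edges $(z,\alpha^{\ast}),(z,\beta^{\ast})$, so it does not interfere with it.
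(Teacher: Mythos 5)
Your overall strategy---induction on $|V_A|$ with the flat-dominator-tree and divergent-spanning-tree invariants, a counting argument for the existence of a selectable vertex, and a reinsertion step---matches the paper's proof, and your degree-sum computation for the existence of a selectable vertex (exploiting that $\alpha^{\ast}$ and $\beta^{\ast}$ have in-degree one and are children of $z$ in both trees) is a valid alternative to the paper's argument, which instead counts edges entering and leaving the set $X$ of vertices with two distinct tree parents. However, there is a genuine gap in your reinsertion step. You claim that every already-placed vertex keeps the low-high property because ``their low-high certificates depend only on that relative order together with edges of $G_A$.'' This fails for the vertex $w$ whose tree parent was reassigned when $v$ was deleted: the certificate that $w$ carries in the reduced graph uses the rewired edge $(b_A(v),w)$ (or $(r_A(v),w)$), which is an edge of the \emph{reduced} $G_A$ but in general \emph{not} an edge of the original $G_A$; in the original graph the only in-edges available to $w$ are $(v,w)$ and $(r_A(w),w)$. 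So the induction hypothesis does not directly certify $w$ in the old graph, and one must check that $w$ ends up strictly between $r_A(w)$ and $v$ in the final list. This is precisely why the algorithm inserts $v$ \emph{adjacent} to $b_A(v)$ rather than in any position that would certify $v$ alone: since $w$ lies strictly between $r_A(w)$ and $b_A(v)$ in $\Lambda$ by the induction hypothesis, placing $v$ immediately next to $b_A(v)$ keeps $w$ strictly between $r_A(w)$ and $v$, restoring $w$'s certificate via the original edge $(v,w)$. Your proof justifies the insertion position only from $v$'s point of view and never re-certifies $w$, so this step is missing.

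A second, smaller omission: your stated invariant does not include the property that for every $v \neq z$ either $b_A(v)=r_A(v)=z$ or $b_A(v)$, $r_A(v)$, $z$ are pairwise distinct (invariant (iii) in the paper's proof). You use it implicitly when you assert that, for $b_A(v)\neq z$, the vertex $r_A(v)$ is a distinct vertex present in $\Lambda$ on one side of $b_A(v)$; without it the insertion rule is not even well defined (e.g.\ one could have $r_A(v)=z\notin\Lambda$). The property does hold initially, by flatness, divergence and the parent-replacement step in \textsf{DerivedLowHigh}, and is preserved by the reduction, but it does not follow from the invariants you listed and must be carried explicitly through the induction.
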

\begin{proof}
We first show that algorithm \textsf{AuxiliaryLowHigh} runs to completion, i.e., it selects every vertex $v \in V_A \setminus \{z, \alpha^{\ast}, \beta^{\ast}\}$ at some execution of line 5.
The recursive call in line 13 invokes algorithm \textsf{AuxiliaryLowHigh} on a sequence of smaller flow graphs $G_A$.
We claim that the following invariants hold for each such flow graph $G_A$:
\begin{itemize}
\item[(i)] the dominator tree $D_A$ of $G_A$ is flat;
\item[(ii)] the subgraphs $B_A$ and $R_A$ corresponding to $G_A$ are divergent spanning trees of $G_A$ rooted at $z$;
\item[(iii)] for every $v \not= z$, either $b_A(v) = r_A(v) = z$ or $b_A(v)$, $r_A(v)$, and $z$ are all distinct.
\end{itemize}
For the initial graph $G_A$ the invariants hold by construction.
Assume that the invariants hold on entry to line 5.
Suppose, now, that line 5 chooses a vertex $v \not\in \{\alpha^{\ast},\beta^{\ast}\}$. Since $v$ has in-degree at most $2$ in $G_A$, the choice of $v$ implies that it has at most one outgoing edge.
Hence $v$ is a leaf in either $B_A$ or $R_A$. If it is a leaf in both, deleting $v$ and its incoming edges preserves all the invariants.
Suppose $v$ is a leaf in $R_A$ but not $B_A$. Then $v$ has in-degree $2$ in $G_A$; that is, $b_A(v) \not= r_A(v)$, which implies by (iii) that $b_A(v)$, $r_A(v)$, and $v$ are distinct siblings in $D_A$.
Let $w$ be the child of $v$ in $B_A$.  Since $r_A(w) \not= v$, $v$, $r_A(w)$, and $z$ are distinct by (iii).
Also  $r_A(w) \not= b_A(v)$, since $r_A(w) = b_A(v)$ would imply that $r_A(w)$ dominates $w$ by (ii). Finally, $b_A(v)  \not= z$, since $b_A(v)$ is a sibling of $v$ and hence of $w$ in $D_A$.  We conclude that replacing $b_A(w)$ by $b_A(v)$ in line 8 preserves (iii).  This replacement preserves (i) since $v$ does not dominate $w$, it preserves (ii) since it removes $v$ from the path in $B_A$ from $s$ to $w$.  Replacing $b_A(w)$ makes $v$ a leaf in $B_A$, after which its deletion preserves (i)-(iii). 

Now we show that the invariants imply that line 5 can always choose a vertex $v$.
All vertices in $V_A \setminus z$ are leaves in $D_A$.
Let $X$ be the subset of $V_A$ that consists of the vertices $x$ such that $b_A(x) \not= r_A(x)$.
Each vertex in $X$ has in-degree $2$ in $G_A$, so there are $2|X|$ edges that enter a vertex in $X$.
By invariant (iii), each edge leaving a vertex in $X$ enters a vertex in $X$.
Invariant (iii) also implies that at least two edges enter $X$ from $V_A \setminus X$.
Hence, there are at most $2(|X|-1)$ edges that leave a vertex in $X$,
so there must be a vertex $v$ in $X$ with out-degree at most $1$.
We claim that $v$ can be selected in line 5. First note that the in-degree of $v$ in $G_A$ exceeds its out-degree in $G_A$.
If $v$ is a leaf in both $B_A$ and $R_A$ then it can be selected. If not, then $v$ must be a leaf in either $B_A$ or $R_A$, since
otherwise its common child $w$ in $B_A$ and $R_A$ would violate (ii). Hence $v$ can be selected in this case also.

Finally, we claim that the computed order is low-high for $G_A$, such that $\alpha^{\ast}$ is first and $\beta^{\ast}$ is last in this order.
The latter follows by the assignment in line 2. So the claim is immediate if $G_A$ has three vertices.
Suppose, by induction, that this is true if $G_A$ has $k \ge 3$ vertices. Let $G_A$ have $k + 1$ vertices and let $v$ be the vertex chosen for deletion.
The insertion position of $v$ guarantees that $v$ has the low-high property. All vertices in $G_A$ after the deletion of $v$ have the low-high property in the new $G_A \setminus z$ by the induction hypothesis, so they have the low-high property in the old $G_A$ with the possible exception of $w$, one of whose incoming edges differs in the old and the new $G_A$. Suppose $b_A(w)$ differs; the argument is symmetric if $r_A(w)$ differs.
Now we have that $v$, $w$, $b_A(v)$, and $r_A(w)$ are distinct children of $z$ in $D_A$. Since $w$ has the low-high property in the new $G_A$, it occurs in $\Lambda$ between $r_A(w)$ and $b_A(v)$.  Insertion of $v$ next to $b_A(v)$ leaves $w$ between $r_A(w)$ and $v$, so it has the low-high property in the old $G_A$ as well.
\end{proof}

The correctness of algorithm \textsf{InsertEdge} follows from Lemmata \ref{lemma:unaffected}, \ref{lemma:unaffected-children} and \ref{lemma:auxiliary-low-high}.

\begin{lemma}
\label{lemma:insert-edge-correct}
Algorithm \textsf{InsertEdge} is correct.
\end{lemma}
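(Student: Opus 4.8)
The plan is to verify that \textsf{InsertEdge} correctly maintains all the data structures, by breaking the argument into three parts according to how a vertex is classified after the insertion of $(x,y)$: (1) vertices that are neither children of $z=\mathit{nca}(x,y)$ nor affected; (2) children of $z$ in $D$; and (3) affected vertices (together with their common ancestor $c$). First I would handle the degenerate cases at the top of the algorithm: if $x$ is unreachable in $G$, the insertion changes nothing about reachability or dominators, so returning the old structures is correct; if $y$ was unreachable, we recompute everything from scratch via \textsf{Initialize}, which is correct by the cited linear-time algorithms. So assume both $x$ and $y$ are reachable.

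For part (1), I would invoke Lemma \ref{lemma:unaffected}: any preorder $\delta'$ of $D'$ that agrees with $\delta$ is already a low-high order for every such vertex $v$, and since the dfs traversal in line 12 only rearranges the subtrees rooted at $A\cup\{c\}$ while preserving the relative order of all unaffected siblings elsewhere, the resulting $\delta'$ indeed agrees with $\delta$. Hence the $\mathit{mark}$, $\mathit{low}$, $\mathit{high}$ data for these vertices remains valid and need not be touched; I would note that the derived edges certifying $v$ survive the update by the same argument used in the proof of Lemma \ref{lemma:unaffected}. For part (3), the affected vertices all become children of $z$ in $D'$ by Lemma \ref{lemma:insert-affected}, and their order within $C'(z)$ is determined by \textsf{DerivedLowHigh}. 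Here I would argue that the low-high order $\zeta$ of $G_A$ produced by \textsf{AuxiliaryLowHigh} (correct by Lemma \ref{lemma:auxiliary-low-high}, with $\alpha^\ast$ first and $\beta^\ast$ last) is, after contracting $\alpha^\ast,\beta^\ast$ back, an augmentation of $\delta_z$ in the sense defined before Lemma \ref{lemma:unaffected-children}; the key point is that $G_A$ was constructed precisely so that it equals $G'_z$ with the low-/high-order vertices of $C(z)$ contracted into $\alpha=\alpha^\ast$ and $\beta=\beta^\ast$, so a low-high order of $G_A$ with $\alpha^\ast,\beta^\ast$ extremal pulls back to a low-high order for all affected vertices in $G'_z$, hence in $G'$. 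The $\mathit{mark}'$, $\mathit{low}'$, $\mathit{high}'$ values for $v\in A\cup\{c\}$ are then recomputed directly in lines 13--15 against the new $\delta'$, so they are correct by definition. The adjustment of $\mathit{mark}'$ in lines 9--10 (clearing it for all newly affected vertices, then setting it for $y$ only when $z=x$, i.e. when $(x,y)=(d'(y),y)$ is a tree edge of $D'$) is exactly what the invariant on $\mathit{mark}$ requires.

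For part (2), the children of $z$ in $D$ that are not affected, I would appeal to Lemma \ref{lemma:unaffected-children}: since $\delta'$ extends an augmentation $\delta'_z$ of $\delta_z$, it is a low-high order for every $v\in C(z)$ in $G'$. The one subtlety to check is that $c$ itself is handled: $c$ is a child of $z$ but its relevant incoming derived edges may now come from affected descendants, which is why $c$ is included in $A\cup\{c\}$ in lines 11--15 so that its $\mathit{low},\mathit{high}$ are refreshed and its subtree is re-traversed; I would confirm that this is consistent with $G_A$ containing the appropriate edges into $c$ (the edges $(z,c)$ or $(\alpha^\ast,c),(\beta^\ast,c)$ depending on $\mathit{mark}(c)$, plus $(v,c)$ for affected descendants $v$).

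Combining the three parts, every reachable vertex $v\neq s$ satisfies the low-high condition under $\delta'$, so $\delta'$ is a low-high order of $G'$; since $D'$ has the parent property and admits a low-high order, $D'$ is indeed the dominator tree of $G'$, and all auxiliary arrays are consistent. The main obstacle I anticipate is the bookkeeping in part (3): precisely matching the ad hoc construction of $G_A$ (cases (a)--(d) for edges entering affected vertices, plus the treatment of edges into $c$ and of the inserted edge $(x,y)$ itself) against the derived flow graph $G'_z$, and verifying that the contraction of $\alpha^\ast,\beta^\ast$ commutes with taking a low-high order — i.e. that pulling $\zeta$ back through the contraction yields a genuine augmentation of $\delta_z$ rather than merely some low-high order of $G'_z$. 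This is where Lemmata \ref{lemma:affected-child}, \ref{lemma:simple}, and \ref{lemma:derived-affected-graph-flat} do the real work, guaranteeing that the contraction does not create or destroy domination relations and that $c$ separates the affected vertices from the rest of $C(z)$ in the required way.
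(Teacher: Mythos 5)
Your proposal is correct and follows essentially the same route as the paper: it reduces to the case where both endpoints are reachable, shows that $\delta'$ agrees with $\delta$ by construction, and then splits the vertices into the same three classes handled respectively by Lemma \ref{lemma:unaffected}, Lemma \ref{lemma:unaffected-children}, and Lemma \ref{lemma:auxiliary-low-high} (the latter giving that $\delta'_z$ is an augmentation of $\delta_z$ and that the affected vertices and $c$ get the low-high property). The extra bookkeeping you flag — matching $G_A$ against the contracted $G'_z$ and the update of $\mathit{mark}$, $\mathit{low}$, $\mathit{high}$ — is exactly what the paper delegates to the construction of $G_A$ and to Lemmata \ref{lemma:derived-affected-graph-flat} and \ref{lemma:auxiliary-low-high}, so no gap remains.
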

\begin{proof}
Let $(G', D', \delta', \mathit{mark}', \mathit{low}', \mathit{high}')$ be the output of  \textsf{InsertEdge}$(G, D, \delta, \mathit{mark}, \mathit{low}, \mathit{high}, e)$.
We only need to consider the case where both endpoints of the inserted edge $e=(x,y)$ are reachable in $G$. Let $A$ be the set of affected vertices, and let $z=\mathit{nca}(x,y)$.
Also, let $c$ be the child of $z$ in $D$ that is a common ancestor of all vertices in $A$.
We will show that the computed order $\delta'$ is a low-high order of $G'$ that agrees with $\delta$.
This fact implies that the arrays $\mathit{mark}'$, $\mathit{low}'$, $\mathit{high}'$ were updated correctly, since
their entries did not change for the vertices in $V \setminus \big ( A \cup \{c\} \big )$.

By construction, $\delta'$ agrees with $\delta$. Let $\delta_z$ (resp., $\delta'_z$) be the restriction of $\delta$ (resp., $\delta'$) to $C(z)$ (resp., $C'(z)$).
Then, by Lemma \ref{lemma:auxiliary-low-high}, $\delta'_z$ is an augmentation of $\delta_z$.
So, by Lemmata \ref{lemma:unaffected} and \ref{lemma:unaffected-children}, $\delta'$ is a low-high order in $G'$ for any vertex $v \not\in A \cup \{c\}$.
Finally, Lemma \ref{lemma:auxiliary-low-high} implies that $\delta'$ is also a low-high order in $G'$ for the vertices in $A \cup \{c\}$.
\end{proof}

\begin{theorem}
\label{theorem:IncLowHigh}
Algorithm \textsf{InsertEdge} maintains a low-high order of a flow graph $G$ with $n$ vertices through a sequence of edge insertions in
$O(mn)$ total time, where $m$ is the total number of edges in $G$ after all insertions.
\end{theorem}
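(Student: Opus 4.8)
The plan is to bound the total cost of the sequence of $m$ calls to \textsf{InsertEdge} (plus the single initial call to \textsf{Initialize}) by showing that each call runs in $O(\nu+\mu+n)$ time, where $\nu$ and $\mu$ are, as in Lemma~\ref{lemma:derived-affected-graph}, the number of vertices scanned and the number of edges incident to scanned vertices during the corresponding dominator-tree update, and then charging the $O(\nu+\mu)$ part to scanned vertices and edges exactly as in the proof of Lemma~\ref{lemma:SimpleRunningTime}. First I would dispose of the degenerate cases: the initial \textsf{Initialize} costs $O(m+n)$; an insertion whose tail $x$ is unreachable costs $O(1)$; and an insertion with $x$ reachable but $y$ unreachable triggers a re-initialization via \textsf{Initialize} at cost $O(m+n)$, but each such insertion makes at least one new vertex (namely $y$) permanently reachable, so there are at most $n-1$ of them, contributing $O(mn)$ in total.

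The core of the argument is the analysis of a single call to \textsf{InsertEdge} with both $x$ and $y$ reachable. Computing $z=\mathit{nca}(x,y)$ takes $O(n)$ time by walking up $D$ with the stored depths; computing $D'$ and the affected set $A$ takes $O(\nu+\mu+n)$ time by the algorithm of \cite{dyndom:2012} (Lemma~\ref{lemma:insert-affected}); resetting $\mathit{mark}'$ on $A$ and the test $z=x$ cost $O(|A|)=O(\nu)$. For \textsf{DerivedLowHigh}: by Lemma~\ref{lemma:derived-affected-graph} the derived affected flow graph $G_A$ has $O(\nu)$ vertices and $O(\mu)$ edges and is built in $O(\nu+\mu)$ time; two divergent spanning trees of $G_A$ are computed in $O(\nu+\mu)$ time \cite{DomCert:TALG}; algorithm \textsf{AuxiliaryLowHigh} runs in time linear in the size of $G_A$, i.e.\ $O(\nu+\mu)$, since it is the static low-high algorithm of \cite[Section~6.1]{DomCert:TALG} run on a flow graph with flat dominator tree (Lemmata~\ref{lemma:derived-affected-graph-flat} and \ref{lemma:auxiliary-low-high}); and reordering $C'(z)$ according to $\Lambda$ costs $O(|A|)=O(\nu)$, because one only splices the affected vertices into the existing $\delta$-ordered child list of $z$, using the positions of $\alpha^{\ast}$, $c$, $\beta^{\ast}$ in $\Lambda$ to locate where each goes.

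It remains to bound the production of $\delta'$ and the refresh of $\mathit{low}',\mathit{high}',\mathit{mark}'$. The dfs renumbers exactly the vertices in the subtrees of $D'$ rooted at the vertices of $A\cup\{c\}$; by Lemma~\ref{lemma:insert-affected} each $v\in A$ becomes a child of $z$ in $D'$, so these subtrees are pairwise disjoint, and by Lemma~\ref{lemma:affected-child} any vertex that leaves $D(c)$ must become a child of $z$ and is therefore itself affected, so the union of these subtrees is precisely $D(c)$; moreover, since the other children of $z$ and their subtrees are untouched (no vertex outside $D(c)$ is affected), the block these vertices occupy in $\delta'$ is exactly the block $D(c)$ occupied in $\delta$, whence the traversal costs $O(|D(c)|)=O(n)$ and yields a preorder of $D'$ that agrees with $\delta$ (Lemmata~\ref{lemma:agree}, \ref{lemma:unaffected}, \ref{lemma:unaffected-children}). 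For the pointers: for each $v\in A$ every edge $(u,v)$ entering $v$ is incident to the scanned vertex $v$, so recomputing $\mathit{low}'(v),\mathit{high}'(v),\mathit{mark}'(v)$ over all $v\in A$ costs $O(\mu)$; for $c$, an argument in the spirit of Lemma~\ref{lemma:unaffected} shows that the old $\mathit{low}(c),\mathit{high}(c),\mathit{mark}(c)$ remain valid in $G'$ (the vertices witnessing them are unaffected and stay on the same side of $c$ in the new preorder), so this step is $O(1)$ for $c$. Hence \textsf{InsertEdge} runs in $O(\nu+\mu+n)$ per insertion. The $O(n)$ term sums to $O(mn)$ over the $m$ insertions; the $O(\nu+\mu)$ term is distributed as $O(1)$ per scanned vertex and per edge incident to a scanned vertex, and since the depth in $D$ of every scanned vertex strictly decreases with each scan (by Lemma~\ref{lemma:insert-affected}, affected vertices move to a strictly smaller depth and their scanned descendants move with them), each vertex and each edge is scanned $O(n)$ times over the whole sequence, giving a further $O(mn)$ in total. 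Combining all contributions yields the $O(mn)$ bound.

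\textbf{Main obstacle.} I expect the delicate point to be the accounting for the dfs that rebuilds $\delta'$: one must show that restricting it to the subtrees rooted at $A\cup\{c\}$ is both sufficient (the remainder of $\delta$ stays low-high and must not be disturbed) and affordable in $O(n)$ rather than $O(\nu)$ (the traversal is not confined to scanned vertices, but it is confined to $D(c)$, so the separate $O(n)$ budget absorbs it), together with the companion fact that the $\mathit{low}/\mathit{high}$ pointers outside $A$ — in particular for $c$ itself — need no recomputation; everything else is a routine assembly of the earlier lemmas and of the charging scheme already used for the simple algorithm.
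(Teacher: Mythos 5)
Your proof is correct and follows essentially the same route as the paper's: the same $O(\nu+\mu+n)$ per-insertion bound, the same handling of unreachable endpoints, and the same amortization charging $O(1)$ per scanned vertex/edge justified by the strict depth decrease. The only deviations are cosmetic — you argue that $\mathit{low}(c),\mathit{high}(c)$ never need recomputation (the paper instead charges their possible update to scanned edges entering $c$; both resolutions are valid), and you omit the paper's remark that $\Lambda$ must be kept in a dynamic list structure supporting constant-time insertion and order queries for \textsf{AuxiliaryLowHigh} to run in linear time.
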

\begin{proof}
Consider the insertion of an edge $(x,y)$. If $y$ was unreachable in $G$
then we compute $D$ and a low-high order in $O(m)$ time. Throughout
the whole sequence of $m$ insertions, such an event can happen $O(n)$ times,
so all insertions to unreachable vertices are handled in $O(mn)$ total time.

Now we consider the cost of executing \textsf{InsertEdge}
when both $x$ and $y$ are reachable in $G$.
Let $\nu$ be the number of scanned vertices, and let $\mu$ be the number of their adjacent edges.
We can update the dominator tree and locate the affected vertices (line 8) in $O(\nu+\mu+n)$ time~\cite{dyndom:2012}.
Computing $z=\mathit{nca}(x,y)$ in line 7 takes $O(n)$ time
just by using the parent function $d$ of $D$.
Lines 9--10 and 12 are also executed in $O(n)$ time.
The for loop in lines 13--16 takes $O(\nu+\mu)$ since we only need to examine the scanned edges. (Variables
$\mathit{low}(c)$ and $\mathit{high}(c)$ need to be updated only if there is a scanned edge entering $c$.)
It remains to account for time to compute $G_A$ and
a low-high order of it.
From Lemma \ref{lemma:derived-affected-graph}, the derived affected flow graph
can be constructed in $O(\nu+\mu)$ time. In algorithm \textsf{AuxiliaryLowHigh},
we represent the list $\Lambda$ with the off-line dynamic list maintenance data structure of \cite{DomCert:TALG},
which supports insertions (in a given location) and order queries in constant time.
With this implementation, \textsf{AuxiliaryLowHigh} runs in linear-time, that is $O(\nu+\mu)$.
So \textsf{InsertEdge} runs in $O(\nu+\mu+n)$ time.
The $O(n)$ term gives a total cost of $O(mn)$ for the whole sequence of $m$ insertions.
We distribute the remaining $O(\nu+\mu)$ cost to the scanned vertices and edges, that is $O(1)$ per scanned vertex or edge.
Since the depth in $D$ of every scanned vertex decreases by at least one, a vertex and an edge can be scanned at most
$O(n)$ times. Hence, each vertex and edge can contribute at most $O(n)$ total cost through the whole sequence of $m$ insertions.
The $O(m n)$ bound follows.
\end{proof}

\section{Applications of incremental low-high orders}
\label{sec:applications}

\subsection{Strongly divergent spanning trees and path queries}
\label{sec:divergent}

We can use the arrays $\mathit{mark}$, $\mathit{low}$, and $\mathit{high}$ 
to maintain a pair of strongly divergent spanning trees, $B$ and $R$,
of $G$ after each update.
Recall that $B$ and $R$ are \emph{strongly divergent} if for every pair of vertices $v$ and $w$, we have
$B[s,v] \cap R[s,w] = D[s,v] \cap D[s,w]$ or $R[s,v] \cap B[s,w] = D[s,v] \cap D[s,w]$.
Moreover, we can construct $B$ and $R$ so that they are also edge-disjoint except for the bridges of $G$. A \emph{bridge} of $G$ is an edge $(u, v)$ that is contained in every path from $s$ to $v$.
Let $b(v)$ (resp., $r(v)$) denote the parent of a vertex $v$ in $B$ (resp., $R$).
To update $B$ and $R$ after the insertion of an edge $(x,y)$, we only
need to update $b(v)$ and $r(v)$ for the affected vertices $v$, and possibly for their common ancestor $c$ that is
a child of $z=\mathit{nca}(x,y)$ from Lemma \ref{lemma:affected-child}.
We can update $b(v)$ and $r(v)$ of each vertex $v \in A \cup \{ c \}$ as follows:
set $b(v) \leftarrow d(v)$ if $\mathit{low}(v) = \mathit{null}$, $b(v) \leftarrow \mathit{low}(v)$ otherwise;
set $r(v) \leftarrow d(v)$ if $\mathit{high}(v) = \mathit{null}$,
$r(v) \leftarrow \mathit{high}(v)$ otherwise.
If the insertion of $(x,y)$ does not affect $y$, then $A = \emptyset$ but we may still need to update
$b(y)$ and $r(y)$ if $x \not\in D(y)$ in order to make $B$ and $R$ maximally edge-disjoint.
Note that in this case $z=d(y)$, so we only need to check if both $\mathit{low}(y)$ and
$\mathit{high}(y)$ are null. If they are, then we set $\mathit{low}(y) \leftarrow x$ if $x <_{\delta} y$,
and set $\mathit{high}(y) \leftarrow x$ otherwise. Then, we can update $b(y)$ and $r(y)$ as above.

Now consider a query that, given two vertices $v$ and $w$, asks for two maximally vertex-disjoint paths, $\pi_{sv}$ and $\pi_{sw}$, from $s$ to $v$ and from $s$ to $w$, respectively.
Such queries were used in \cite{Tholey2012} to give a linear-time algorithm for the $2$-disjoint paths problem on a directed acyclic graph.
If $v <_{\delta} w$, then we select $\pi_{sv} \leftarrow B[s,v]$ and $\pi_{sw} \leftarrow R[s,w]$; otherwise, we
select $\pi_{sv} \leftarrow R[s,v]$ and $\pi_{sw} \leftarrow B[s,w]$.
Therefore, we can find such paths in constant time, and output them in $O(|\pi_{sv}|+|\pi_{sw}|)$ time.
Similarly, for any two query vertices $v$ and $w$, we can report a path $\pi_{sv}$ from $s$ to $v$ that avoids $w$.
Such a path exists if and only if $w$ does not dominate $v$, which we can test in constant time
using the ancestor-descendant relation in $D$~\cite{domin:tarjan}.
If $w$ does not dominate $v$, then we select $\pi_{sv} \leftarrow B[s,v]$ if $v <_{\delta} w$, and select $\pi_{sv} \leftarrow R[s,v]$ if $w <_{\delta} v$.

\subsection{Fault tolerant reachability}
\label{sec:reachability}

Baswana et al.~\cite{FaultTolerantReachability} study the following reachability problem.
We are given a flow graph $G = (V, E, s)$
and a spanning tree $T =(V, E_T)$ rooted at $s$.
We call a set of edges $E'$ \emph{valid} if the subgraph $G' = (V, E_T \cup  E', s)$ of $G$ has the same dominators as $G$.
The goal is to find a valid set of minimum cardinality.
As shown in \cite{DomCert:TALG:Add}, we can compute a minimum-size valid set in $O(m)$ time, given the dominator tree $D$ and a low-high order of $\delta$ of it.
We can combine the above construction with our incremental low-high algorithm to solve the incremental version of the fault tolerant reachability problem,
where $G$ is modified by edge insertions and we wish to compute efficiently a valid set for any query spanning tree $T$. Let $t(v)$ be the parent of $v$ in $T$.
Our algorithm maintains, after each edge insertion, a low-high order $\delta$ of $G$, together with the $\mathit{mark}$, $\mathit{low}$, and $\mathit{high}$ arrays.
Given a query spanning tree $T=(V,E_T)$, we can compute a valid set of minimum cardinality $E'$ as follows.
For each vertex $v \not= s$, we apply the appropriate one of the following cases:
(a) If $t(v) = d(v)$ then we do not insert into $E'$ any edge entering $v$.
(b) If $t(v) \not= d(v)$ and $v$ is marked then we insert $(d(v),v)$ into $E'$.
(c) If $v$ is not marked then we consider the following subcases: If $t(v) >_{\delta} v$, then we insert into $E'$ the
edge $(x, v)$ with $x = \mathit{low}(v)$.
Otherwise, if $t(v) <_{\delta} v$, then we insert into $E'$ the edge $(x, v)$ with $x = \mathit{high}(v)$.
Hence, can update the minimum valid set in $O(mn)$ total time.

We note that the above construction can be easily generalized for the case where $T$ is forest, i.e., when $E_T$ is a subset of the edges
of some spanning tree of $G$. In this case, $t(v)$ can be null for some vertices $v \not =s$.
To answer a query for such a $T$, we apply the previous construction with the following modification when $t(v)$ is null.
If $v$ is marked then we insert $(d(v),v)$ into $E'$, as in case (b). Otherwise, we insert both edges entering $v$ from
$\mathit{low}(v)$ and $\mathit{high}(v)$.
In particular, when $E_T = \emptyset$, we compute a subgraph $G' = (V,  E', s)$ of $G$ with minimum number of edges that
has the same dominators as $G$. This corresponds to the case $k=1$ in \cite{FaultTolerantReachability:STOC16}.

\subsection{Sparse certificate for $2$-edge-connectivity}
\label{sec:2ec}

Let $G=(V,E)$ be a strongly connected digraph.
We say that vertices $u, v\in V$ are \emph{$2$-edge-connected} if there are two edge-disjoint directed paths from $u$ to $v$
and two edge-disjoint directed paths from $v$ to $u$. (A path from $u$ to $v$ and a path from $v$ to $u$ need not be edge-disjoint.)
A \emph{$2$-edge-connected block} of a digraph $G=(V,E)$ is defined as a maximal subset $B \subseteq V$ such that every two vertices in $B$ are $2$-edge-connected. If $G$ is not strongly connected, then its $2$-edge-connected blocks are the $2$-edge-connected blocks of each strongly connected component of $G$.
A \emph{sparse certificate} for the $2$-edge-connected blocks of $G$ is a spanning subgraph $C(G)$ of $G$ that has $O(n)$ edges
and maintains the same $2$-edge-connected blocks as $G$. Sparse certificates of this kind allow us to speed up computations, such as finding the actual edge-disjoint paths that connect a pair of vertices (see, e.g., \cite{sparse-k-connected:ni}).
The $2$-edge-connected blocks and a corresponding sparse certificate can be computed in $O(m+n)$ time~\cite{2ECB}.
An incremental algorithm for maintaining the $2$-edge-connected blocks is presented in \cite{GIN16:ICALP}. This algorithm maintains
the dominator tree of $G$, with respect to an arbitrary start vertex $s$, and of its reversal $G^R$, together with the auxiliary components of
$G$ and $G^R$, defined next.

Recall that an edge $(u,v)$ is a \emph{bridge} of a flow graph $G$ with start vertex $s$ if all paths from $s$ to $v$ include $(u,v)$.
After deleting from the dominator tree $D$ the bridges of $G$, we obtain the \emph{bridge decomposition} of $D$ into a forest $\mathcal{D}$.
For each
root $r$ of a tree in the bridge decomposition $\mathcal{D}$ we define the \emph{auxiliary graph $G_r = (V_r, E_r)$ of $r$} as follows.
The vertex set
$V_r$ of $G_r$ consists of all the vertices in $D_r$.
The edge set $E_r$ contains all the edges of $G$ among the vertices of $V_r$, referred to as
\emph{ordinary} edges, and a set of \emph{auxiliary} edges, which are obtained by contracting vertices in $V\setminus V_r$, as follows.
Let $v$ be a vertex in $V_r$ that has a child $w$ in $V \setminus V_r$. Note that $w$ is a root in the bridge decomposition $\mathcal{D}$ of $D$.
For each such child $w$ of $v$, we contract $w$ and all its descendants in $D$ into $v$.
The \emph{auxiliary components} of $G$ are the strongly connected components of each auxiliary graph $G_r$.

We sketch how to extend the incremental algorithm of \cite{GIN16:ICALP} so that it also maintains a sparse certificate $C(G)$ for the $2$-edge-connected components of $G$,
in $O(mn)$ total time. It suffices to maintain the auxiliary components in $G$ and $G^R$, and two maximally edge-disjoint divergent spanning trees for each of $G$ and $G^R$.
We can maintain these divergent spanning trees as described in Section \ref{sec:divergent}. To identify the auxiliary components, the algorithm of  \cite{GIN16:ICALP} uses, for each auxiliary graph, an incremental algorithm for maintaining strongly connected components~\cite{Bender:IncCycleDetection:TALG}.
It is easy to extend this algorithm so that it also computes $O(n)$ edges that define these strongly connected components.
The union of these edges and of the edges in the divergent spanning trees are the edges of $C(G)$.

\section{$2$-vertex-connected spanning subgraph}
\label{sec:2VCSS}

Let $G=(V,E)$ be a strongly connected digraph.
A vertex $x$ of $G$ is a \emph{strong articulation point} if $G \setminus x$ is not strongly connected.
A strongly connected digraph $G$ is \emph{$2$-vertex-connected} if it has at least three vertices and no strong articulation points~\cite{2vc,Italiano2012}.
Here we consider the problem of approximating a smallest $2$-vertex-connected spanning subgraph (\textsf{2VCSS}) of $G$. This problem is NP-hard~\cite{GJ:NP}.
We show that algorithm 
\textsf{LH-Z}
(given below), which uses low-high orders, achieves a linear-time $2$-approximation for this problem. The
best previous approximation ratio achievable in linear-time was $3$~\cite{2VCSS:Geo}, so we obtain a substantial improvement.
The best approximation ratio for \textsf{2VCSS} is $3/2$, and is achieved by the algorithm of Cheriyan and Thurimella~\cite{CT00} in $O(m^2)$ time,
or in $O(m\sqrt{n}+n^2)$ by a combination of \cite{CT00} and \cite{2VCSS:Geo}.
Computing small spanning subgraphs is of particular importance when dealing with large-scale graphs, e.g., with hundreds of million to billion edges. In this framework, one big challenge is to design linear-time algorithms, since algorithms with higher running times might be practically infeasible on today's architectures.
Let $G=(V,E)$ be a strongly connected digraph.
In the following, we denote by $G^R=(V,E^R)$ the \emph{reverse digraph} of $G$ that results from $G$ after reversing
all edge directions.

\begin{algorithm}[t]

\LinesNumbered
\DontPrintSemicolon
\KwIn{$2$-vertex-connected digraph $G=(V,E)$}
\KwOut{$2$-approximation of a smallest $2$-vertex-connected spanning subgraph $H=(V,E_H)$ of $G$}

Choose an arbitrary vertex $s$ of $G$ as start vertex.\;

Compute a strongly connected spanning subgraph $H=(V \setminus s, E_H)$ of $G \setminus s$.\;

Set $H \leftarrow (V, E_H)$.\;

Compute a low-high order $\delta$ of flow graph $G$ with start vertex $s$.\;

\ForEach{vertex $v \not= s$}
{
\If{there are two edges $(u, v)$ and $(w,v)$ in $E_H$ such that $u<_{\delta} v$ and $v <_{\delta} w$}
{
do nothing}
\ElseIf{there is no edge  $(u, v) \in E_H$ such that $u<_{\delta} v$} 
{
find an edge $e=(u, v) \in E$ with $u<_{\delta} v$\;
set $E_H \leftarrow E_H \cup \{ e \}$
}
\ElseIf{there is no edge $(w, v) \in E_H$ such that $v<_{\delta} w$} 
{
find an edge $e=(w, v) \in E$ with $v<_{\delta} w$ or $w=s$\;
set $E_H \leftarrow E_H \cup \{e\}$
}
}

Execute the analogous steps of lines 4--17 for the reverse flow graph $G^R$ with start vertex $s$.\;

\KwRet $H=(V,E_H)$

\caption{\textsf{LH-Z}$(G)$}
\end{algorithm}

\begin{lemma}
\label{lemma:LH2VCSS}
Algorithm \textsf{LH-Z} computes a $2$-vertex-connected spanning subgraph of $G$.
\end{lemma}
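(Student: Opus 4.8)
The plan is to show that the subgraph $H$ returned by \textsf{LH-Z} has no strong articulation point, so that together with $|V| \ge 3$ it is $2$-vertex-connected. The key tool is the characterization of strong articulation points via dominators: a vertex $x \ne s$ is a strong articulation point of $H$ if and only if $x$ is a nontrivial dominator in the flow graph $H$ with start vertex $s$ (i.e., $x$ properly dominates some vertex) or $x$ is a nontrivial dominator in the reverse flow graph $H^R$ with start vertex $s$; moreover $s$ itself is a strong articulation point iff $G \setminus s$ (hence $H \setminus s$) is not strongly connected. Since line 2 builds $H \setminus s$ to be strongly connected, $s$ is not a strong articulation point of $H$, and it remains to rule out all other candidates.

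First I would argue that $H$, viewed as a flow graph with start vertex $s$, has the \emph{same dominator tree as $G$}, namely the flat tree in which $s$ is the parent of every other vertex (this flatness holds for $G$ because $G$ is $2$-vertex-connected, so no vertex other than $s$ is a proper dominator). The loop in lines 5--17 enforces, for every $v \ne s$, that $E_H$ contains either the edge $(d(v),v)=(s,v)$ (handled implicitly, since then $v$ already has an incoming edge $(u,v)$ with $u = s$, and $s <_\delta v$ because $\delta$ is a preorder of the flat dominator tree with $s$ first — wait, here one must be careful: the condition uses $u <_\delta v$, and $s$ is first in $\delta$), or a pair of edges $(u,v),(w,v) \in E_H$ with $u <_\delta v <_\delta w$ and $w$ not a descendant of $v$ — which is automatic since the dominator tree of $G$ is flat, so $v$ has no descendants other than itself. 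Thus after the loop, $\delta$ restricted to $H$ satisfies the low-high property for every $v \ne s$ with respect to the flat tree $D$. By the criterion recalled in Section~\ref{sec:definitions} (if a tree with the parent property admits a low-high order then it equals $D$), and since $H \subseteq G$ forces every dominator relation of $G$ to persist in $H$ while the low-high certificate forbids any new proper dominator, the dominator tree of $H$ is exactly the flat tree $D$. Hence $H$ has no nontrivial dominator, so no vertex $v \ne s$ is a strong articulation point coming from the ``forward'' direction.

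Then I would repeat the identical argument for $H^R$: line 18 runs the analogous procedure on the reverse flow graph $G^R$ with the same start vertex $s$, installing the necessary incoming edges in $E_H^R$ so that the reverse dominator tree of $H$ is also flat. This rules out any vertex being a strong articulation point from the ``backward'' direction. One subtlety to check is that adding edges in the reverse phase does not destroy the flatness established in the forward phase — but adding edges to a flow graph can only remove dominators, never create them, so the forward dominator tree of $H$ remains flat after line 18. Combining both phases with the handling of $s$ in line 2, $H$ has no strong articulation point; and since $|V| \ge 3$ (as $G$ is $2$-vertex-connected), $H$ is $2$-vertex-connected, and it spans $V$ by construction.

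The main obstacle I anticipate is the careful bookkeeping around the start vertex $s$ and the special case $w = s$ appearing in line 14: one must verify that the edge-selection cases in lines 6--17 genuinely guarantee the low-high property \emph{for $H$} (not merely for $G$), in particular that when $v$ already has an edge from $s$ the first \textbf{if}-branch fires, and that the ``$w = s$'' escape in line 14 is legitimate because $s$ (being the root) can serve as the ``high'' endpoint even though it is not $>_\delta v$ in the usual preorder sense — this requires appealing to the precise definition of low-high order, where the edge $(d(v),v)\in E$ alternative, or the role of the root, must be invoked. Establishing that the edges inserted in lines 9--10 and 13--14 always exist in $E$ (which follows from $2$-vertex-connectivity of $G$) is routine but should be stated.
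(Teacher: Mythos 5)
Your proposal is correct and takes essentially the same route as the paper: both reduce to the characterization that $H$ is $2$-vertex-connected iff $H\setminus s$ is strongly connected and the flow graphs $H$ and $H^R$ have flat dominator trees, and both show that the for loop guarantees $\delta$ remains a low-high order of every $v\not=s$ in $H$ (with the $w=s$ case absorbed by the $(d(v),v)\in E$ alternative). The only immaterial difference is the last step: you invoke the parent-property/low-high certification criterion to conclude that the flat tree is the dominator tree of $H$, whereas the paper extracts two strongly divergent spanning trees $B,R\subseteq H$ and observes $B[s,v]\cap R[s,v]=\{s,v\}$.
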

\begin{proof}
We need to show that the computed subgraph $H$ is $2$-vertex-connected.
From \cite{Italiano2012}, we have that a digraph $H$ is $2$-vertex connected if and only if it satisfies the following property:
For an arbitrary start vertex $s \in V$,
flow graphs $H=(V,E,s)$ and $H^R=(V,E^R,s)$ have flat dominator trees, and $H \setminus s$ is strongly connected.
The digraph $H$ computed by algorithm \textsf{LH-Z} satisfies the latter condition because of line 2.
It remains to show that $H$ has flat dominator tree. The same argument applies for $H^R$, thus completing the proof.
Let $\delta$ be the low-high order $\delta$ of $G$, computed in line $3$.
We argue that after the execution of the for loop in lines 5--17, $\delta$ is also a low-high order for all vertices in $H$.
Consider an arbitrary vertex $v \not= s$. Let $(x,v)$ be an edge entering $v$ in the strongly connected spanning subgraph of $G$ computed in line 2.
If $x >_{\delta} v$, then, by the definition of $\delta$,  there is at least one edge $(y,v) \in E$ such that $y <_{\delta} v$.
Hence, after the execution of the for loop for $v$, the edge set $E_H$ will contain at least two edges $(u,v)$ and $(w,v)$
such that $u <_{\delta} v <_{\delta} w$. On the other hand, if $x <_{\delta} v$, then the definition of $\delta$ implies that
there an edge $(y,v) \in E$ such that $y >_{\delta} v$ or $y = s$.  Notice that in either case $y \not= x$.
So, again, after the execution of the for loop for $v$, the edge set $E_H$ will contain at least two edges $(u,v)$ and $(w,v)$
such that either $u <_{\delta} v <_{\delta} w$, or $u <_{\delta} v$ and $w = s$.
It follows that $\delta$ is a low-high order for all vertices $v \not=s$ in $H$.
By \cite{DomCert:TALG}, this means that $H$ contains two strongly divergent spanning trees $B$ and $R$ of $G$.
Since $G$ has flat dominator tree, we have that $B[s,v] \cap R[s,v] = \{s,v\}$ for all $v \in V \setminus s$.
Hence, since $H$ contains $B$ and $R$, the dominator tree of $H$ is flat.
\end{proof}

We remark that the construction of $H$ in algorithm
\textsf{LH-Z} guarantees that $s$ will have in-degree and out-degree at least $2$
in $H$. (This fact is implicit in the proof of  Lemma \ref{lemma:LH2VCSS}.)
Indeed, $H$ will contain the edges from $s$ to the vertices in $V \setminus s$ with minimum and maximum order in $\delta$,
and the edges entering $s$ from the vertices in $V \setminus s$ with minimum and maximum order in $\delta^R$.

\begin{theorem}
\label{theorem:LH2VCSS}
Algorithm \textsf{LH-Z} computes a $2$-approximation for \textsf{2VCSS} in linear time.
\end{theorem}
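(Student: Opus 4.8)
The plan is to establish two things: correctness (which we already have from Lemma \ref{lemma:LH2VCSS}) and the two claimed bounds — linear running time and approximation ratio $2$. Since Lemma \ref{lemma:LH2VCSS} already shows that the output $H$ is a $2$-vertex-connected spanning subgraph of $G$, what remains is to (i) verify each step of algorithm \textsf{LH-Z} runs in $O(m+n)$ time, and (ii) bound $|E_H|$ from above and a smallest $2$VCSS from below.

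For the running time, I would go through the algorithm line by line. Line 1 is trivial. Line 2 — computing a strongly connected spanning subgraph of $G \setminus s$ — can be done in linear time (e.g., take the union of an outgoing and an incoming DFS/BFS tree from an arbitrary vertex of $G\setminus s$, a standard linear-time construction). Line 4 computes a low-high order of the flow graph $G$, which takes linear time by \cite{DomCert:TALG}. The for loop in lines 5--17: for each vertex $v\neq s$ we need to check whether $E_H$ already contains an edge $(u,v)$ with $u<_\delta v$ and an edge $(w,v)$ with $v<_\delta w$; this is done by scanning the incoming edges of $v$ in $H$ once, and, if an edge is missing, scanning the incoming edges of $v$ in $G$ once to find a suitable replacement. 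Each edge of $G$ is therefore examined $O(1)$ times over the whole loop, so this phase is $O(m+n)$. Line 18 repeats the same work on $G^R$, again linear. Hence \textsf{LH-Z} runs in $O(m+n)$ time.

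For the approximation ratio, the key observation is a counting argument on edges. After line 2, the strongly connected spanning subgraph of $G\setminus s$ contributes some number of edges; a minimal such subgraph has at most $2(n-1)$ edges (it is the union of two trees on $n-1$ vertices, or one can simply note any strongly connected graph on $k$ vertices has a spanning subgraph with at most $2k-2$ edges and this suffices), but more carefully, the for loops add at most two incoming edges per vertex $v\neq s$ for $G$ and at most two per vertex $v\neq s$ for $G^R$, i.e., at most $4(n-1)$ edges total beyond what was there. I would instead argue directly: the final $H$ has the property that every vertex $v\neq s$ has in-degree at most ... — this needs care. The clean route is: observe that in $H$, counting only the edges added in line 2 together with the edges added in the two for-loop phases, every vertex $v \neq s$ receives at most $4$ incoming edges total (at most $2$ from the $G$-phase since we add at most one ``low'' and one ``high'' edge, plus the in-edges already present from line 2; similarly the $G^R$-phase adds in-edges in $H^R$, i.e. out-edges in $H$). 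So $|E_H| \le 4(n-1) + O(1)$ roughly, but this gives a ratio of $4$, not $2$. To get $2$, I must be sharper: after line 2 is a \emph{minimal} strongly connected spanning subgraph of $G\setminus s$ with at most $2(n-2)$ edges (each of its $n-1$ vertices, except possibly overlaps, contributes... ), and the for loops add at most one edge per vertex each, not two, by reusing the edge already present from line 2 as one of the two required edges — this is exactly what the proof of Lemma \ref{lemma:LH2VCSS} exploits ($y\neq x$). Then $|E_H| \le 2(n-1) + (n-1) + (n-1) = 4(n-1)$ still. The honest bound comes from pairing against the lower bound: any $2$-vertex-connected digraph has minimum in-degree $\ge 2$ and minimum out-degree $\ge 2$, so a smallest $2$VCSS has at least $2n$ edges. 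Combined with $|E_H| \le 4n + O(1)$ this yields ratio $2 + o(1)$; tightening the additive constants (using that $s$ contributes exactly $2$ in and $2$ out, and that line 2 plus one for-loop phase together give each non-$s$ vertex in-degree at most $2$, symmetrically out-degree at most $2$) gives $|E_H| \le 2n$ exactly, hence a clean ratio of $2$.

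The main obstacle is precisely this last counting: showing $|E_H| \le 2n$ (equivalently, that in $H$ every vertex has in-degree exactly $2$ and out-degree exactly $2$, or at most so). The subtlety is that the strongly connected spanning subgraph from line 2 and the ``low/high'' completion in the for loop must be shown to never together give a vertex three or more in-edges — but in fact the for loop may add a second in-edge even when line 2 already supplied one (when line 2's edge is, say, a ``low'' edge and we still must add a ``high'' edge). So one argues instead that each non-$s$ vertex has in-degree at most $2$ after the $G$-loop \emph{provided we discard redundant edges}, or one simply accepts in-degree $\le 3$ from one phase — I would resolve this by noting the edges added for $G^R$ are out-edges of $H$, so the $G$-phase controls in-degrees and the $G^R$-phase controls out-degrees independently, and within the $G$-phase a minimal strongly connected subgraph on $V\setminus s$ together with at most one added edge per vertex gives in-degree $\le 2$ for all $v \notin\{s\}$ and in-degree exactly $2$ for $s$. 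Symmetrically for out-degrees. This yields $|E_H| = \sum_v \mathrm{indeg}_H(v) \le 2n$, and since $\mathrm{OPT} \ge 2n$, we conclude $|E_H| \le 2\cdot\mathrm{OPT}$... — here one must be slightly careful, as $2n \le 2\,\mathrm{OPT}$ needs $\mathrm{OPT}\ge n$, which is immediate. Putting the pieces together completes the proof. $\Box$
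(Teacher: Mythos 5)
Your running-time analysis is fine and matches the paper's. The problem is the last third of your approximation argument, where you declare that ``the main obstacle'' is showing $|E_H| \le 2n$ and sketch a degree argument for it. That target is both false and unnecessary. It is false because the strongly connected spanning subgraph computed in line 2 does not bound in-degrees by any constant: in the union of an out-branching and an in-branching of $G \setminus s$, a vertex can have arbitrarily many children in the in-branching and hence arbitrarily large in-degree before the for loops even run; moreover the $G^R$-phase adds edges that are in-edges (in $G$'s orientation) of their heads, so the $G$-phase does not ``control in-degrees independently.'' It is also an absurd target: since every vertex of a $2$-vertex-connected digraph has in-degree at least $2$, we have $\mathrm{OPT} \ge 2n$, and a bound $|E_H| \le 2n$ would make \textsf{LH-Z} an exact algorithm for an NP-hard problem.

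The correct accounting is the one you wrote down and then abandoned, and it is exactly the paper's. Line 2 contributes at most $2(n-1)$ edges (Zhao et al., or your two-branchings construction). Each for-loop phase adds at most one edge entering each vertex $v \ne s$: the branches form an if/else-if chain, so they are mutually exclusive, and each of the two non-trivial branches inserts a single edge; concretely, since $H \setminus s$ is already strongly connected after line 2, every $v \ne s$ has some in-edge $(x,v) \in E_H$, and whichever side of $v$ the vertex $x$ falls on in $\delta$ disables the corresponding branch. Hence $|E_H| \le 2(n-1) + (n-1) + (n-1) = 4(n-1) < 4n \le 2 \cdot 2n \le 2\,\mathrm{OPT}$. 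There is no ``$2+o(1)$'' slack to tighten away; the ratio is at most $2$ outright. Had you stopped at your own inequality $|E_H| \le 4(n-1)$ and paired it with $\mathrm{OPT} \ge 2n$, the proof would have been complete.
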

\begin{proof}
We establish the approximation ratio of \textsf{LH-Z} by showing that $|E_H| \le 4n$. The approximation ratio of $2$ follows from the fact that any vertex in a $2$-vertex-connected digraph must have in-degree at least two.
In line 2 we can compute an approximate smallest strongly connected spanning subgraph of $G \setminus s$~\cite{KhullerSICOMP95}.
For this, we can use the linear-time algorithm of Zhao et al.~\cite{ZNI:MSCS:2003}, which selects at most $2(n-1)$ edges.
Now consider the edges selected in the for loop of lines 5--17. Since after line 2, $H \setminus s$ is strongly connected, each vertex $v \in V \setminus s$ has at least
one entering edge $(x,v)$. If $x <_{\delta} v$ then lines 10--11 will not be executed; otherwise, $v <_{\delta} x$ and lines 14--15 will not be executed.
Thus, the for loop of lines 5--17 adds at most one edge entering each vertex $v \not =s$.
The same argument implies that the analogous steps executed for $G^R$ add at most one edge leaving each vertex $v \not =s$.
Hence, $E_H$ contains at most $4(n-1)$ at the end of the execution.
\end{proof}

\begin{table}[t!]
\centering
\begin{scriptsize}
\begin{tabular}{|l|rrr|rrr|l|}
\hline
\textbf{Graph}                 & \multicolumn{3}{c|}{\textbf{Largest SCC}} & \multicolumn{3}{c|}{\textbf{2VCCs}} & \textbf{Type}                \\
                     & $n$              & $m$       & avg. $\delta$       & $n$          & $m$      & avg. $\delta$     &                     \\
\hline \hline
rome99                & 3352   & 8855    & 2.64  & 2249   & 6467    & 2.88  & road network        \\
\hline twitter-higgs-retweet & 13086  & 63537   & 4.86  & 1099   & 9290    & 8.45  & twitter             \\
\hline enron                 & 8271   & 147353  & 17.82 & 4441   & 123527  & 27.82 & enron mails         \\
\hline web-NotreDame         & 48715  & 267647  & 5.49  & 1409   & 6856    & 4.87  & web                 \\
                      &        &         &       & 1462   & 7279    & 4.98  &                     \\
                      &        &         &       & 1416   & 13226   & 9.34  &                     \\
\hline soc-Epinions1         & 32220  & 442768  & 13.74 & 17117  & 395183  & 23.09 & trust network       \\
\hline Amazon-302            & 241761 & 1131217 & 4.68  & 55414  & 241663  & 4.36  & co-purchase         \\
\hline WikiTalk              & 111878 & 1477665 & 13.21 & 49430  & 1254898 & 25.39 & Wiki communications \\
\hline web-Stanford          & 150475 & 1576157 & 10.47 & 5179   & 129897  & 25.08 & web                 \\
                      &        &         &       & 10893  & 162295  & 14.90 &                     \\
\hline web-Google            & 434818 & 3419124 & 7.86  & 77480  & 840829  & 10.85 & web                 \\
\hline Amazon-601            & 395230 & 3301051 & 8.35  & 276049 & 2461072 & 8.92  & co-purchase         \\
\hline web-BerkStan          & 334857 & 4523232 & 13.51 & 1106   & 8206    & 7.42  & web                 \\
                      &        &         &       & 4927   & 28142   & 5.71  &                     \\
                      &        &         &       & 12795  & 347465  & 27.16 &                     \\
                      &        &         &       & 29145  & 439148  & 15.07 &                    \\

\hline

\end{tabular}
\end{scriptsize}
\caption{Real-world graphs used in the experiments, sorted by the file size of their largest SCC. We used both the largest SCC and the some of the 2VCCs (inside the largest SCC) in our experiments.\label{tab:dataset}}
\end{table}

\section{Empirical Analysis}
\label{sec:experimental}

For the experimental evaluation we use the graph datasets shown in Table~\ref{tab:dataset}. We wrote our implementations in {\tt C++}, using {\tt g++ v.4.6.4} with full optimization (flag {\tt -O3}) to compile the code. We report the running times on a GNU/Linux machine, with Ubuntu (12.04LTS): a Dell PowerEdge R715 server 64-bit NUMA machine with four AMD Opteron 6376 processors  and 128GB of RAM memory. Each processor has 8 cores sharing a 16MB L3 cache, and each core has a 2MB private L2 cache and 2300MHz speed.
In our experiments we did not use any parallelization, and each algorithm ran on a
single core.
We report CPU times measured with the \texttt{getrusage} function, averaged over ten different runs.
In Table~\ref{tab:dataset} we can see some statistics about the real-world graphs we used in our experimental evaluation.

\begin{figure}[t!]
\centering
\includegraphics[width=0.8\textwidth, trim=0cm 0cm 0cm 4.0cm, clip]{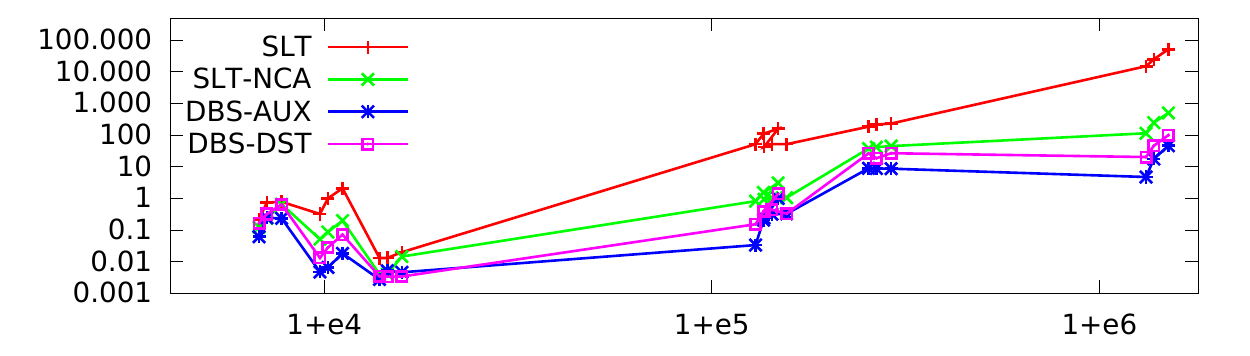} \\
\includegraphics[width=0.8\textwidth, trim=0cm 0cm 0cm 4.0cm, clip]{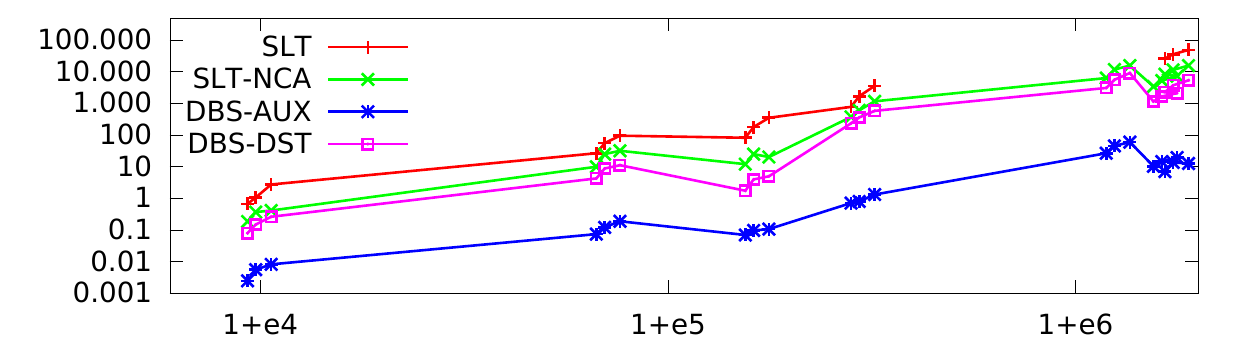}%
\caption{Incremental low-high order: dynamized 2VC graphs (top) and edge insertion in strongly connected graphs (bottom). Running times, in seconds, and number of edges both shown in logarithmic scale.
\label{fig:incr}}
\vspace{-0.4cm}
\end{figure}

\subsection{Incremental low-high order.}
\label{sec:exp-low-high}
We compare the performance of four algorithms. As a baseline, we use a static low-high order algorithm from \cite{DomCert:TALG} based on
an efficient implementation of
the Lengauer-Tarjan algorithm for computing dominators~\cite{domin:lt} from \cite{dom_exp:gtw}.
Our baseline algorithm, \textsf{SLT}, constructs, as intermediary, two divergent spanning trees. 
After each insertion of an edge $(x,y)$, \textsf{SLT} recomputes a low-high order if $x$ is reachable. An improved version of this algorithm, that we refer to as \textsf{SLT-NCA}, tests if the insertion of $(x,y)$ affects the dominator tree by computing the nearest common ancestor of $x$ and $y$. If this is the case, then \textsf{SLT-NCA} recomputes a low-high order as \textsf{SLT}. The other two algorithms are the ones we presented in Section \ref{sec:incremental}. For our simple algorithm, \textsf{DBS-DST}, we extend the incremental dominators algorithm \textsf{DBS} of \cite{dyndom:2012} with the computation of two divergent spanning trees and a low-high order, as in \textsf{SLT}. Algorithm \textsf{DBS-DST} applies these computations on a sparse subgraph of the input digraph that maintains the same dominators.
Finally, we tested an implementation of our more efficient algorithm, \textsf{DBS-AUX}, that updates the low-high order by computing a local low-high order of an auxiliary graph.

We compared the above incremental low-high order algorithms in two different field tests. In the first one, we considered 2-vertex connected graphs, and we dynamized them in the following manner: we removed a percentage of edges (i.e., 5\%, 10\%, and 20\% respectively), selected uniformly at random, that were incrementally added to the graph.
Note that during the execution of the algorithms some vertices may be unreachable at first. Also, at the end of all insertions, the final graph has flat dominator tree.
In Figure~\ref{fig:incr} (top) we can see that the algorithms are well distinguished: our \textsf{DBS-AUX} performs consistently better than the other ones (with the exception of two NotreDame instances). 
The total running times are given in Table \ref{table1}.
On average, \textsf{DBS-DST} is about $2.84$ times faster than \textsf{SLT-NCA}, with their relative performance depending on the density of the graph (the higher the average degree the better \textsf{DBS-DST} performs w.r.t. \textsf{SLT-NCA}.)
As we mentioned, the naive \textsf{SLT} is the worst performer.
The above observed behavior of the algorithms is similar also in the second test. Here, we consider the strongly connected graphs, and we incrementally insert random edges up to a certain percentage of the original number of edges (i.e., as before, 5\%, 10\%, and 20\% respectively). We use strongly connected graphs only in order to guarantee that all vertices are reachable from the selected source. (Strong connectivity has no other effect in these tests.)
The endpoints of each new edge are selected uniformly at random, and the edge is inserted if it is not a loop and is not already present in the current graph.
The ranking of the algorithms does not change, as we can see in Figure~\ref{fig:incr} (bottom), but the difference is bigger: we note a bigger gap of more than two orders of magnitude, in particular, between \textsf{DBS-AUX} and the couple \textsf{SLT-NCA} and \textsf{DBS-DST}.
This is expected because, unlike the first test, here all edges connect already reachable vertices. This means that \textsf{DBS-DST} and \textsf{DBS-AUX} do not execute a full restart for any of these insertions.
The total running times are given in Table \ref{table2}.

\begin{figure}[t!]
\centering
\begin{tabular}{r}
\vspace{-0.3cm}
\includegraphics[width=0.8\textwidth, trim=0cm 0cm 0cm 4.0cm, clip]{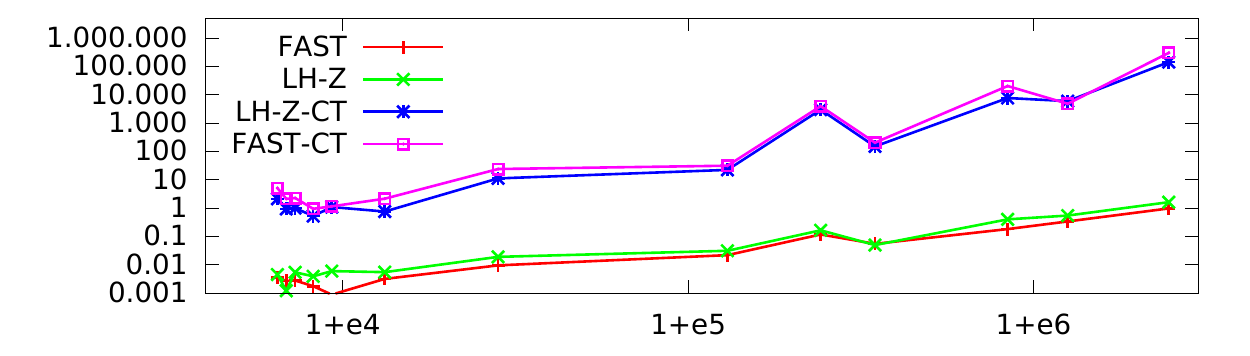} \\

\includegraphics[width=0.73\textwidth, trim=0cm 0cm 0cm 4.0cm, clip]{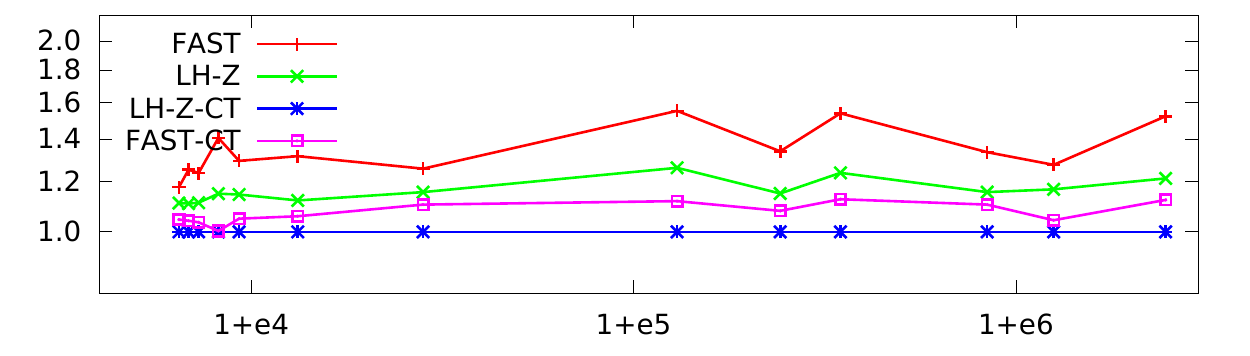}%
\vspace{-0.4cm}
\end{tabular}

\caption{Smallest 2-vertex-connected spanning subgraph. Top: running times, in seconds, and number of edges both shown in logarithmic scale. Bottom: relative size of the resulting 2VCSS.
\label{fig:a2vcc}}
\end{figure}

\begin{table}[]
\centering
\begin{scriptsize}
\begin{tabular}{lrrrrrrr}
\hline \textbf{Graph}         & \textbf{nodes} & \textbf{starting edges} & \textbf{final edges} & \textbf{SLT}      & \textbf{SLT-NCA}  & \textbf{DBS-AUX}  & \textbf{DBS-DST}  \\
\hline rome05        & 2249  & 6144           & 6467        & 0.216091 & 0.120026 & 0.060632 & 0.16457  \\
rome10        & 2249  & 5820           & 6467        & 0.734963 & 0.231678 & 0.242326 & 0.319025 \\
rome20        & 2249  & 5174           & 6467        & 0.772791 & 0.646389 & 0.231463 & 0.639627 \\
\hline twitter05     & 1099  & 8826           & 9290        & 0.320313 & 0.051123 & 0.004682 & 0.012953 \\
twitter10     & 1099  & 8361           & 9290        & 0.996879 & 0.085982 & 0.006498 & 0.027945 \\
twitter20     & 1099  & 7432           & 9290        & 2.06744  & 0.198226 & 0.018012 & 0.070981 \\
\hline NotreDame05   & 1416  & 12565          & 13226       & 0.012942 & 0.003981 & 0.002727 & 0.003421 \\
NotreDame10   & 1416  & 11903          & 13226       & 0.012958 & 0.003997 & 0.005094 & 0.003341 \\
NotreDame20   & 1416  & 10581          & 13226       & 0.019733 & 0.01446  & 0.004571 & 0.003374 \\
\hline enron05       & 4441  & 117351         & 123527      & 51.5453  & 0.811483 & 0.033019 & 0.152272 \\
enron10       & 4441  & 111174         & 123527      & 109.719  & 1.5252   & 0.204307 & 0.388753 \\
enron20       & 4441  & 98822          & 123527      & 158.83   & 3.08813  & 0.999617 & 1.40979  \\
\hline webStanford05 & 5179  & 123402         & 129897      & 42.1674  & 0.936905 & 0.236135 & 0.370119 \\
webStanford10 & 5179  & 116907         & 129897      & 51.2838  & 1.02925  & 0.316648 & 0.439147 \\
webStanford20 & 5179  & 103918         & 129897      & 51.6162  & 1.04364  & 0.323679 & 0.329067 \\
\hline Amazon05      & 55414 & 229580         & 241663      & 185.868  & 37.4155  & 8.91418  & 26.5169  \\
Amazon10      & 55414 & 217497         & 241663      & 214.185  & 41.4565  & 8.80395  & 18.4656  \\
Amazon20      & 55414 & 193330         & 241663      & 230.7    & 44.8627  & 8.66914  & 26.7402  \\
\hline WikiTalk05    & 49430 & 1192153        & 1254898     & 15026.2  & 113.946  & 4.7007   & 20.2353  \\
WikiTalk10    & 49430 & 1129408        & 1254898     & 24846.2  & 247.601  & 17.5997  & 45.7164  \\
WikiTalk20    & 49430 & 1003918        & 1254898     & 51682    & 500.581  & 45.9101  & 99.6058\\
\hline  
\end{tabular}
\end{scriptsize}
\caption{Running times of the plot shown in Figure~\ref{fig:incr} (top) \label{table1}.}

\end{table}

\begin{table}[]
\centering
\begin{scriptsize}
\begin{tabular}{lrrrrrrr}
\hline \textbf{Graph}         & \textbf{nodes} & \textbf{starting edges} & \textbf{final edges} & \textbf{SLT}      & \textbf{SLT-NCA}  & \textbf{DBS-AUX}  & \textbf{DBS-DST}  \\
\hline  rome05        & 3352   & 8855           & 9298        & 0.662001        & 0.185072 & 0.002457 & 0.07962  \\
rome10        & 3352   & 8855           & 9741        & 1.06533         & 0.366822 & 0.005531 & 0.147052 \\
rome20        & 3352   & 8855           & 10626       & 2.74201         & 0.410448 & 0.008154 & 0.259299 \\
\hline twitter05     & 13086  & 63537          & 66714       & 26.5965         & 9.94862  & 0.073755 & 4.2933   \\
twitter10     & 13086  & 63537          & 69891       & 55.4924         & 25.189   & 0.120719 & 8.73372  \\
twitter20     & 13086  & 63537          & 76244       & 96.3205         & 31.7239  & 0.186917 & 11.1431  \\
\hline enron05       & 8271   & 147353         & 154721      & 82.6084         & 11.9889  & 0.068994 & 1.72764  \\
enron10       & 8271   & 147353         & 162088      & 180.222         & 25.0999  & 0.09557  & 3.97011  \\
enron20       & 8271   & 147353         & 176824      & 353.174         & 20.1978  & 0.106017 & 4.94514  \\
\hline NotreDame05   & 48715  & 267647         & 281029      & 785.012         & 375.356  & 0.70628  & 234.757  \\
NotreDame10   & 48715  & 267647         & 294412      & 1691.05         & 610.29   & 0.79135  & 359.028  \\
NotreDame20   & 48715  & 267647         & 321176      & 3593.09         & 1168.5   & 1.31932  & 585.807  \\
\hline Amazon05      & 241761 & 1131217        & 1187778     & \textgreater24h & 6386.97  & 26.5493  & 3094.69  \\
Amazon10      & 241761 & 1131217        & 1244339     & \textgreater24h & 11905.7  & 45.1881  & 5628.51  \\
Amazon20      & 241761 & 1131217        & 1357460     & \textgreater24h & 15871    & 60.197   & 9157.5   \\
\hline WikiTalk05    & 111878 & 1477665        & 1551548     & \textgreater24h & 3414.28  & 10.3364  & 1157.84  \\
WikiTalk10    & 111878 & 1477665        & 1625432     & \textgreater24h & 5301.51  & 14.5151  & 1666.28  \\
WikiTalk20    & 111878 & 1477665        & 1773198     & \textgreater24h & 7296.72  & 19.5778  & 2124.6   \\
\hline webStanford05 & 150475 & 1576157        & 1654965     & \textgreater24h & 8403.03  & 7.028    & 2295.55  \\
webStanford10 & 150475 & 1576157        & 1733773     & \textgreater24h & 11503.4  & 13.7287  & 3749.12  \\
webStanford20 & 150475 & 1576157        & 1891388     & \textgreater24h & 15792.1  & 12.7093  & 5381.12\\
\hline  
\end{tabular}
\end{scriptsize}
\caption{Running times of the plot shown in Figure~\ref{fig:incr} (bottom) \label{table2}.}

\end{table}

\subsection{$2$-vertex-connected spanning subgraph.}
\label{sec:exp-2vcss}
In this experimental evaluation we compared four algorithms for computing the (approximated) smallest $2$-vertex-connected spanning subgraph.
Specifically, we tested two algorithms from \cite{2VCSS:Geo}, \textsf{FAST} which computes a $3$-approximation in linear-time by using divergent spanning trees, and \textsf{FAST-CT} which combines \textsf{FAST} with the $3/2$-approximation algorithm of Cheriyan and Thurimella~\cite{CT00}. In the experiments reported in \cite{2VCSS:Geo}, the former algorithm achieved the fastest running times, while the latter the best solution quality. We compare these algorithms against our new algorithm \textsf{LH-Z} of Section \ref{sec:2VCSS}, and a new hybrid algorithm \textsf{LH-Z-CT}, that combines
\textsf{LH-Z} with the algorithm of Cheriyan and Thurimella~\cite{CT00}.

Algorithm \textsf{LH-Z-CT} works as follows. First, it computes a $1$-matching $M$ in the input graph $G$~\cite{matching:GT}, using bipartite matching as in \cite{CT00}.
Let $H$ be the subgraph of $G \setminus s$, for arbitrary start vertex $s$, that contains only the edges in $M$. We compute the strongly connected components $C_1, \ldots, C_k$ in $H$, and
form a contracted version $G'$ of $G$ as follows. For each strongly connected component $C_i$ of $H$, we contract all vertices in $C_i$ into a representative vertex $u_i \in C_i$.
(Contractions are performed by union-find~\cite{dsu:tarjan} and merging lists of out-edges of $G$.)
Then, we execute the linear-time algorithm of Zhao et al.~\cite{ZNI:MSCS:2003} to compute a strongly connected spanning subgraph of $G'$,
and store the original edges of $G$ that correspond to the selected edges by the Zhao et al. algorithm. Let $Z$ be this set of edges.
We compute a low-high order of $G$ with root $s$, and use it in order to compute a 2-vertex-connected spanning subgraph $W$ of $G$ using as many edges from $Z$ and $M$ as possible, as in \textsf{LH-Z}.
Then, we run the filtering phase of Cheriyan and Thurimella. For each edge $(x,y)$ of $W$ that is not in $M$, we test if $x$ has two vertex-disjoint paths to $y$
in $W \setminus (x,y)$. If it does, then we set $W \leftarrow W \setminus (x,y)$. We remark that, similarly to \textsf{FAST-CT}, \textsf{LH-Z}
preserves the $3/2$ approximation guarantee of the Cheriyan-Thurimella algorithm for $k=2$ and improves its running time from
$O(m^2)$ to $O(m\sqrt{n}+n^2)$, for a digraph with $n$ vertices and $m$ arcs.
In our implementation, the bipartite matching is computed via max-flow, using an implementation of the Goldberg-Tarjan push-relabel algorithm~\cite{GT:maxflow} from \cite{DGRW:GraphPartition}, which is very fast in practice. (This implementation was provided by the authors of \cite{DGRW:GraphPartition}.)

In Figure~\ref{fig:a2vcc} (top) we can see the running times of the four algorithms. 
(See also Table \ref{table3}.)
It is easy to observe that the algorithms belong to two distinct classes, with \textsf{FAST} and \textsf{LH-Z} being faster than the other two by approximately five orders of magnitude.
In the bottom part of Figure~\ref{fig:a2vcc} we can see the relative size of the smallest spanning subgraph computed by the four algorithms. In all of our experiments, the smallest subgraph was the one computed by our new hybrid algorithm \textsf{LH-Z-CT}. One the other hand, on average \textsf{LH-Z} is only twice as slow as \textsf{FAST} but improves the solution quality by more than $13\%$.
Summing up, if one wants a fast and good solution \textsf{LH-Z} is the right choice.

\begin{table}[]
\centering
\begin{scriptsize}
\begin{tabular}{|l|rr|rr|rr|rr|rr|}
\hline \multicolumn{1}{|c|}{\textbf{Graph}} & \multicolumn{2}{c|}{\textbf{Size}}                              & \multicolumn{2}{c|}{\textbf{FAST}}                             & \multicolumn{2}{c|}{\textbf{LH-Z}}                             & \multicolumn{2}{c|}{\textbf{LH-Z-CT}}                          & \multicolumn{2}{c|}{\textbf{FAST-CT}}                          \\
                          & \multicolumn{1}{l}{nodes} & \multicolumn{1}{l|}{edges} & \multicolumn{1}{l}{time} & \multicolumn{1}{l|}{edges} & \multicolumn{1}{l}{time} & \multicolumn{1}{l|}{edges} & \multicolumn{1}{l}{time} & \multicolumn{1}{l|}{edges} & \multicolumn{1}{l}{time} & \multicolumn{1}{l|}{edges} \\
\hline rome99                    & 2249                      & 6467                      & 0.003581                 & 5691                      & 0.004513                 & 5370                      & 2.099891                 & 4837                      & 5.042213                 & 5057                      \\
web-NotreDame$_1$         & 1409                      & 6856                      & 0.002697                 & 3796                      & 0.001211                 & 3356                      & 0.927683                 & 3029                      & 2.119423                 & 3153                      \\
web-NotreDame$_2$         & 1462                      & 7279                      & 0.002784                 & 3949                      & 0.005436                 & 3545                      & 0.999031                 & 3189                      & 2.262008                 & 3300                      \\
web-BerkStan$_1$          & 1106                      & 8206                      & 0.001753                 & 3423                      & 0.00393                  & 2795                      & 0.525704                 & 2433                      & 0.949573                 & 2440                      \\
twitter-higgs-retweet     & 1099                      & 9290                      & 0.000885                 & 3553                      & 0.005976                 & 3143                      & 1.087169                 & 2745                      & 1.153608                 & 2879                      \\
web-NotreDame$_3$         & 1416                      & 13226                     & 0.003182                 & 4687                      & 0.005515                 & 3990                      & 0.751014                 & 3560                      & 2.143551                 & 3768                      \\
web-BerkStan$_2$          & 4927                      & 28142                     & 0.009555                 & 13391                     & 0.018985                 & 12296                     & 11.223287                & 10646                     & 23.887951                & 11750                     \\
web-Stanford              & 5179                      & 129897                    & 0.022056                 & 17940                     & 0.031001                 & 14583                     & 22.141856                & 11556                     & 31.346059                & 12920                     \\
Amazon-302                & 55414                     & 241663                    & 0.11804                  & 164979                    & 0.164081                 & 141467                    & 2986.022813              & 123095                    & 3935.135495              & 132847                    \\
web-BerkStan$_3$          & 12795                     & 347465                    & 0.055141                 & 45111                     & 0.049862                 & 36328                     & 149.561649               & 29307                     & 203.913794               & 32989                     \\
web-Google                & 77480                     & 840829                    & 0.182113                 & 256055                    & 0.401427                 & 221327                    & 7668.066338              & 191616                    & 20207.08225              & 211529                    \\
WikiTalk                  & 49430                     & 1254898                   & 0.338172                 & 176081                    & 0.548573                 & 161128                    & 5883.002974              & 138030                    & 4770.705853              & 143958                    \\
Amazon-601                & 276049                    & 2461072                   & 0.977274                 & 932989                    & 1.607812                 & 744345                    & 140894.0119              & 612760                    & 298775.2092              & 688159\\
\hline                    
\end{tabular}
\end{scriptsize}
\caption{Running times and number of edges in the resulting $2$-vertex-connected spanning subgraph;  plots shown in Figure~\ref{fig:a2vcc} \label{table3}.}
\end{table}

\vspace{0.5cm}

\noindent{\bf Acknowledgement.} We would like to thank Bob Tarjan for valuable comments and suggestions.

\bibliographystyle{plain}

\end{document}